\let\oldReturn\Return
\renewcommand{\Return}{\State\oldReturn}
\newcommand{\para}[1]{\vskip 0.06in\noindent {\bf #1} }
\newcommand{\etal}{\textit{et al}. }
\theoremstyle{definition}
\newtheorem{lemma}{Lemma}
\newtheorem{definition}{Definition}
\newtheorem{example}{Example}
\newtheorem{property}{Property}
\DeclareMathOperator*{\argmin}{arg\,min}
\def\BibTeX{{\rm B\kern-.05em{\sc i\kern-.025em b}\kern-.08emT\kern-.1667em\lower.7ex\hbox{E}\kern-.125emX}}
\begin{document}

\fancyhead{}

\title{On VR Spatial Query for Dual Entangled Worlds}
\titlenote{\textbf{A shorter version of this paper has been accepted for publication in the 28th ACM International Conference on Information and Knowledge Management (CIKM 2019)}.}

\author{
Shao-Heng Ko\textsuperscript{1}, 
Ying-Chun Lin\textsuperscript{2},
Hsu-Chao Lai\textsuperscript{1}\textsuperscript{3},
Wang-Chien Lee\textsuperscript{4},
De-Nian Yang\textsuperscript{1}\textsuperscript{5}
}
\affiliation{\textsuperscript{1} Institute of Information Science, Academia Sinica, Taipei, Taiwan}
\affiliation{\textsuperscript{2} Department of Computer Science, Purdue University, West Lafayette, USA}
\affiliation{\textsuperscript{3} Department of Computer Science, National Chiao Tung University, Hsinchu, Taiwan}
\affiliation{\textsuperscript{4} Department of Computer Science and Engineering, The Pennsylvania State University, State College, USA}
\affiliation{\textsuperscript{5} Research Center for Information Technology Innovation, Academia Sinica, Taipei, Taiwan}
\affiliation{\textsuperscript{1} \{arsenefrog, hclai0806, dnyang\}@iis.sinica.edu.tw \quad \textsuperscript{2} lin915@purdue.edu \quad \textsuperscript{4} wlee@cse.psu.edu}

\begin{abstract}
With the rapid advent of Virtual Reality (VR) technology and virtual tour applications, there is a research need on spatial queries tailored for simultaneous movements in both the physical and virtual worlds. Traditional spatial queries, designed mainly for one world, do not consider the entangled dual worlds in VR. In this paper, we first investigate the fundamental shortest-path query in VR as the building block for spatial queries, aiming to avoid hitting boundaries and obstacles in the physical environment by leveraging Redirected Walking (RW) in Computer Graphics. Specifically, we first formulate \textit{Dual-world Redirected-walking Obstacle-free Path} (DROP) to find the minimum-distance path in the virtual world, which is constrained by the RW cost in the physical world to ensure immersive experience in VR. We prove DROP is NP-hard and design a fully polynomial-time approximation scheme, \textit{Dual Entangled World Navigation} (DEWN), by finding Minimum Immersion Loss Range (\textit{MIL Range}). Afterward, we show that the existing spatial query algorithms and index structures can leverage DEWN as a building block to support $k$NN and range queries in the dual worlds of VR. Experimental results and a user study with implementation in HTC VIVE manifest that DEWN outperforms the baselines with smoother RW operations in various VR scenarios.
\end{abstract}

\maketitle

\section{Introduction}
\label{sec:intro}

With the growing availability of Virtual Reality (VR) devices, innovative VR applications in virtual social, travel, and shopping domains have emerged. This technological trend of VR not only attracts business interests from prominent vendors such as Facebook and Alibaba\footnote{Facebook: \url{https://youtu.be/YuIgyKLPt3s}; Alibaba:\url{https://cnn.it/2GkXUDX}.} but also brings a new wave of research in the academia. While current research on VR mostly originated from Computer Graphics, Multimedia, and HCI, focusing on constructing vivid VR worlds \cite{MP16,AB16,SK16}, the needs for research and support from the database community are also imminent. 

Traditional research on spatial data management has contributed significantly to various applications in the physical world. For example, for mobile users on a journey, the information about the closest gas stations along a routing path can be found by spatial queries \cite{YL14}. These queries are also needed in the virtual worlds in VR applications where moving between point-of-interests (POIs) is a basic operation. For example, in VR campus touring\footnote{CampusTours: \url{https://campustours.com/}; UNSW 360: \url{https://ocul.us/2VBzGlC}.} and VR architecture/indoor navigation\footnote{IrisVR: \url{https://irisvr.com/}; VR for Architects: \url{https://bit.ly/2JlwiVq}.} applications, spatial queries can be issued to find POIs and guide users to move to them. However, in many VR applications where users move in both the virtual and physical worlds, the simple one-world setting may no longer sustain, rendering the aforementioned queries useless. To study this problem, we revisit a number of spatial queries widely used in many VR applications to develop new algorithms by considering factors in the dual entangled virtual and physical worlds.

Traditional VR applications adopt simple stand-and-play approaches, e.g., teleportation \cite{EB16}, which have users to stand still in the physical world and rely on handheld devices, e.g., joysticks, to move to the destination. However, unlike previous generation of VR Head Mound Displays (HMDs), which are tied to computers with cable wires, the new VR devices are either \textit{wireless}\footnote{HTC Vive Pro: \url{https://bit.ly/2AM0vUM}; DisplayLink XR: \url{https://bit.ly/2HdI2FJ}.} or \textit{standalone}\footnote{HTC Vive Focus: \url{https://bit.ly/2US4DwI}; Oculus Go: \url{https://www.oculus.com/go/}.} devices. As this new wave of technology unties VR devices from a fixed computer, \textit{mobile VR} \cite{ST17, LP18, SY18} and \textit{room-scale VR} \cite{LE17VR, LE18, ZY18} recently attract massive attention in HCI and Computer Graphics research communities, as they allow untethered walking\footnote{A number of demo videos on \textit{walking} with \textit{wireless} VR can be found at \url{https://bit.ly/2vWP9gG}, \url{https://bit.ly/2LIlgeT}, and \url{https://bit.ly/2HojNX3}.} in VR to improve user experience. Indeed, research \cite{DC09, JS16, RWsurvey} finds that stand-and-play approaches do not facilitate immersive experience intended in VR. On the contrary, \textit{walking} is able to bring benefits to the users' cognition in virtual environments (VEs) \cite{RR11}, because users can experience correct stimulations \cite{RWsurvey} in order to reduce the side-effect of motion sickness. To avoid hitting physical obstacles, various hardware and HCI solutions leveraging saccadic movement \cite{SQ18}, space partition \cite{MS18} and Galvanic vestibular stimulation \cite{SM16} are proposed recently.

Usually, users in VR applications are severely constrained \cite{LE17VR, LE18, WG18} by the small size and setting of physical space, e.g., living room, during exploration of massive VEs. As a result, if the movement in the virtual world is simply realized by a directly matched walk in the real world, users may easily get  hindered by boundaries of the small physical space.\footnote{See also \url{https://bit.ly/2YuKSgU} and \url{https://bit.ly/2Ebcfox} on this issue.} To address this issue, {\em Redirected Walking} (RW) \cite{SR01,RWsurvey,RO16,WG18} has been proposed to steer users away from physical boundaries and obstacles by slightly tailoring the walking direction and speed displayed in HMDs.\footnote{A series of demo videos elaborating Redirected Walking can be found at \url{https://bit.ly/2JGv8D8} and \url{https://bit.ly/2H6UCb4}.} For example, when a user intends to walk straightly in the virtual world, RW continuously adjusts the walking direction displayed in the HMD to guide the user walking along a curve in the physical world in a small room. 

It has been successfully demonstrated that the human visual-vestibular system does not conceive those minor differences if the RW operations (detailed later) are carefully controlled \cite{SS13,FM16,RWsurvey}, and RW provides the most immersive user experiences compared to joystick and teleportation-based locomotion techniques \cite{LE18,RWsurvey}. However, when a path in the virtual world (called {\em v-path}) is identified by directly employing the shortest-path query, the walking path in the physical world (called {\em p-path}) may involve many RW operations that may incur motion sickness \cite{FS10,SS13,FM16}, thereby deteriorating the user experience.

In this paper, therefore, we first formulate a new query, namely {\em Dual-world Redirected walking Obstacle-free Path} (DROP), to find the minimum-distance v-path from the current user location to the destination that is \textit{RW-realizable} by a corresponding obstacle-free p-path, bounded by a preset total cost on Redirected Walking (RW cost) to restrict the loss of immersive experience in VR. Specifically, given the current positions of the user, the layouts of both the virtual and physical worlds, and a destination in the virtual world, DROP finds a v-path and an RW-realized obstacle-free p-path such that (i) the length of v-path is minimized, and (ii) the total cost incurred by RW operations does not exceed a preset threshold. We introduce the notion of \textit{Minimum Immersion Loss} (MIL) to represent the RW cost for realizing a short walk in dual worlds.

\begin{figure}[t]
\centering
  \begin{subfigure}[b]{0.54\columnwidth}
    \includegraphics[width=\columnwidth]{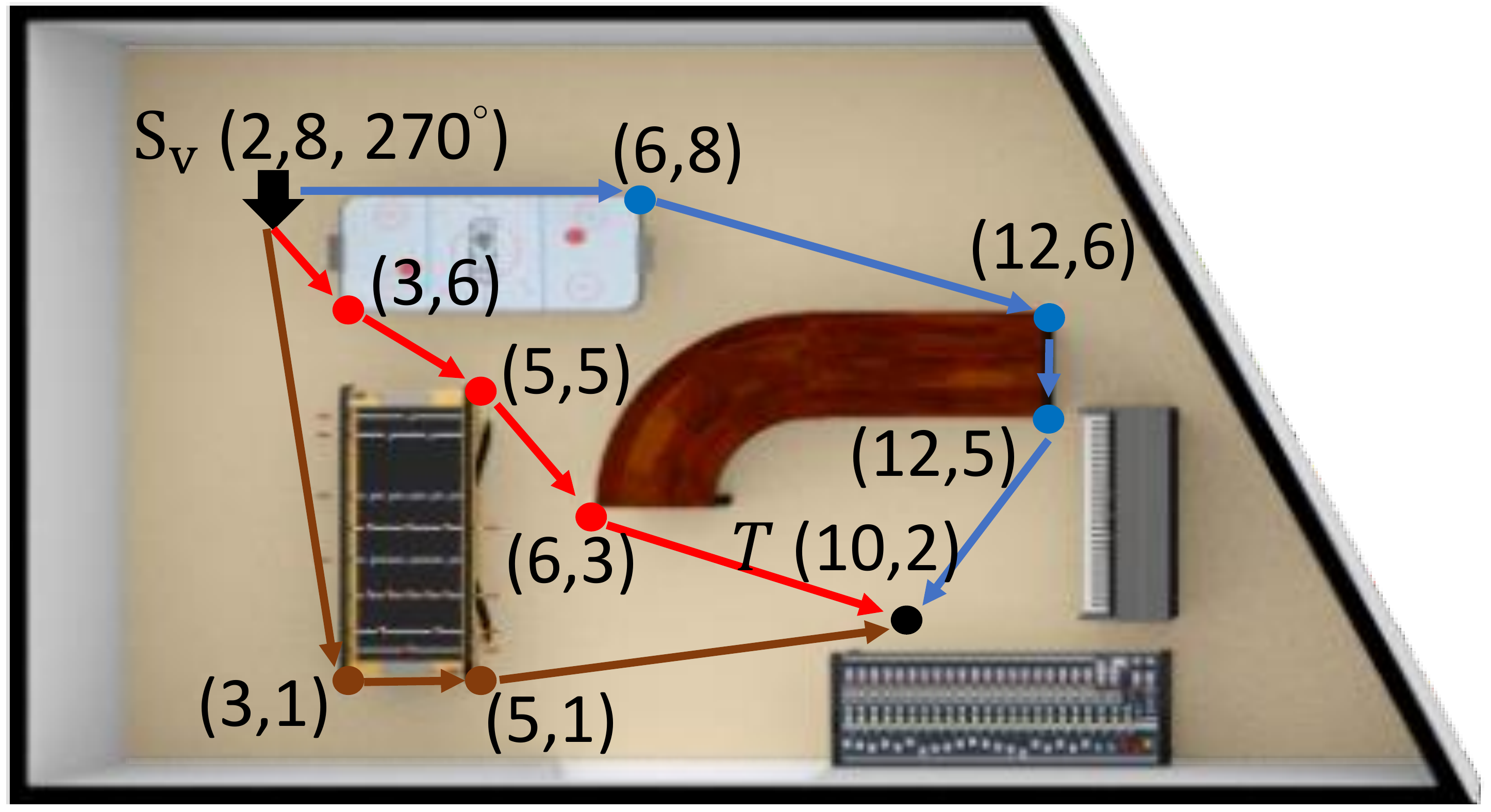}
    \caption{A virtual world.}
    \label{fig:example_v}
  \end{subfigure}
  \begin{subfigure}[b]{0.44\columnwidth}
    \includegraphics[width=\columnwidth]{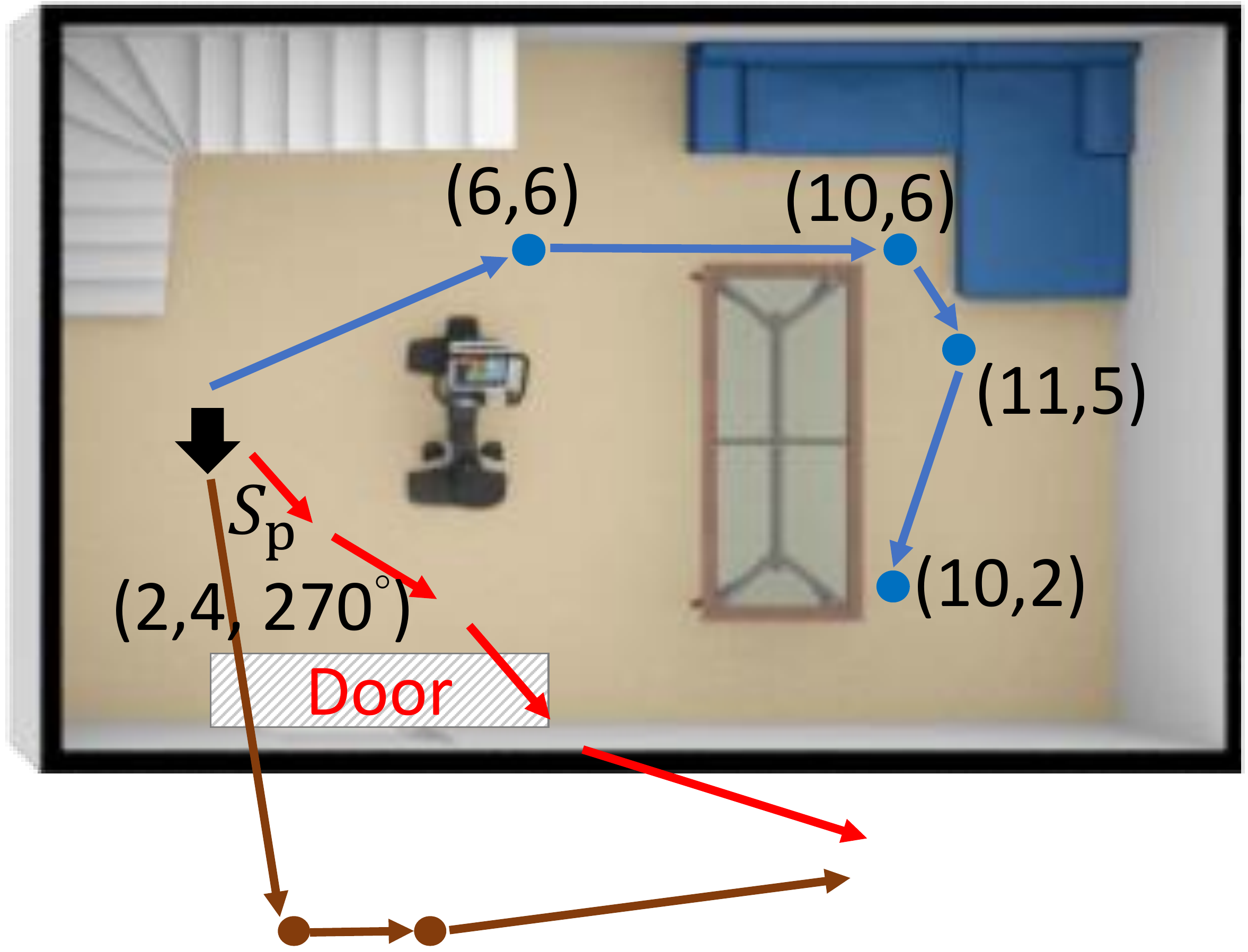}
    \caption{A physical world.}
    \label{fig:example_p}
  \end{subfigure}\\
\caption{An illustrative example for DROP.}
\label{DROPExample}
\end{figure}

\begin{example}(Motivating Example). Figure \ref{DROPExample} lays out an example of virtual and physical worlds to illustrate the notions of v-path and p-path. As shown, $S_{\text{v}}$ and $S_{\text{p}}$ denote the current locations of the user in both worlds, while the thick black arrows indicate the corresponding orientations, i.e., the user faces south in both worlds. The coordinates of some POIs are shown right beside them. The face direction is given (in degrees) for the starting state. Let $T$ be the destination in the virtual world and the preset RW cost threshold is small. In the virtual world, the shortest obstacle-free path, bypassing corners of the obstacles as indicated by the red solid line segments, has a total length of 10.83. However, this path is actually infeasible because the starting location in the physical world is too close to the wall and door area (see the corresponding infeasible p-path shown in red). Similarly, the brown path (which features a length of 14.17 in the virtual world) is not feasible. In contrast, the optimal path of DROP is the blue one with a total length of 14.93. This path, bypassing the upper part of virtual obstacles, incurs only minimal RW operations including a rotation at the beginning to avoid obstacles and prohibited areas in the physical world. \hfill \qedsymbol
\end{example} 

DROP, which actually returns not only the paths in the dual worlds but also the corresponding RW operations, is much more challenging than finding the shortest obstacle-free path in a single world. Some heuristics useful in geographic space, e.g., the triangular inequality, are not applicable here due to the obstacles appearing in both worlds. Moreover, traditional spatial index structures, e.g., R-Tree \cite{RTree}, M-Tree \cite{MTree}, and O-Tree \cite{HZ16} are designed for only one world instead of the entangled dual worlds, and thus do not handle the cost of RW operations. Finally, in a multi-user VR environment, the same path in the virtual world may be walked differently by users in their individual physical worlds. The RW operations carried out for the same virtual path are unlikely to be the same for different users and thus are not precomputable. Indeed, we prove DROP is NP-hard.

To solve DROP, we first present a dynamic programming algorithm, namely \textit{Basic DP}, as a baseline to find the optimal solution which unfortunately requires exponential time. Basic DP is computationally intensive due to the need of maintaining an exponentially large number of intermediate states to ensure the optimal solution. To address the efficiency issue while still ensuring the solution quality, we propose a Fully Polynomial-Time Approximation Scheme, namely \textit{Dual Entangled World Navigation (DEWN)}, to approach the optimal solution in polynomial time. The main idea of DEWN is to quickly obtain a promising feasible solution (called \textit{reference path}) in an early stage. Via the reference path, we explore novel pruning strategies to avoid redundant examinations of states that lead to excessive RW costs or long path lengths. 

However, finding a promising reference path directly from the entangled dual worlds is actually computationally intensive. To address this issue, we precompute the range of RW cost, termed as \textit{Minimum Immersion Loss Range (MIL Range)}, which consists of an \textit{MIL lower bound} and an \textit{MIL upper bound}, for a possible straight-line walk between two POIs in the virtual world. With MIL Ranges for potential path segments in the virtual world, we jointly minimize the weighted sum of v-path length and RW cost by \textit{Lagrangian relaxation} (LR). Accordingly, we derive the optimal weight (i.e., the Lagrange multiplier) to ensure both the feasibility and quality of the reference path. Equipped with DEWN as a building block, we then show that existing spatial query algorithms and index structures can support the counterparts of $k$NN and range queries in VR. The contributions of this work are summarized as follows:
\begin{itemize}
    \item \sloppy We redefine a new shortest path query, namely {\em Dual-world Redirected-walking Obstacle-free Path} (DROP), tailored for the dual entangled obstructed spaces in VR applications. We introduce a novel notion of \text{MIL Range} that captures the possible range of Redirected Walking cost in state transitions of movements and prove DROP is NP-hard. 
    \item We first tackle DROP by dynamic programming and then design an online query algorithm, DEWN, which exploits efficient ordering and pruning strategies to improve computational efficiency significantly. We prove that DEWN is a Fully Polynomial-Time Approximation Scheme for DROP.
    \item We show that existing spatial query algorithms and index structures can leverage DEWN as a building block to support $k$NN and range queries in VR.
    \item We perform experiments on real datasets and conduct a user study to evaluate the proposed algorithms with various baselines. Experimental results show that DEWN outperforms the baselines in both solution quality and efficiency.
    
\end{itemize}

This paper is organized as follows. Section \ref{sec:related} reviews the related work. Section \ref{sec:problem} introduces the preliminaries and formulates DROP. Section \ref{sec:algo} details DEWN and provides a theoretical analysis. Section \ref{sec:extensions} proposes an enhancement for DROP and extends our ideas for spatial queries. Section \ref{sec:exp} reports the experimental results, and Section \ref{sec:conclusions} concludes this paper.

\section{Related Work}
\label{sec:related}

\para{Shortest Path Query.}
Exact \cite{TA13}, top-$k$ \cite{TA15}, approximate \cite{PM09, MQ14}, constrained \cite{SW16, YMN15}, and adaptive \cite{Adapt11, MSH16} shortest path queries have been studied extensively in the database community. Akiba \etal \cite{TA13} precompute shortest path distances by breadth-first search and store the distances on the vertices. To improve efficiency, a query-dependent local landmark scheme \cite{MQ14} is proposed to provide a more accurate solution than the global landmark approach \cite{PM09} by identifying a landmark close to both query nodes and leveraging the triangular inequality. In continental road networks with length and cost metrics, COLA \cite{SW16} utilizes graph partition to minimize the path length within a cost constraint. Hassan \etal \cite{MSH16} find the adaptive type-specific shortest paths in dynamic graphs with edge types. Nevertheless, the above research is designed for \textit{one network} (i.e., \textit{one world}). None of the existing works incorporates the cost, e.g., Redirected Walking, in dual worlds of different layouts.

\para{Spatial Query.}
Spatial database is a major research area in the database research community \cite{spatial}. Queries on spatial network databases, including range search, nearest neighbors, e-distance joins, and closest pairs \cite{PD03}, have attracted extensive research interests. In recent years, considering the presence of obstacles, the obstructed version of various spatial queries are revisited \cite{JZ04}. Sultana \etal \cite{NS14} study the obstructed group nearest neighbor (OGNN) query to find a rally point with the minimum aggregated distance. Range-based obstructed nearest neighbor search \cite{HZ16} extracts the nearest neighbors within a range for obstructed sequenced routes, where the route distance is minimized \cite{AA17}. However, the above algorithms are designed for one world, instead of the entangled dual worlds, where the physical worlds of users are different from each other. As a result, these existing works are not applicable to the dual world spatial queries tackled in this paper. 

\para{Walking in Virtual Reality.}
To move in the virtual space, Point-and-Teleport \cite{EB16} allows a user to point at and then transport to a target location, but the experience is not immersive due to the abrupt scene change and loss of sense in time \cite{DC09}. Research shows that real walking is more immersive than Point-and-Teleport \cite{MU99}. Redirected Walking (RW) \cite{RWsurvey} exploits the inability of the human vestibular system to detect a subtle difference (in the walking speed and direction) between movements in the dual worlds. It has been demonstrated that RW can support free walking in a large virtual space for a relatively small physical space \cite{RWsurvey, CN12}, and the degradation in user immersion can be quantitatively measured from the acoustic and visual perspectives \cite{FS10,SS13,FM16}. Detailed implementation and performance evaluation of RW have been studied in \cite{EH14} and \cite{RWsurvey}. Recent evaluation \cite{LE18} demonstrates that RW provides the most preferable user experience than joystick-based and teleportation-based systems. However, most existing works on RW focus on creating immersive experience but do not provide systematic approaches for query processing in dual worlds, which inspires our study in this work.

\section{Problem Formulation}
\label{sec:problem}

In this section, we first provide background on the Visibility Graph and Redirected Walking operations. Then we formulate the DROP problem and prove that DROP is NP-hard.

\subsection{Preliminaries} \label{subsec:prelim}

\para{Visibility Graph.} 
The notion of Visibility Graph (VG), widely used in computational geometries and obstructed spatial query processing \cite{HZ16, AA17, MM18}, models obstacles as polygons and regards their corners as VG nodes. Those corners are important as they are usually the turning points for shortest paths in an obstructed space. In VG, two nodes are connected by a weighted edge if and only if there exists a straight line segment between them without crossing any obstacle \cite{NS14, HZ16}. In this paper, we exploit VG to define the DROP problem on dual worlds for the following reasons: 1) VG preserves the unobstructed shortest paths in the obstructed spatial space \cite{JZ04, NS14}, simplifying the distance computation and reducing the computational complexity in processing obstructed spatial queries. 2) Representing the virtual world in VG ensures natural movements of users since the obtained v-paths avoid zigzagging patterns. 3) Whereas DROP depends on both worlds due to the RW operations, VG for both worlds can be constructed separately \cite{EM04, HH04}. While existing works on obstructed spatial queries most consider only corners of obstacles in VG, we also extend VG to include all POIs in the virtual world. We refer the interested readers to \cite{VGBook} for more background on VGs.

\begin{figure}[t]
\centering
\includegraphics[width=0.5\columnwidth]{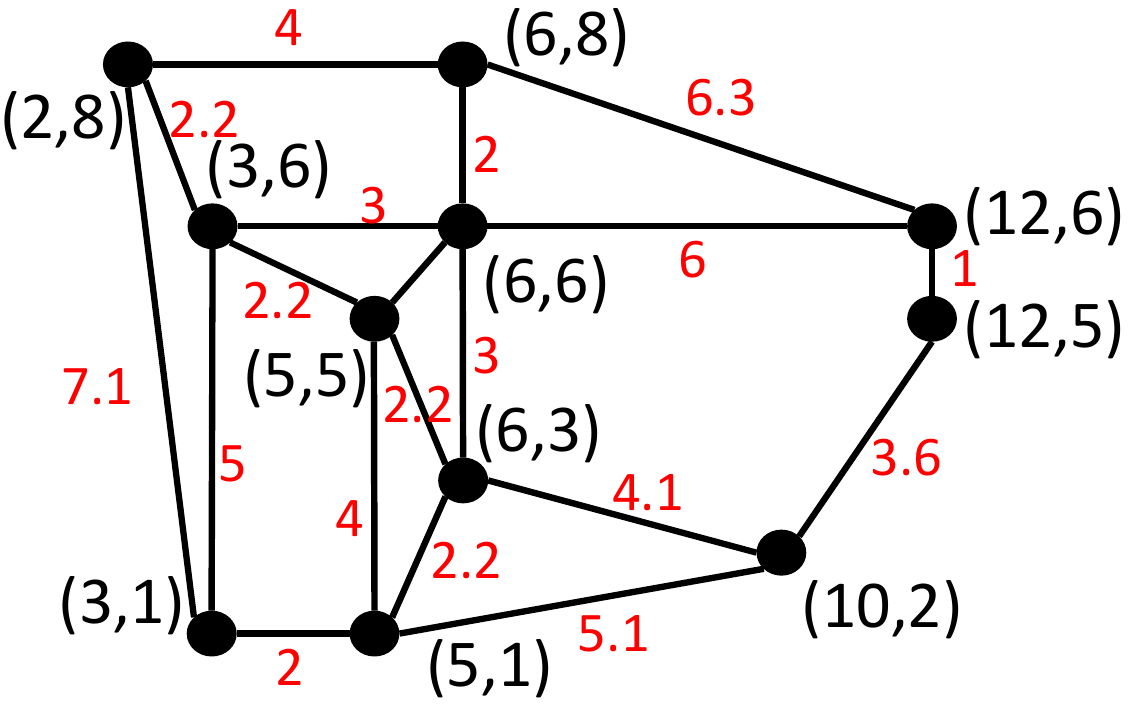}
\caption{VG of the virtual world in Example \ref{DROPExample}.}
\label{fig:VG}
\end{figure}

\begin{example}
Figure \ref{fig:VG} illustrates the VG constructed from the virtual world in Example \ref{DROPExample}, where the nodes represent virtual locations of interests (POIs and obstacle corners) in the application, and the edges (called {\em v-edges}) denote straight-line moving paths between two virtual locations\footnote{We omit a few of VG nodes for brevity and handiness to continue using it as the running example.}. For example, the virtual location $(2,8)$ is a POI (the start location), while $(6,8)$ represents the upper right corner of the white gameboard in Example \ref{DROPExample}. The v-edge between them represents a move along the upper side of the gameboard which has a length of 4 (shown in red). \hfill \qedsymbol
\end{example} 

\para{Redirected Walking Operations.} Redirected Walking (RW) \cite{SR01} introduces minor differences in the walking speed and turning angle to adapt the perception of walking in the dual worlds. Basic RW operations include \textit{Translation} (TO) \cite{WG18}, \textit{Rotation} (RO) \cite{FM16, RO16}, and \textit{Curvature} (CO) \cite{CN12,LE17}. TO introduces a slight scaling factor between the walking speed in the virtual world and the actual walking speed in the physical world. Thus, the distances in the dual worlds are different after a user walks for a period of time. Similarly, RO tailors the mapping between the rotation angular velocity in the virtual world to that in the physical world. When a user intends to move straightly in the virtual world, CO lets the user traverse a slightly bending curve\footnote{\url{https://youtu.be/THk92rev1VA}.} to avoid obstacles in the physical world. 
However, when a user is very close to obstacles and not able to escape from them with the above operations, a \textit{Reset} operation \cite{BW07} may be issued to specifically ask the user to rotate her body to face a different direction in the physical world, whereas the virtual world is \textit{suspended} (remaining the same).\footnote{\url{https://youtu.be/gD1qa0edVA8}.} Note that Reset incurs significantly higher disturbance for users \cite{RWsurvey} and thus introduces a much larger RW cost. An RW cost model of different operations can be constructed based on the usage count or other measures of user experience, e.g., detection thresholds in \cite{SS13,FM16} or immersion thresholds in \cite{PS18}. For example, according to \cite{SS13}, a TO that down-scales the walking distance by 40\% has a roughly 90\% chance to be detected by the users. Thus, applying a TO of such magnitude may incur an RW cost of 0.9 in a detection threshold-based cost model. In Appendix \ref{appen:rw_def}, we provide some definitions of the basic RW operations, as well as briefly discuss some possible RW cost models. For a complete survey on RW, we refer the interested readers to \cite{RWsurvey}.

Given a user's current location and orientation in both worlds (formally introduced later as the \textit{loco-state}), the possible combinations of RW operations to pilot the user to a target loco-state is bounded due to the finite operations.\footnote{For instance, in Example \ref{DROPExample}, to guide the user from the start locations ($S_{\text{v}}$ and $S_{\text{p}}$, in the virtual and physical world, respectively) to the next locations on the blue paths, i.e., (6,8) in the virtual world and (6,6) in the physical world, one possible configuration of RW operations is to first perform an RO that down-scales the rotation angular velocity by 25.0\% to re-orient the user to face the targeted locations, then followed by a TO, which down-scales the walking speed by 10.6\% in the virtual world, to align the walking distances in the dual worlds. Another feasible configuration is a Reset and then a TO, which incurs a larger RW cost since Reset severely downgrades the user experience.} It is also more efficient for the user to move along straight line segments in the VG. Therefore, in this paper, a near-shortest path between two locations with the smallest \textit{RW cost} (i.e., minimum degradation of user experience) can be precomputed by exploring different RW operation sequences. This RW cost is coined as the \textit{Minimum Immersion Loss} (MIL) between the two loco-states. Note that MIL represents the RW costs on small segments of movements. It is independent of the start and destination locations in DROP and thus can be precomputed offline. 

\subsection{Problem Formulation} \label{subsec:prob}

In the following, we introduce the notations used to formulate DROP. We use VG graphs for both virtual and physical worlds to abstract unobstructed movements of users. We also summarize the notations in Tables \ref{tab:symbols_algo1} and \ref{tab:symbols_algo2}.

\begin{definition}
\textit{Location Sets} ($\Gamma^\text{v}, \Gamma^{\text{p}}$). The \textit{virtual location set} $\Gamma^{\text{v}}$ contains all virtual locations $\gamma^{\text{v}} \in \Gamma^{\text{v}}$ corresponding to a VG node in the virtual world. Similarly, the \textit{physical location set} $\Gamma^{\text{p}}$ includes all locations in the physical world, where each \textit{physical location} $\gamma^{\text{p}} \in \Gamma^{\text{p}}$ represents either an unoccupied location or an obstacle in a coarse-grained coordinate of the physical world.\footnote{As the position tracking accuracy in mainstream VR devices varies \cite{MK17}, representing a physical world by a coarse-grid or mesh-based \cite{KM10} graph structure leaves room for errors and may be more suitable than a fine-grained coordinate system.}
\end{definition}

\begin{definition}
\textit{Virtual Graph} ($G^{\text{v}}$) and \textit{Physical Graph} ($G^{\text{p}}$). The virtual graph (\textit{v-graph}) $G^{\text{v}}$ consists of the vertex set $\Gamma^{\text{v}}$ and undirected edge set $E^{\text{v}} : \Gamma^{\text{v}} \rightarrow \Gamma^{\text{v}}$, where a virtual edge (\textit{v-edge}) $e^{\text{v}}$ connects unobstructed virtual locations with a cut-off distance threshold $\ell_{\text{max}}$ \cite{EM04, HH04}. Each v-edge $e^{\text{v}}$ is associated with a positive length $\text{l}(e^{\text{v}})$ that denotes the \textit{Euclidean distance} between the two endpoints in the virtual world. The physical graph (\textit{p-graph}) $G^{\text{p}}$ and the edge set $E^{\text{p}}$ are defined analogously. 
\end{definition}

To determine the appropriate v-path and the corresponding sequence of RW operations, the user's face orientation needs to be considered. In the following, we formally introduce the notion of \textit{loco-state}, which describes the user status in both worlds.

\begin{definition}
\textit{Virtual State} ($st^{\text{v}}$) and \textit{Physical State} ($st^{\text{p}}$). A \textit{v-state} $st^{\text{v}}$ is a tuple $(\gamma^{\text{v}}, \theta^{\text{v}})$ while $\gamma^{\text{v}}$ is the current user location in the virtual world, and $\theta^{\text{v}}$ is her face orientation. The \textit{p-state} $st^{\text{p}} = (\gamma^{\text{p}}, \theta^{\text{p}})$ is defined similarly in the physical world, and $\theta^{\text{v}}, \theta^{\text{p}} \in \Theta$, which is the \textit{Orientation Set} consisting of all legal face directions.
\end{definition}

\begin{definition}
\textit{Locomotion State ($st$) and Loco-state Space ($ST$)}. A loco-state $st = (st^{\text{v}}, st^{\text{p}})$ describes the current user status. The \textit{Euclidean distance} $\text{dist}(st_1, st_2)$ between two loco-states is the straight-line distance between their \textit{virtual} locations. Two loco-states $st_1$ and $st_2$ are \textit{neighboring} if there exists a v-edge $e^{\text{v}}$ between their virtual locations $\gamma^{\text{v}}_1$ and $\gamma^{\text{v}}_2$ with the v-edge length $\text{l}(e^{\text{v}}) = \text{dist}(st_1,st_2)$. The loco-state space $ST$ contains all possible loco-states.
\end{definition}

\begin{example}
In Example \ref{DROPExample}, the starting v-state for the user, denoted as $st^{\text{v}}_{\text{s}}$, is $((2,8), 270^{\circ})$, and the starting p-state is $st^{\text{p}}_{\text{s}} = ((2,4), 270^{\circ})$. The starting loco-state is then $st_{\text{s}} = (((2,8), 270^{\circ}), ((2,4), 270^{\circ}))$. \hfill \qedsymbol
\end{example}

Equipped with the notion of loco-state, user movements in the dual worlds can be regarded as sequences of state transitions between neighboring loco-states. The possible combinations of RW operations to pilot the user to a target loco-state is bounded due to the finite operations. Therefore, a configuration with the smallest \textit{RW cost} (i.e., minimum degradation of user experience) can be precomputed by exploring different RW operation sequences. This RW cost is coined as the \textit{Minimum Immersion Loss} (MIL) between the two loco-states. Note that MIL represents the RW costs on small segments of movements. It is independent of the start and destination locations in DROP and thus can be precomputed offline. It is also generic, i.e., supporting any cost model of RW operations.

\begin{definition}{\textit{Minimum Immersion Loss (MIL).} }\\
$\text{MIL}(st_1, st_2)$ represents the smallest RW cost achievable (i.e., realizable by a set of RW operations) for a VR user to move from a loco-state $st_1$ to a neighboring loco-state $st_2$ with a sequence of RW operations.
\end{definition}

\noindent Next, we introduce \textit{RW path} to describe the RW-realizable v-path and the corresponding RW-realized p-path.

\begin{definition}{ \textit{Redirected Walking Path (RW path).} }
\sloppy An RW path $p = \langle st_1, st_2, \cdots, st_n \rangle $ is a sequence of loco-states, including a v-path $p^\text{v} = \langle st^{\text{v}}_1, st^{\text{v}}_2,$ $\cdots, st^{\text{v}}_n \rangle $ with \textit{v-path length} $\text{l}(p)= \text{l}(p^\text{v}) = \sum_{i=1}^{n-1} \text{l} \big( (\gamma^{\text{v}}_i, \gamma^{\text{v}}_{i+1}) \big)$, and a p-path $p^\text{p} = \langle st^{\text{p}}_1, st^{\text{p}}_2, \cdots, st^{\text{p}}_n \rangle $ with the incurred \textit{RW cost} to realize $p^\text{v}$ with $p^\text{p}$ as $\text{c}(p) = \sum_{i=1}^{n-1} \text{MIL}(st_i, st_{i+1})$. 
\end{definition}

\begin{example}
In Example \ref{DROPExample}, the two blue paths in the dual worlds combine for an RW path $p = \langle st_{\text{1}} = st_{\text{s}} = (((2,8), 270^{\circ}), ((2,4), 270^{\circ}))$, $st_{\text{2}} = (((6,8)$, $0^{\circ})$, $((6,6), 30^{\circ}))$, $st_{\text{3}} = (((12,6), 330^{\circ})$, $((10,6), 0^{\circ}))$, $st_{\text{4}} = (((12,5)$, $270^{\circ})$, $((11,5), 315^{\circ}))$, $st_{\text{5}} = (((10,2), 225^{\circ})$, $((10,2)$, $240^{\circ}))$ $\rangle$. The lengths of the corresponding v-edges are respectively $\text{l} \big( ((2,8),(6,8)) \big) = 4$, $\text{l} \big( ((6,8),(12,6)) \big) = 6.32$, $\text{l} \big( ((12,6),(12,5)) \big) = 1$, and $\text{l} \big( ((12,5),(10,2)) \big) = 3.61$. Thus, the total v-path length is $4+6.32+1+3.61=14.93$. Assume the MIL values between the loco-states are $\text{MIL}(st_1, st_2) = 0.17$, $\text{MIL}(st_2, st_3) = 1$, $\text{MIL}(st_3, st_4) = 1.18$, and $\text{MIL}(st_4, st_5) = 1$ (these values are derived via a detection threshold-based cost model). The total RW cost along $p$ is then $0.17+1+1.18+1 = 3.35.$ \hfill \qedsymbol
\end{example}

\noindent 
Note that $\text{dist}(st_1, st_2)$ is the straight-line distance between their virtual locations. However, the \textit{v-path length} $\text{l}(p)$ of some RW path $p$ from $st_1$ to $st_2$ may not be the same as the \textit{Euclidean distance} $\text{dist}(st_1, st_2)$ or the \textit{obstructed distance} \cite{NS14} between $st_1$ and $st_2$ in the virtual world. For instance, in the above example, the v-path length $p^\text{v}$ is 14.93, while the Euclidean distance between $(2,8)$ and $(10,2)$ is 10.0, and the obstructed shortest distance is 10.83. We formulate DROP as follows. 

\vskip 0.06in
\hrule
\vskip 0.03in

\noindent \textbf{Problem: Dual-world RW Obstacle-free Path (DROP).}

\noindent \textbf{Given:} 
Loco-state space $ST$, MIL cost $\text{MIL}(\cdot,\cdot)$ between neighboring loco-states, start loco-state $st_{\text{s}}$, destination location $\gamma^{\text{v}}_{\text{t}} \in \Gamma^{\text{v}}$, and RW cost constraint $C$.

\noindent \textbf{Find:} An RW path $p^\ast$ from $st_{\text{s}}$ to $\gamma^{\text{v}}_{\text{t}}$ with $\text{c}(p^\ast) \leq C$ such that $\text{l}(p^\ast)$ is minimized.

\vskip 0.03in
\hrule
\vskip 0.06in

\noindent Note that $ST$ depends on $G^{\text{v}}, G^{\text{p}}$ and the orientation set $\Theta$. Moreover, $p^\ast$ may end at any feasible loco-state associated with $\gamma^{\text{v}}_{\text{t}}$. In the following, we prove that DROP is NP-hard.

\begin{theorem} \label{thm:nphard}
DROP is NP-hard.
\end{theorem}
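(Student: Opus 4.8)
The plan is to prove NP-hardness by a polynomial-time reduction from a known NP-complete problem. The natural candidate is the \emph{constrained shortest path} (or \emph{restricted shortest path}) problem: given a graph with two nonnegative weights per edge (a \emph{length} and a \emph{cost}), a source, a target, and a budget $C$, decide whether there is a path whose total cost is at most $C$ and whose total length is minimized (equivalently, at most some bound $L$). This problem is well-known to be NP-hard even on directed acyclic graphs, and its structure matches DROP almost exactly: the v-path length $\text{l}(p)$ plays the role of the length objective, the accumulated $\text{MIL}$ values play the role of the cost, and $C$ is the budget. The two-weight bi-criteria structure is the essential source of hardness, so I would try to encode an arbitrary constrained-shortest-path instance directly into a loco-state space.

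The main steps would be as follows. First I would fix a constrained-shortest-path instance $(H, s, t, \{\text{len}(e)\}, \{\text{cost}(e)\}, C)$ and build a virtual world whose visibility graph $G^{\text{v}}$ realizes $H$, placing the POIs so that the Euclidean v-edge lengths equal the prescribed $\text{len}(e)$ (one must check this is geometrically achievable, e.g.\ by a suitable embedding with obstacles forcing exactly the edges of $H$). Second, I would design the physical world $G^{\text{p}}$ and the orientation set $\Theta$ so that, for each v-edge $e$, the corresponding $\text{MIL}$ value over the matching loco-state transition is exactly $\text{cost}(e)$; since the problem statement says DROP takes $\text{MIL}(\cdot,\cdot)$ as given input and is \emph{generic} over any RW cost model, the cleanest route is to argue that the MIL values can be set to arbitrary prescribed nonnegative numbers, making the reduction a direct gadget assignment rather than a delicate geometric computation. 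Third, I would show the two instances are equivalent: an RW path from $st_{\text{s}}$ to $\gamma^{\text{v}}_{\text{t}}$ with $\text{c}(p)\le C$ minimizing $\text{l}(p)$ corresponds bijectively to a path in $H$ from $s$ to $t$ with cost at most $C$ minimizing length, and conversely. Finally I would observe that the reduction is polynomial in the size of $H$, completing the argument that DROP is at least as hard as constrained shortest path, hence NP-hard.

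The hard part will be the second step: justifying that the $\text{MIL}$ values can be made to equal prescribed costs. Because MIL is defined as the \emph{minimum} RW cost realizable by some sequence of physical RW operations, I cannot simply assert arbitrary values; I must exhibit a concrete physical layout and orientation set in which the minimizing operation sequence for each relevant transition yields precisely the target cost, while simultaneously ensuring that \emph{non-matching} transitions (e.g.\ those that would let the path ``cheat'' by reaching $\gamma^{\text{v}}_{\text{t}}$ through unintended loco-states) are either infeasible or strictly dominated. One way to control this is to give each virtual location a unique, tightly constrained set of admissible orientations, so that the loco-state associated with each VG node is essentially forced; then each v-edge induces a single candidate transition whose MIL I can pin down. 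If fully controlling the geometry proves too intricate, a cleaner alternative is to invoke the stated genericity of the cost model and treat the MIL function itself as the input to be constructed, which sidesteps the geometric realizability entirely and reduces the whole problem to encoding the edge costs. I would pursue the genericity route first and fall back on an explicit physical gadget only if a reviewer demands a self-contained construction.

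Two subsidiary checks would round out the proof. I would confirm membership of the decision version in NP is not required for NP-hardness but would note that a witnessing RW path of polynomial length certifies feasibility in polynomial time, so DROP is in fact NP-complete in its decision form. I would also verify that the orientation and obstacle bookkeeping needed to enforce the node-to-loco-state correspondence blows up only polynomially, so that the reduction remains efficient.
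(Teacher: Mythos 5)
Your proposal is correct, and it takes a genuinely different (though closely related) route from the paper. The paper reduces directly from 0--1 Knapsack: it builds a chain of virtual locations $a_0,\dots,a_n$ in which each consecutive pair is joined both by a direct edge of length $V+2$ with $\text{MIL}=0$ and by a detour through an auxiliary node $b_{i+1}$ of total length $V-v_{i+1}+2$ with $\text{MIL}=w_{i+1}$, sets all other MIL values to $2W$, takes the p-graph identical to the v-graph, and sets $C=W$; taking the detour corresponds to selecting item $i+1$, and a Knapsack solution of value $v^\ast$ and weight $w^\ast$ maps to a v-path of length $(V+2)(n-1)-v^\ast$ and RW cost $w^\ast$. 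You instead reduce from the constrained (restricted) shortest path problem, observing that DROP, with $\text{MIL}(\cdot,\cdot)$ supplied as part of the input, contains bi-criteria shortest path essentially as a special case. The two arguments are intimately related: the classical proof that constrained shortest path is NP-hard uses exactly the Knapsack chain gadget the paper constructs, so the paper's proof is in effect your proof with the external hardness result inlined. Your route buys modularity and brevity; the paper's buys self-containedness. Crucially, both arguments rest on the same license that you explicitly worry about and the paper silently assumes: edge lengths and MIL values are assigned freely, without verifying they can arise from an actual geometry, physical layout, and RW cost model. In fact the paper's own construction cannot be Euclidean-realized, since its direct edge (length $V+2$) is \emph{longer} than the two-hop detour (length $V-v_{i+1}+2$), violating the triangle inequality that Euclidean v-edge lengths must satisfy; the standard repair (make the \emph{shorter} route the costly one) works for either reduction. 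One caution on your fallback: invoking genericity of the cost model legitimately settles the assignment of MIL values, but it does not by itself settle realizability of the prescribed \emph{lengths}, so that check cannot be sidestepped entirely --- though, measured against the paper's own level of rigor on this point, your proposal is at least as careful.
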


\begin{proof}
We prove this theorem with a reduction from the NP-hard 0-1 Knapsack problem (KP) \cite{Knapsack90}. Given a set of $n$ items with weights $w_1, w_2, ... w_n$, values $v_1, v_2, ... v_n$, and a capacity limit $W$, KP maximizes the total value of the selected items such that the total weight does not exceed $W$. Given a KP instance with $V = \max_i v_i$ as the maximum value, we first create a source $a_0$ and then add two virtual locations $a_i$ and $b_i$ in DROP corresponding to each item $i$ in KP, whereas the destination is $a_n$. For each element $i \leq n-1$ in KP, we construct three edges in DROP: 1)  $e^1_i = (a_i, a_{i+1})$ with length $V + 2$, 2) $e^2_i = (a_i, b_{i+1})$ with length $V-v_{i+1} + 1$, and 3) $e^3_i = (b_{i+1}, a_{i+1})$ with length 1. The p-graph is identical to the v-graph in DROP, and $\text{MIL}(st_1, st_2)$ are set as follows.
\begin{itemize}
    \item $w_{i+1}$, if the transition corresponds to $e^2_i$ for some $i$, i.e., $st_1$ and $st_2$ are $a_i$ and $b_{i+1}$, respectively;
    \item $0$, if the transition corresponds to $e^1_i$ or $e^3_i$ for some $i$;
    \item $2W$, otherwise.
\end{itemize}

The RW constraint $C$ in DROP is identical to $W$ in KP, and $\ell_w = \infty$. Any feasible solution of DROP includes a v-path and a p-path with every $a_i$ and $a_{i+1}$ either 1) connected by a direct edge $e^1_i$ or 2) connected via $b_{i+1}$, i.e., via $e^2_i$ and $e^3_i$, with an RW cost $w_{i+1}$. The above two cases correspond to dropping and selecting item $i+1$ in KP, respectively. The former contributes $V + 2$ to the total v-path length, while the latter contributes $V-v_{i+1} + 2$, or $v_{i+1}$ less than the former. Therefore, any feasible solution in the KP instance with a total value of $v^\ast$ and a total weight of $w^\ast$ is one-to-one correspondent to one feasible solution in DROP with a v-path of length $(V+2) \cdot (n-1) - v^\ast$ and a total RW cost of $w^\ast$ in the DROP instance. The theorem follows.
\end{proof}

\section{Basic dynamic programming algorithm}
\label{sec:dp}

A simple approach for DROP is to first find the shortest v-path in the virtual world via state-of-the-art approaches \cite{TA13, MQ14}, then try to follow the v-path until approaching an obstacle in the physical world, and then adapt by Reset. As this approach does not carefully examine the entangled dual worlds, the solutions are not always feasible, as illustrated in Example \ref{DROPExample}. 

In this section, therefore, we propose a basic dynamic programming algorithm, \textit{Basic DP}, as a baseline to find the optimal solution of DROP. Basic DP cautiously derives the feasible solutions with short lengths by examining the space of \textit{Dynamic Programming States (DP States)} which is defined as follows. For every valid loco-state $st \in ST$ and every possible v-path length $l$, Basic DP creates a DP state $(st, l)$ where $l$ represents the v-path length from source $st_{\text{s}}$ to $st$. Let \textit{DP cost} $\text{c}(st, l)$ represent the minimum RW path cost for $(st, l)$. We construct a \textit{transition edge} from a DP state $(st_1, l)$ to another DP state $(st_2, l+\text{l}(st_1, st_2))$ with a transition cost $\text{MIL}(st_1, st_2)$. Let $\text{N}(st)$ be the set of loco-states neighboring to $st$. We derive $\text{c}(st, l)$ as follows.

\begin{align}
\label{eq:dprelation}
\text{c}(st, l) = \min\limits_{st' \in \text{N}(st)}\text{c}(st', l - \text{l}(st', st)) + \text{MIL}(st', st)
\end{align}

Equation \eqref{eq:dprelation} captures the fact that any RW path should arrive at $st$ via a transition edge from some other neighboring loco-state $st'$. Equipped with Equation \eqref{eq:dprelation}, the DP costs for all DP states can be iteratively derived from DP states with smaller $l$ values to larger ones. Therefore, any DP state $(st, l)$ with $\text{c}(st, l) \leq C$ corresponds to a feasible RW path from $st_{\text{s}}$ to $st$. Let $D$ denote the set of all destination DP states, i.e., $D = \{ (st, l): \gamma^{\text{v}} = \gamma^{\text{v}}_t \}$. The objective of DROP is equivalent to finding $\min\limits_{\text{c}(st,l) \leq C, (st,l) \in D}l$, and the RW path can be generated by backtracking from the destination toward $st_{\text{s}}$. Different from single-world algorithms, Basic DP carefully examines the entangled dual worlds and MIL values to find the optimal solution of DROP in $O(N^2 \cdot 2^{|E^{\text{v}}|})$-time. Below, we prove the optimality and analyze the time complexity of Basic DP. The pseudocode of Basic DP is given in Algorithm \ref{alg:basicdp}.

\begin{algorithm}[t]  
    \caption{Basic Dynamic Programming Algorithm}
    \label{alg:basicdp}
    \begin{algorithmic}[1]
        \Require 
        $ST, st_{\text{s}}, \gamma^{\text{v}}_{\text{t}}, \text{MIL}(\cdot), C$
        \Ensure 
        $p^\ast$: optimal solution for DROP
        \State Construct the set of possible v-path lengths $\mathbb{L}$
        \State Construct the DP space $X^{\text{DP}}$ with $ST, \mathbb{L}$
        \For{$(st, l) \in X^{\text{DP}}$}
          \State $\text{c}(st, l) \gets \infty$
        \EndFor
        \State $\text{c}(st_{\text{s}}, 0) \gets 0$
        \For{$l \in \mathbb{L}$}
          \For{$st \in ST$}
            \For{$st' \in \text{N}(st)$}
              \If{$\text{c}(st',l-\text{l}(st,st'))+\text{MIL}(st,st') < \text{c}(st,l)$}
                 \State $\text{c}(st,l) \gets \text{c}(st',l-\text{l}(st,st'))+\text{MIL}(st,st')$
                 \State $\text{pred}(st) \gets st'$
              \EndIf
            \EndFor
          \If {$\gamma^{\text{v}} = \gamma^{\text{v}}_{\text{t}}$ and $\text{c}(st,l) \leq C$}
            \State $p^\ast \gets \text{Backtrack}(st)$
            \Return $p^\ast$
          \EndIf
          \EndFor
        \EndFor
        \Return Infeasible
    \end{algorithmic}
\end{algorithm}

\begin{algorithm}[t]  
    \caption{Backtrack($st$)}
    \label{alg:backtrack}
    \begin{algorithmic}[1]
        \Require $st$
        \Ensure RW path $p$
        \State $p \gets \emptyset$
        \State ThisState $\gets st$
        \While {ThisState $\neq st_{\text{s}}$}
          \State Add ThisState to $p$
          \State ThisState $\gets$ Predecessor(ThisState)
        \EndWhile
        \Return $p$
    \end{algorithmic}
\end{algorithm}

\para{Optimality.} For the correctness of Equation \eqref{eq:dprelation}, if Equation \eqref{eq:dprelation} does not hold for some DP state $(st,l)$, i.e., there exists an RW path $p^\ast$ from $st_{\text{s}}$ to $st$ with total RW cost $\text{c}(st, l) < \min\limits_{st' \in \text{N}(st)}\text{c}(st', l - \text{l}(st, st')) + \text{MIL}(st, st')$. Let $st'' \in \text{N}(st)$ be the previous one of the last loco-state on $p^\ast$. By definition, the RW cost along the RW path $p^\ast$ from $st_{\text{s}}$ to $st''$ is at least $\text{c}(st'', l - \text{l}(st, st''))$. Therefore, we have
\begin{align*}
    & \text{c}(st'', l - \text{l}(st, st'')) + \text{MIL}(st, st'')\\
    \leq & \text{c}(st, l)\\
    < & \min\limits_{st' \in \text{N}(st)}\text{c}(st', l - \text{l}(st, st')) + \text{MIL}(st, st')\\
    \leq & \text{c}(st'', l - \text{l}(st, st'')) + \text{MIL}(st, st''),
\end{align*}
leading to a contradiction.

\para{Time Complexity.} The number of possible v-path lengths is $O(2^{|E^{\text{v}}|})$. Basic DP generates $O(N \cdot 2^{|E^{\text{v}}|})$ DP states, and finding the total RW cost for one DP state involves $O(N)$-time. Therefore, the total complexity is $O(N^2 \cdot 2^{|E^{\text{v}}|})$.

\section{Dual Entangled World Navigation Algorithm}
\label{sec:algo}

\begin{table}[t!]
\caption{Notations used in Section \ref{subsec:dwsp} and \ref{subsec:rpgp}.}
\begin{center}
\begin{tabular}{|c|l|}\hline
Symbol & Description \\\hline\hline
$\Gamma^{\text{v}}, \Gamma^{\text{p}}$ & virtual and physical location sets\\\hline
$\gamma^{\text{v}}_{\text{s}}$ & start virtual location\\\hline
$\gamma^{\text{v}}_{\text{t}}$ & destination virtual location\\\hline
$G^{\text{v}}, G^{\text{p}}$ & virtual and physical graphs\\\hline
$e^{\text{v}}, e^{\text{p}}$ & virtual and physical edges\\\hline
$\text{l}(e^{\text{v}})$ & virtual edge length\\\hline
$st^{\text{v}} = (\gamma^{\text{v}}, \theta^{\text{v}})$ & virtual state (v-state)\\\hline
$st^{\text{p}} = (\gamma^{\text{p}}, \theta^{\text{p}})$ & physical state (p-state)\\\hline
$\Theta$ & orientation set\\\hline
$st$ & {locomotion state (loco-state)}\\\hline
$ST$ & {loco-state space}\\\hline
$st_{\text{s}}$ & start loco-state in DROP\\\hline
\multirow{2}{*}{$\text{dist}(st_1,st_2)$} & Euclidean distance \\ & between loco-states\\\hline
$\text{MIL}(st_1,st_2)$ & MIL between neighboring loco-states\\\hline
$p$ & Redirected Walking path (RW path)\\\hline
$p^{\text{v}}$, $p^{\text{p}}$ & virtual and physical path (v/p-path)\\\hline
${\text{l}}(p)$, ${\text{c}}(p)$ & RW path length and cost\\\hline
$C$ & RW cost constraint\\\hline
$r$ & Lagrange multiplier in LR-DROP\\\hline
$r^\ast$ & optimal $r$ in LR-DROP\\\hline
$\langle \alpha(l), \beta(l) \rangle$ & MIL Range for v-edge length $l$\\\hline
$\alpha(l)$ & MIL lower bound for v-edge length $l$\\\hline
$\beta(l)$ & MIL upper bound for v-edge length $l$\\\hline
$\alpha(p^{\text{v}})$ & aggregated MIL lower bound for $p^{\text{v}}$\\\hline
$\beta(p^{\text{v}})$ & aggregated MIL upper bound for $p^{\text{v}}$\\\hline
$r_\alpha$ & Lagrange multiplier in COS-LR-DROP\\\hline
$r_\beta$ & Lagrange multiplier in CPS-LR-DROP\\\hline
$r^\ast_\alpha$ & optimal $r_\alpha$ in COS-LR-DROP\\\hline
$r^\ast_\beta$ & optimal $r_\beta$ in CPS-LR-DROP\\\hline
$p_\alpha, p_\beta$ & current shortest feasible paths\\\hline
$q_\alpha, q_\beta$ & current min-cost infeasible paths\\\hline
$p_\alpha^{\text{temp}}, p_\beta^{\text{temp}}$ & temporary paths in CSMS\\\hline
$\mathbf{Q}$ & priority queue\\\hline
$st_{\text{t}}$ & a loco-state with virtual location $\gamma^{\text{v}}_{\text{t}}$\\\hline
$\text{f}(st_{\text{s}},st,\gamma^{\text{v}}_{\text{t}})$ & ordering function in TECO\\\hline
$\text{g}(st_{\text{s}},st)$ & AEC of $st$\\\hline
$\text{h}(st,\gamma^{\text{v}}_{\text{t}})$ & REC of $st$\\\hline
$\text{MRL}(st,\gamma^{\text{v}}_{\text{t}})$ & MRL of $\gamma^{\text{v}}_{\text{t}}$\\\hline
$\text{MRC}(st,\gamma^{\text{v}}_{\text{t}})$ & MRC of $\gamma^{\text{v}}_{\text{t}}$\\\hline
$p_{\text{remain}}$ & remaining v-path\\\hline
${\text{d}_\text{s}}(st^{\text{p}})$ & distance to physical obstacles\\\hline
${\text{d}_\text{a}}(st^{\text{v}})$ & total distance to $\gamma^{\text{v}}_{\text{s}}$ and $\gamma^{\text{v}}_{\text{t}}$\\\hline
$\mathbf{Q}'$ & tie-breaking loco-states\\\hline
\end{tabular}
\label{tab:symbols_algo1}
\end{center}
\end{table}

\begin{table}[t!]
\caption{Notations used in Section \ref{subsec:ppnp} and \ref{subsec:theo}.}
\begin{center}
\begin{tabular}{|c|l|}\hline
Symbol & Description \\\hline\hline
$\text{l}_\text{l}(st_{\text{s}},st)$ & path length in shortest RW path\\\hline
$\text{c}_\text{l}(st_{\text{s}},st)$ & path cost in shortest RW path\\\hline
$\text{pred}_\text{l}(st)$ & predecessor state in shortest RW path\\\hline 
$\text{l}_\text{c}(st_{\text{s}},st)$ & path length in min-cost RW path\\\hline
$\text{c}_\text{c}(st_{\text{s}},st)$ & path cost in min-cost RW path\\\hline
$\text{pred}_\text{c}(st)$ & predecessor state in min-cost RW path\\\hline 
$\text{c}^{\alpha}_{\min}(\gamma^{\text{v}}_1,\gamma^{\text{v}}_2)$ & minimum path cost in COS-DROP\\\hline
$\text{c}^{\beta}_{\min}(\gamma^{\text{v}}_1,\gamma^{\text{v}}_2)$ & minimum path cost in CPS-DROP\\\hline
$\text{l}_{\min}(\gamma^{\text{v}}_1,\gamma^{\text{v}}_2)$ & lower bound of feasible path length\\\hline
$\tilde{L}$ & current best reference path length \\\hline
$S$ & scaling parameter \\\hline
${\text{DROP}_\text{X}}$ & post-rounding DROP problem\\\hline
$X$ & post-rounding loco-state space\\\hline
$\underline{L}$ & lower bound of optimal path length \\\hline
$p^\ast$ & optimal RW path\\\hline
$\text{l}_\text{X}(p)$ & v-path length of $p$ in ${\text{DROP}_\text{X}}$ \\\hline
$\epsilon$ & approximation parameter \\\hline

\end{tabular}
\label{tab:symbols_algo2}
\end{center}
\end{table}

In investigation of Basic DP, we observe three types of loco-states that can be avoided: 1) those with v-states far away from the source and destination in the v-graph (unlikely to create short v-paths); 2) those with p-states near the physical boundaries and obstacles (hard to generate feasible RW paths); 3) intermediate loco-states with insufficient RW budget to find a v-path shorter than the best intermediate feasible solution obtained during processing. Therefore, we propose the \textit{Dual Entangled World Navigation (DEWN)} algorithm, which 1) quickly generates a reference path (i.e., a feasible solution) by problem transformation techniques and a novel ordering strategy; 2) leverages the reference path to filter redundant loco-states via several pruning strategies; 3) adopts dynamic programming on the dramatically trimmed solution space to ensure the approximation guarantee.

\begin{figure}[t]
    \centering
    \includegraphics[width = 0.97\columnwidth]{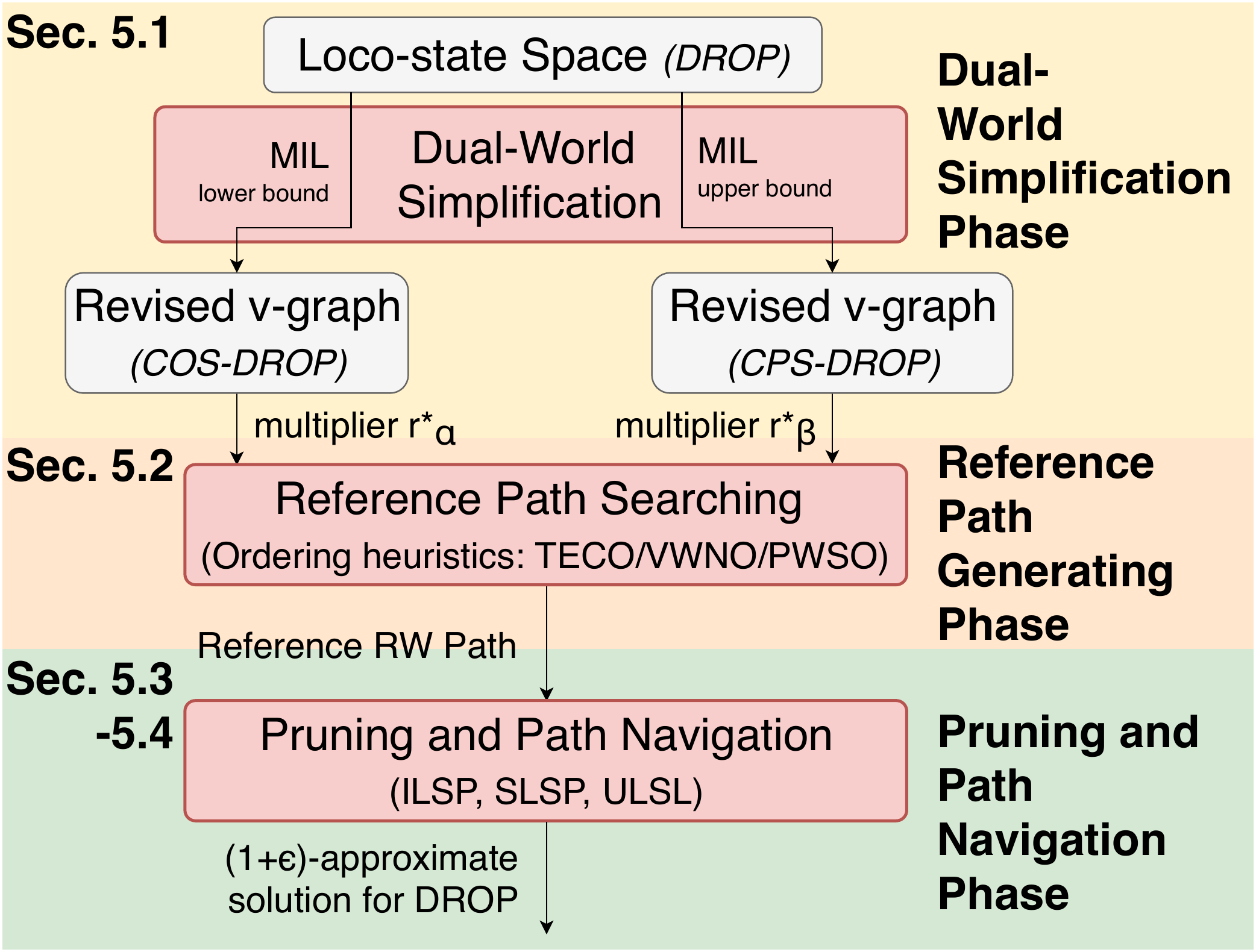}
    \caption{System model of DEWN.}
    \label{fig:flow_chart}
\end{figure}

DEWN consists of three phases as illustrated in Figure \ref{fig:flow_chart}. As it is computationally expensive to find a reference path \textit{directly} from the loco-state space, in \textit{Dual World Simplification Phase} (Section \ref{subsec:dwsp}), we exploit the precomputed \textit{MIL Range} to transform the dual-world DROP problem into two \textit{single-world} problems, COS-DROP and CPS-DROP, respectively, by incorporating the MIL lower and upper bounds as new edge weights of the v-graph to find corresponding v-paths. These problems are then further reduced into Lagrangian relaxed problems where the weighted sum of the v-path length and MIL upper/lower bounds are jointly minimized with Lagrange multipliers $r_\alpha$ and $r_\beta$ as their weights, respectively. We present an efficient algorithm to find the best multipliers $r^\ast_\alpha$ and $r^\ast_\beta$.

Next, \textit{Reference Path Generation Phase} (Section \ref{subsec:rpgp}) exploits $r^\ast_\alpha$ and $r^\ast_\beta$ to find a reference RW path quickly with a new ordering strategy tailored for dual-world path finding that balances the remaining RW cost and v-path distance to the destination. Equipped with the reference RW path, \textit{Pruning and Path Navigation Phase} (Section \ref{subsec:ppnp}) effectively trims off redundant candidate loco-states that incur excessive RW costs and large path distances. DEWN then further applies dynamic programming with the rounding-and-scaling technique on the remaining loco-state space to retrieve an $(1+\epsilon)$-approximate RW path with significantly reduced computational cost. The notations used in this section are summarized in Table \ref{tab:symbols_algo1} and \ref{tab:symbols_algo2}, and the abbreviations are summarized in Table \ref{tab:symbols_abbrv}.

\begin{table}[t]
\caption{Abbreviations used in algorithms.}
\begin{center}
\begin{tabular}{|c|l|}\hline
Abbreviation & Full \\\hline\hline
\multirow{2}{*}{DROP} & Dual-world Redirected-walking \\ & Obstacle-free Path\\\hline
LR-DROP & Lagrange relaxation of DROP\\\hline
COS-DROP & Cost-Optimistic Simplified DROP\\\hline
CPS-DROP & Cost-Pessimistic Simplified DROP\\\hline
COS-LR-DROP & Lagrange relaxation of COS-DROP\\\hline
CPS-LR-DROP & Lagrange relaxation of CPS-DROP\\\hline
CSMS & Cost Simplified Multiplier Searching\\\hline
IDWS & Informed Dual-World Search\\\hline
AEC & Accumulated Estimated Cost\\\hline
REC & Remaining Estimated Cost\\\hline
TECO & Total Estimated Cost Ordering\\\hline
VWNO & Virtual World Naturalness Ordering\\\hline
PWSO & Physical World Safety Ordering\\\hline
ILSP & Infeasible Loco-State Pruning\\\hline
SLSP & Suboptimal Loco-State Pruning\\\hline
ULSL & Unpromising Loco-State Locking\\\hline
\end{tabular}
\label{tab:symbols_abbrv}
\end{center}
\end{table}

\subsection{Dual-World Simplification Phase}
\label{subsec:dwsp}

To strike a good balance between minimizing the v-path length and the RW cost of the reference RW path, the Lagrangian relaxation (LR) problem of DROP, called \textit{LR-DROP}, is defined as follows.

\vskip 0.06in
\hrule
\vskip 0.03in
\noindent \textbf{Problem: LR-DROP.}

\noindent \textbf{Given:} A DROP instance and a Lagrange multiplier $r>0$.

\noindent \textbf{Find:} An RW path $p^\ast$ from $st_{\text{s}}$ to $\gamma^{\text{v}}_{\text{t}}$ to minimize $\text{l}(p^\ast) + r \cdot \text{c}(p^\ast)$.
\vskip 0.03in
\hrule
\vskip 0.06in

This new problem incorporates the constraint on RW cost into the objective via the Lagrange multiplier $r$. Intuitively, with a small $r$, the optimal solution in LR-DROP tends to favor shorter v-paths instead of lower RW costs. In contrast, a feasible solution (in the original problem) is easier to be found by solving LR-DROP with large values of $r$, as manifested in the following property:

\begin{property}
\label{property:lagrange}
Let $p^\ast_1$ and $p^\ast_2$ be the optimal RW paths of LR-DROP with multipliers $0 \leq r_1 < r_2$. Then $\text{l}(p^\ast_1) \leq \text{l}(p^\ast_2)$ and $\text{c}(p^\ast_1) \geq \text{c}(p^\ast_2)$. 
\end{property}

\begin{proof}
Since $p^\ast_1$ and $p^\ast_2$ are optimal, we have
\begin{align}
\text{l}(p^\ast_1) + r_1 \cdot \text{c}(p^\ast_1) &\leq \text{l}(p^\ast_2) + r_1 \cdot \text{c}(p^\ast_2), \label{proof_prop1_eq1}\\
\text{l}(p^\ast_2) + r_2 \cdot \text{c}(p^\ast_2) &\leq \text{l}(p^\ast_1) + r_2 \cdot \text{c}(p^\ast_1).
\end{align}
By summing up the two inequalities,
\begin{align*}
r_1 \cdot \text{c}(p^\ast_1) + r_2 \cdot \text{c}(p^\ast_2) &\leq r_2 \cdot \text{c}(p^\ast_1) + r_1 \cdot \text{c}(p^\ast_2),\\
(r_2 - r_1) \cdot \text{c}(p^\ast_2) &\leq (r_2 - r_1) \cdot \text{c}(p^\ast_1).
\end{align*}
Since $r_1 < r_2$, $\text{c}(p^\ast_1) \geq \text{c}(p^\ast_2)$, and $\text{l}(p^\ast_1) \leq \text{l}(p^\ast_2)$ from Equation \eqref{proof_prop1_eq1}.
\end{proof}

An excellent reference path would be one generated with a small $r$ while complying with the RW cost constraint. Although the LARAC algorithm \cite{AJ01} is effective in approaching the optimal LR-based solution for the constrained shortest path problem, it is too computationally expensive for the dual-world DROP.\footnote{Solving LR-DROP for each $r$ requires $O(N \cdot \log N)$ time, and there are $O(N \cdot \log^3 N)$ iterations to find the optimal $r$, where $N = |ST|$.} Inspired by the fact that traditional LR-based algorithms are only practical in single-world problems, our idea is to first simplify the problem via MIL Range, and then estimate the multiplier through investigating the simplified problems on the much smaller v-graph.

\para{Dual-World Simplification.} We aim to search $r$ in the transformed v-graph, instead of in the loco-state space. For each possible v-edge length $l$, we derive its \textit{MIL Range} $(\alpha(l), \beta(l))$ as follows.

\begin{align*}
    \alpha(l) &= \min\limits_{ \substack{st_1, st_2 \in ST\\ \text{dist}(st_1, st_2) = l} } \text{MIL}(st_1, st_2)\\
    \beta(l) &= \max\limits_{st_1 \in ST} \min\limits_{ \substack{st_2 \in ST\\ \text{dist}(st_1, st_2) = l} } \text{MIL}(st_1, st_2)
\end{align*}

The MIL lower bound $\alpha(l)$ is the smallest possible RW cost to realize a v-edge of length $l$ in the physical world, as it takes the minimum RW cost among all loco-state pairs $(st_1,st_2)$. In contrast, the MIL upper bound $\beta(l)$ is the maximum required RW cost to realize such a v-edge \textit{starting from any fixed loco-state.} More specifically, given $st_1$, the smallest possible RW cost to realize a v-edge of length $l$ would be $\min_{ st_2 \in ST, \text{dist}(st_1, st_2) = l} \text{MIL}(st_1, st_2)$, and $\beta(l)$ takes the maximum value among all $st_1$. For each v-path $p^\text{v} = \langle e_1, e_2, \cdots , e_n \rangle$, MIL Range helps finding the range of the total RW cost along $p^\text{v}$ in the following theorem.

\begin{theorem} \label{thm:bounded_cost}
There exists a p-path $p^\text{p}$ realizing $p^\text{v}$ with a total RW cost bounded by $\alpha(p^\text{v}) = \sum\limits_{i=1}^n \alpha(\text{l}(e_i)) \leq \text{c}(p) \leq \sum\limits_{i=1}^n \beta(\text{l}(e_i)) = \beta(p^\text{v})$.
\end{theorem}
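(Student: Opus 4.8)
The plan is to prove the two inequalities by different means: the lower bound $\alpha(p^\text{v}) \le \text{c}(p)$ will hold for \emph{every} p-path realizing $p^\text{v}$ and follows directly from the definition of $\alpha$, whereas the upper bound $\text{c}(p) \le \beta(p^\text{v})$ is existential and needs an explicit edge-by-edge construction of a realizing p-path. First I would dispatch the lower bound. Writing the v-path as $p^\text{v} = \langle e_1, \ldots, e_n \rangle$ and any realizing RW path as $\langle st_1, \ldots, st_{n+1} \rangle$, the realizing condition forces $\text{dist}(st_i, st_{i+1}) = \text{l}(e_i)$ for each $i$. Since $\alpha(\text{l}(e_i))$ is by definition the minimum of $\text{MIL}(\cdot,\cdot)$ over \emph{all} loco-state pairs at virtual distance $\text{l}(e_i)$, we obtain $\text{MIL}(st_i, st_{i+1}) \ge \alpha(\text{l}(e_i))$ term by term; summing over $i$ yields $\text{c}(p) \ge \sum_{i=1}^n \alpha(\text{l}(e_i)) = \alpha(p^\text{v})$, and this will in particular apply to the path I build below.

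For the upper bound I would construct the p-path greedily, exploiting the outer-max / inner-min shape of $\beta$. Fix the initial loco-state $st_1$ whose virtual location is the source of $p^\text{v}$, and proceed by induction on $i$. Given $st_i$, the definition $\beta(\text{l}(e_i)) = \max_{st \in ST} \min_{st' : \text{dist}(st,st') = \text{l}(e_i)} \text{MIL}(st, st')$ immediately gives $\min_{st' : \text{dist}(st_i, st') = \text{l}(e_i)} \text{MIL}(st_i, st') \le \beta(\text{l}(e_i))$, because $st_i$ is merely one admissible choice for the outer maximand. Hence there is a loco-state $st_{i+1}$ at virtual distance $\text{l}(e_i)$ from $st_i$ with $\text{MIL}(st_i, st_{i+1}) \le \beta(\text{l}(e_i))$, which I append to the path. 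After $n$ steps the constructed p-path realizes $p^\text{v}$ and, summing the per-edge bounds, satisfies $\text{c}(p) = \sum_{i=1}^n \text{MIL}(st_i, st_{i+1}) \le \sum_{i=1}^n \beta(\text{l}(e_i)) = \beta(p^\text{v})$.

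The step I expect to be the real obstacle is making the inductive choice of $st_{i+1}$ consistent with the v-path. The inner minimization in $\beta$ ranges over \emph{all} $st'$ at virtual distance $\text{l}(e_i)$, so its minimizer may sit at a virtual location in the wrong direction, whereas realizing $e_i$ requires landing on the \emph{specific} endpoint $\gamma^{\text{v}}_{i+1}$; thus the constrained minimum (endpoint fixed) could in principle exceed the unconstrained one that $\beta$ controls. I would close this gap by invoking the translation-independence of $\text{MIL}$ emphasized earlier (it depends only on the relative geometry of the two loco-states, not on absolute position) together with the freedom still available in $st_{i+1}$, namely its virtual face orientation and its entire physical state: these degrees of freedom let the minimizing transition be reproduced toward $\gamma^{\text{v}}_{i+1}$ at no greater cost, so the per-edge bound $\text{MIL}(st_i, st_{i+1}) \le \beta(\text{l}(e_i))$ survives the added endpoint constraint. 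I would state this realizability property explicitly, since the whole upper-bound argument hinges on it; the remaining details (the off-by-one bookkeeping between $n$ edges and $n+1$ loco-states, and checking that the constructed sequence is a legal RW path) are routine.
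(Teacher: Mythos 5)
Your proposal is correct and takes essentially the same route as the paper's proof: the lower bound follows pointwise from $\alpha(\text{l}(e_i))$ being a minimum over all loco-state pairs at that distance, and the upper bound uses the same greedy edge-by-edge construction, observing that the current loco-state $st_i$ is one admissible argument of the outer maximum defining $\beta(\text{l}(e_i))$. The endpoint-constraint subtlety you flag --- that the inner minimum in $\beta$ ranges over all loco-states at virtual distance $\text{l}(e_i)$ rather than only those located at the prescribed endpoint $\gamma^{\text{v}}_{i+1}$ --- is genuine, and the paper's proof passes over it silently by simply asserting the per-edge bound, so your explicit appeal to the position-invariance of MIL makes the argument more careful than the original rather than different from it.
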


\begin{proof}
We first prove the lower bound. Since every edge $e_i$ in the v-path $p^\text{v}$ incurs at least an RW cost $\alpha(\text{l}(e_i))$, the total RW cost along $p^\text{p}$ is at least $\sum_{i=1}^n \alpha(\text{l}(e_i))$. Thus, $p^\text{v}$ is not feasible when $\alpha(p^\text{v}) > C$. For the upper bound, to build an RW-realized p-path from $p^\text{v}$, a simple approach iteratively selects the next loco-state by choosing the next p-state with the smallest RW cost. Since the cost of $e_i$ does not exceed $\beta(\text{l}(e_i))$, the total RW cost is at most $\sum_{i=1}^n \beta(\text{l}(e_i))$. If it does not exceed $C$, there exists at least one feasible $p^\text{p}$. The theorem follows.
\end{proof}

Note that $\alpha(l)$ for a v-edge length $l$ refers to the MIL lower bound value of $l$, while $\alpha(p^\text{v})$ for a v-path $p^\text{v}$ is the aggregate of MIL lower bound values for the v-edges along $p^\text{v}$. A v-path $p^\text{v}$ is \textit{feasible} if $\beta(p^\text{v}) \leq C$ and is able to act as a reference path in the later phases. In contrast, a v-path $p^\text{v}$ is \textit{infeasible} if $\alpha(p^\text{v}) > C$. Accordingly, we formulate DROP for the transformed v-graph as \textit{Cost-Optimistic} and \textit{Cost-Pessimistic} versions, corresponding to the MIL lower and upper bounds, respectively.

\vskip 0.06in
\hrule
\vskip 0.03in
\noindent \textbf{Problem: Cost-Optimistic Simplified DROP (COS-DROP).}

\noindent \textbf{Given:} A DROP instance.

\noindent \textbf{Find:} A v-path $p^{\text{v}}$ from $\gamma^{\text{v}}_{\text{s}}$ (the virtual location of $st_{\text{s}}$) to $\gamma^{\text{v}}_{\text{t}}$, so that $\text{l}(p^\text{v})$ is minimized, and $\sum_{e \in p^{\text{v}}} \alpha(\text{l}(e)) \leq C$.

\vskip 0.03in
\hrule
\vskip 0.06in

\noindent Analogous to LR-DROP, the LR problem of COS-DROP (called COS-LR-DROP) incorporates a multiplier $r_\alpha>0$.

\vskip 0.06in
\hrule
\vskip 0.03in

\noindent \textbf{Problem: COS-LR-DROP.}

\noindent \textbf{Given:} A DROP instance, and a multiplier $r_\alpha>0$.

\noindent \textbf{Find:} \sloppy A v-path $p^{\text{v}}$ from $\gamma^{\text{v}}_{\text{s}}$ to $\gamma^{\text{v}}_{\text{t}}$ where $\text{l}(p^{\text{v}}) + r_\alpha \cdot \sum_{e \in p^{\text{v}}} \alpha(\text{l}(e))$ is minimized.

\vskip 0.03in
\hrule
\vskip 0.06in

Similarly, \textit{Cost-Pessimistic Simplification} of DROP (CPS-DROP), corresponding to the MIL upper bound, is formulated by replacing $\alpha(\text{l}(e))$ with $\beta(\text{l}(e))$, and its LR problem, CPS-LR-DROP, is associated with multiplier $r_\beta$. 

\begin{figure}[t]
\centering
  \begin{subfigure}[b]{0.45\columnwidth}
    \includegraphics[width=\columnwidth]{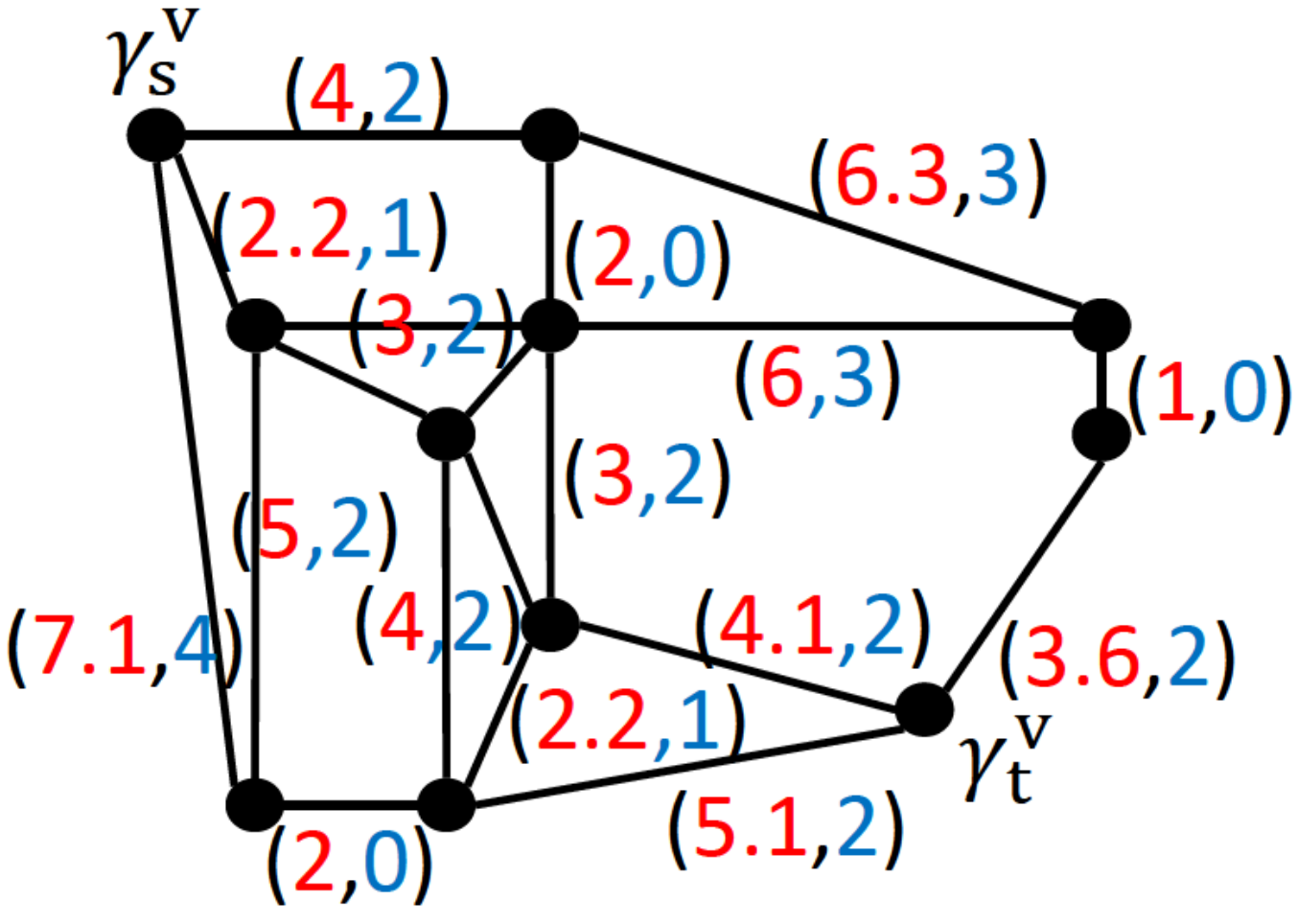}
    \caption{COS-DROP.}
    \label{fig:cos-drop}
  \end{subfigure}
  \hfill 
  \begin{subfigure}[b]{0.45\columnwidth}
    \includegraphics[width=\columnwidth]{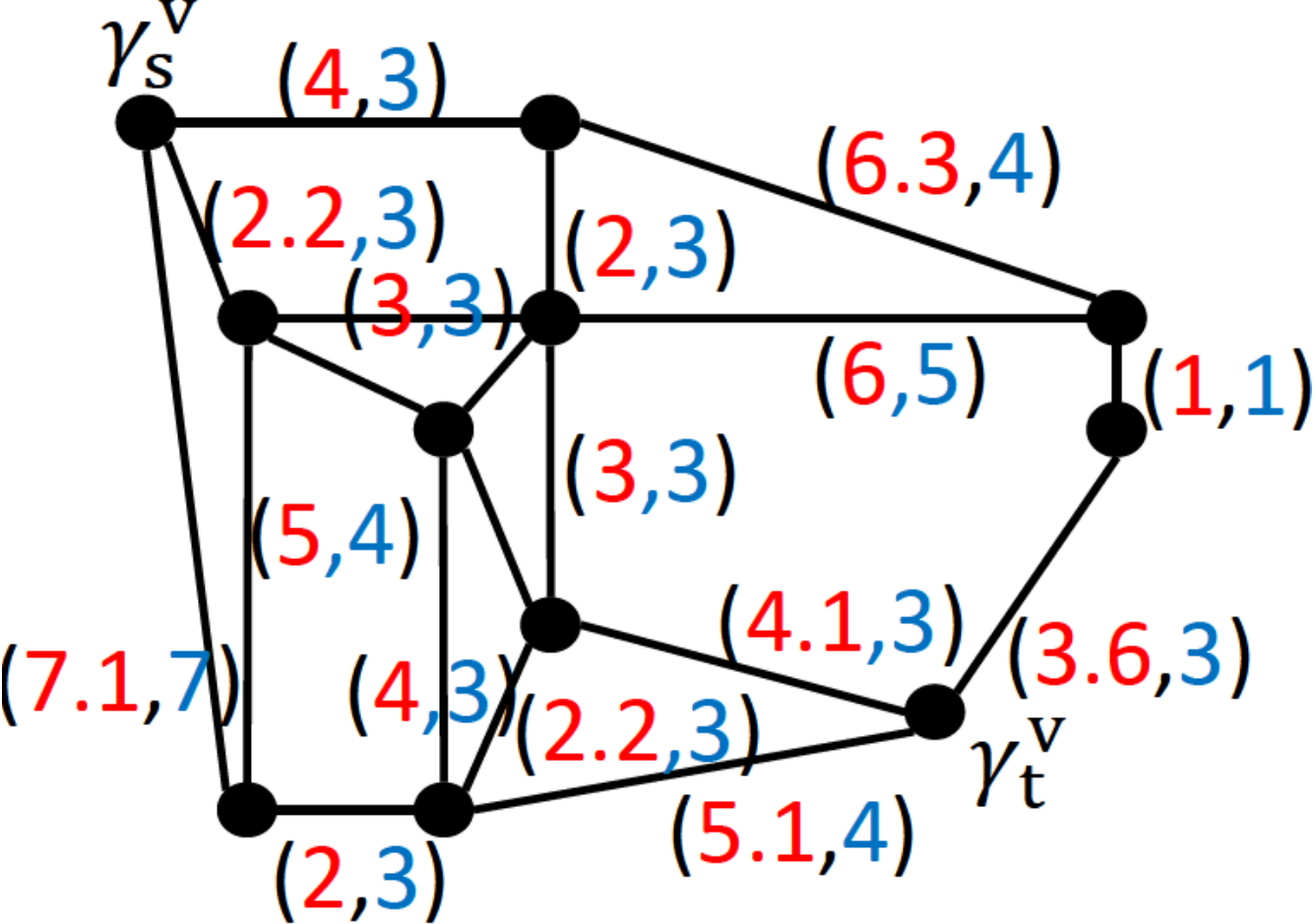}
    \caption{CPS-DROP.}
    \label{fig:cps-drop}
  \end{subfigure}\\
  
  \begin{subfigure}[b]{0.42\columnwidth}
    \includegraphics[width=\columnwidth]{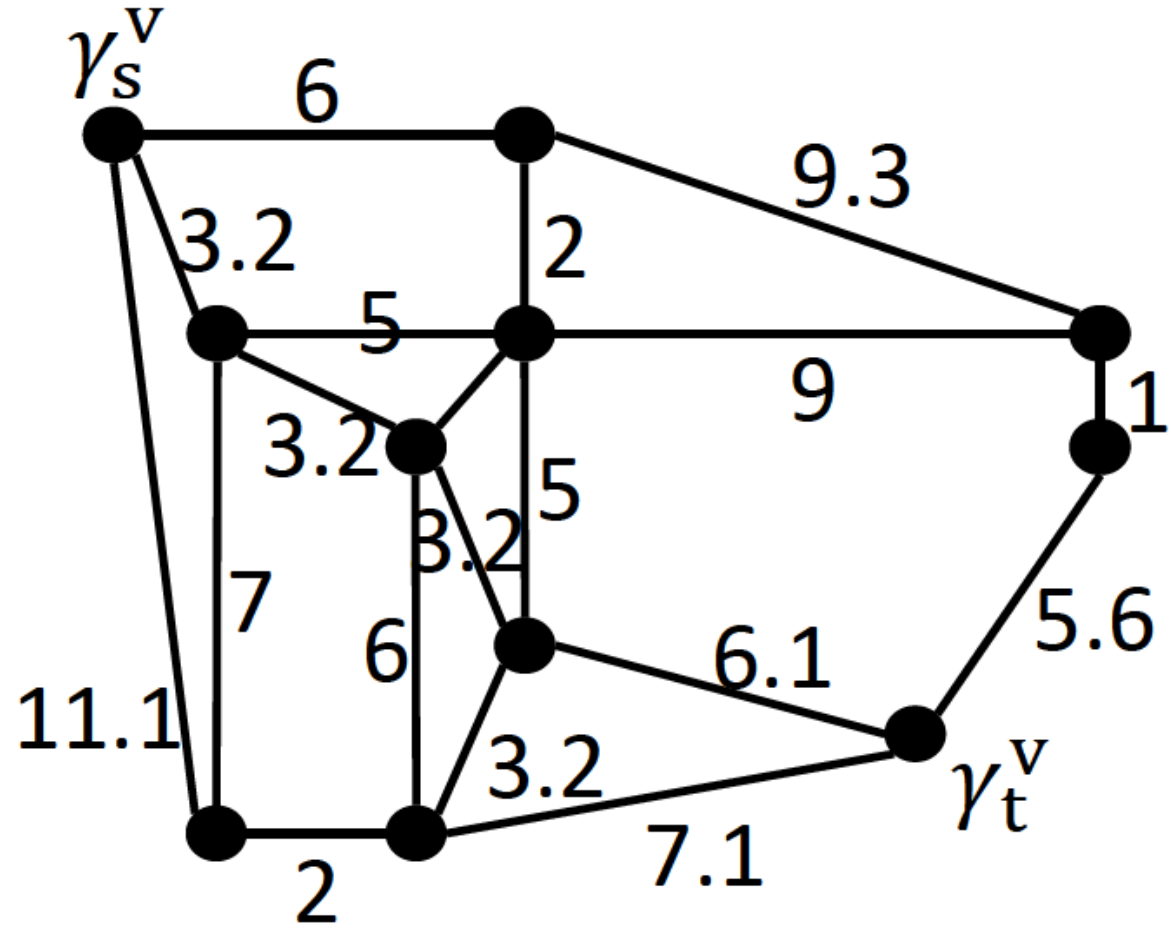}
    \caption{\scriptsize{COS-LR-DROP, $r_\alpha=1$.}}
    \label{fig:cos-lr-drop}
  \end{subfigure}
  \hfill 
  \begin{subfigure}[b]{0.45\columnwidth}
    \includegraphics[width=\columnwidth]{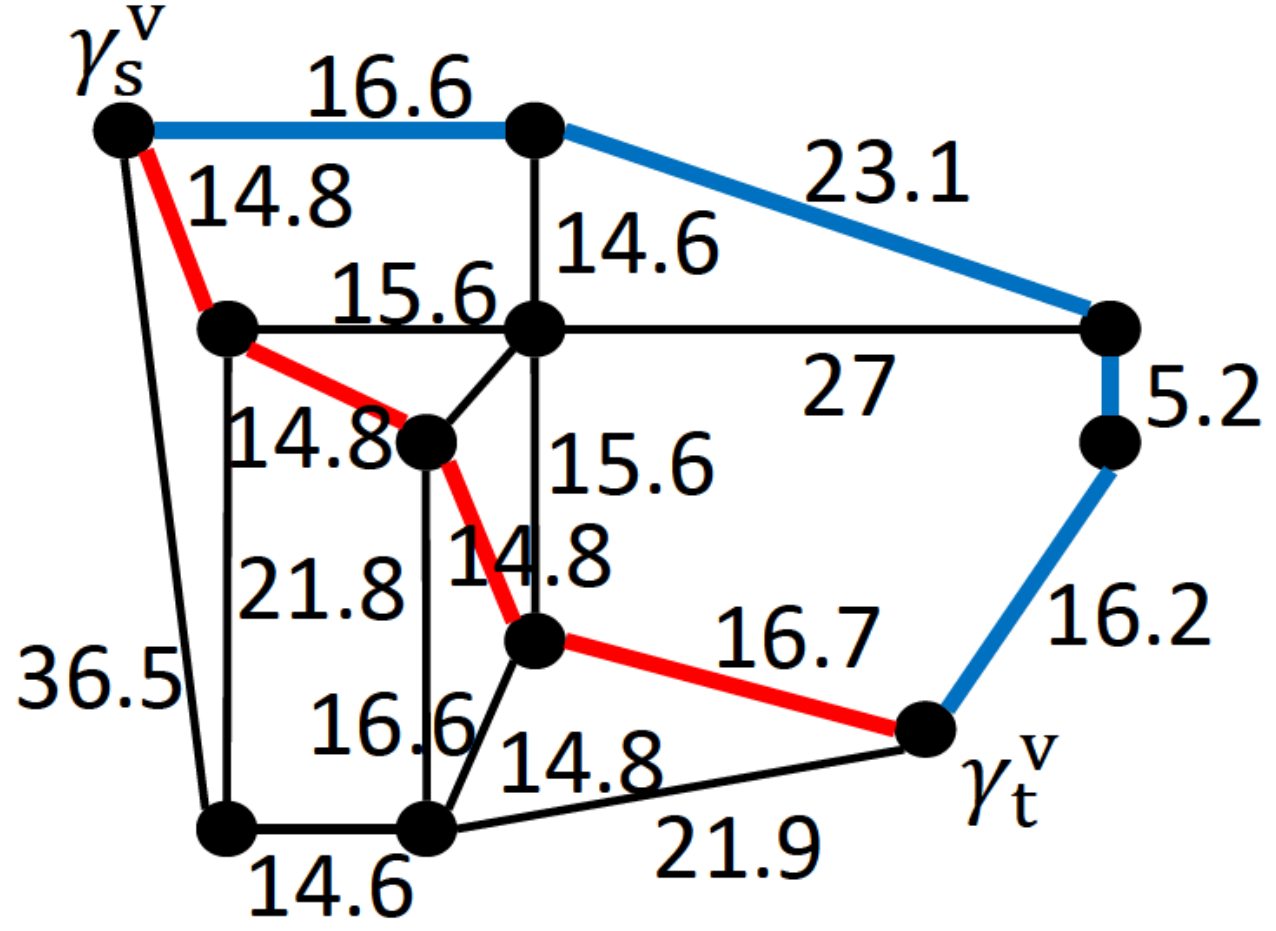}
    \caption{\scriptsize{CPS-LR-DROP, $r_\beta=4.2$.}}
    \label{fig:CSMS-2}
  \end{subfigure}\\
  
  \begin{subfigure}[b]{0.45\columnwidth}
    \includegraphics[width=\columnwidth]{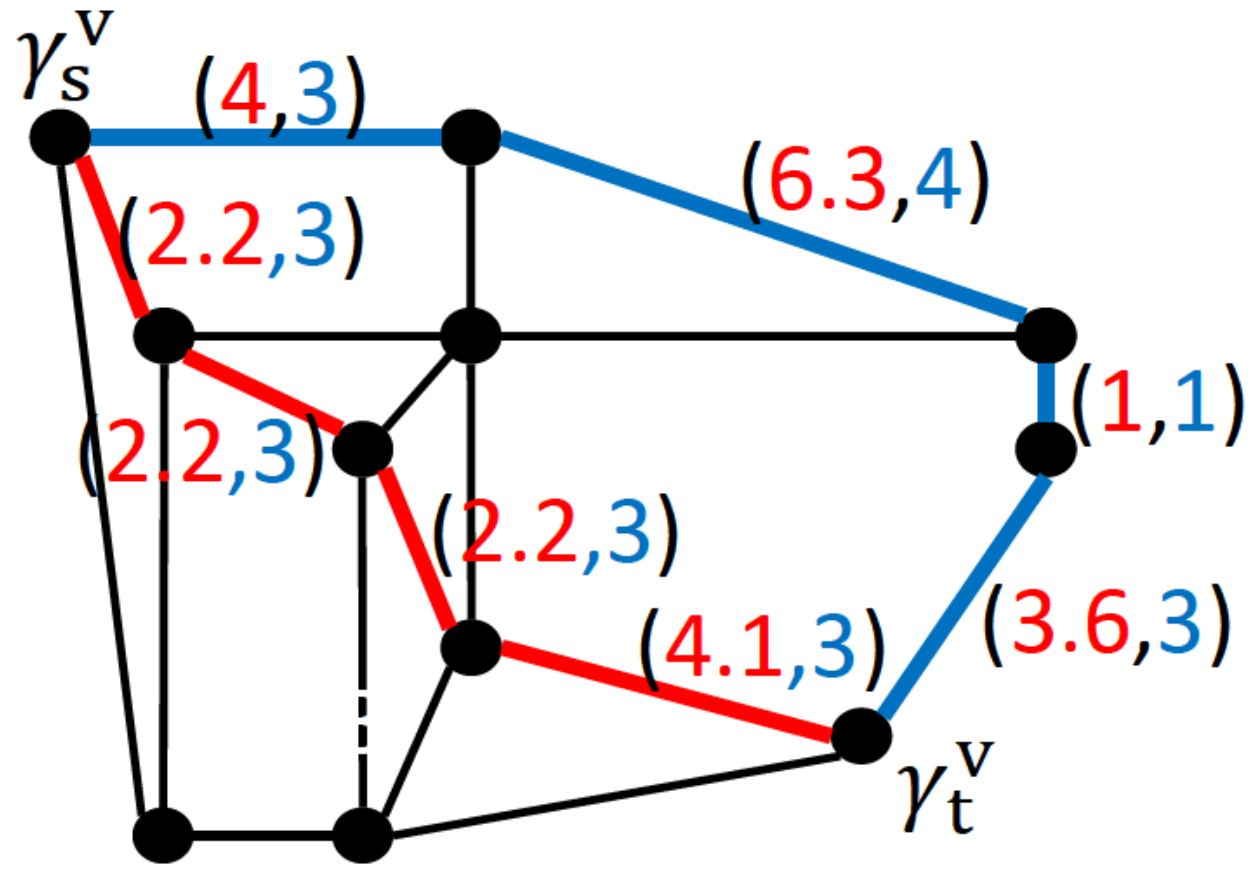}
    \caption{CSMS on CPS-DROP.}
    \label{fig:CSMS-1}
  \end{subfigure}
  \hfill 
  \begin{subfigure}[b]{0.45\columnwidth}
    \includegraphics[width=\columnwidth]{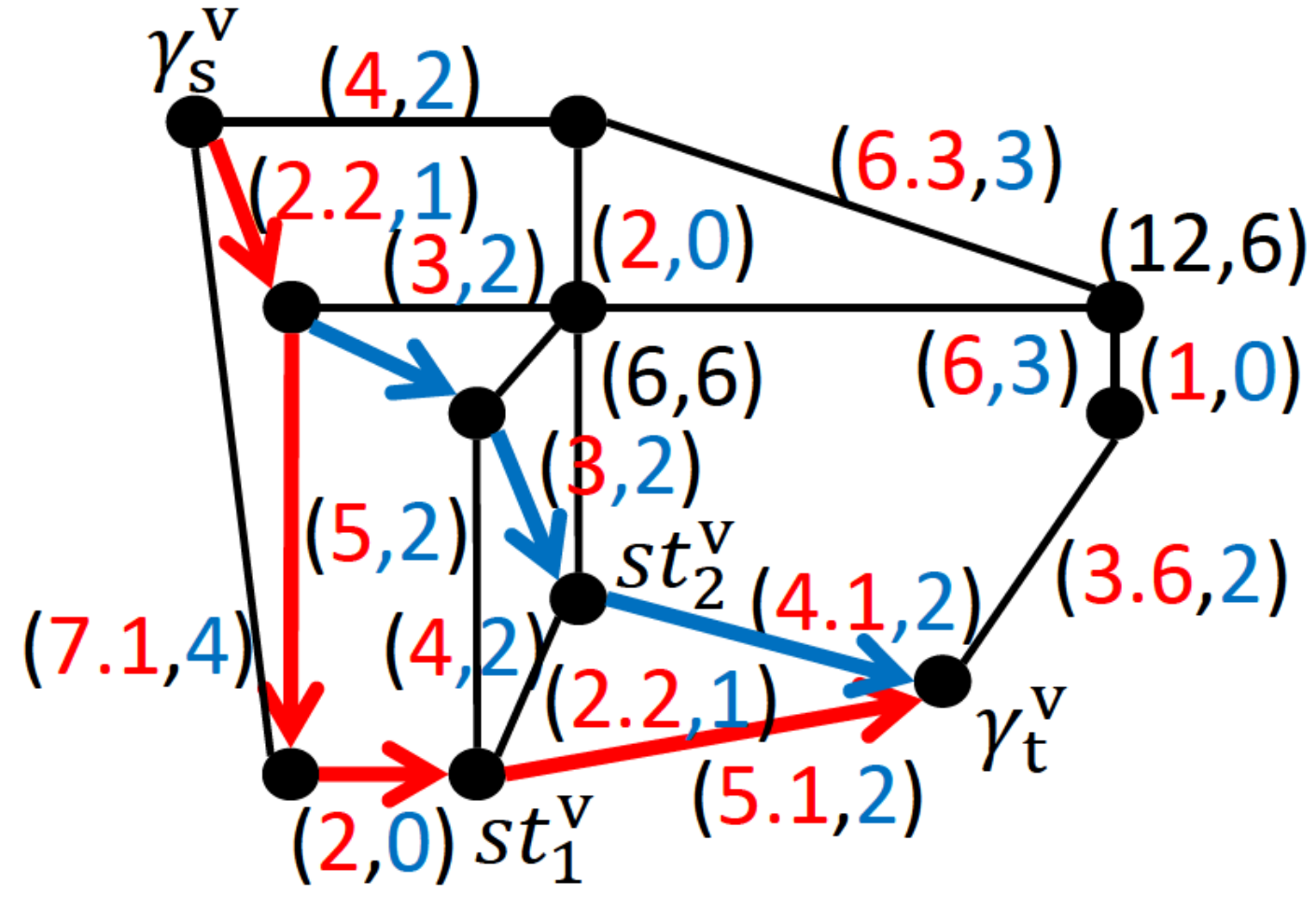}
    \caption{Example of pruning.}
    \label{fig:pruning_example}
  \end{subfigure}
\caption{Running example.}
\label{fig:example}
\end{figure}

\begin{table}[t] 
\caption{Precomputed MIL Range values.}
\label{table:milrange}
\resizebox{\columnwidth}{!}{%
\begin{tabular}{|l|l|l|l|l|l|l|l|l|l|l|l|l|l|}
\hline
$l$    & 1 & 1.4 & 2 & 2.2 & 3 & 3.6 & 4 & 4.1 & 5 & 5.1 & 6 & 6.3 & 8.1 \\ \hline
$\alpha(l)$ & 0 & 0    & 0 & 1    & 2 & 2    & 2 & 2    & 2 & 2   & 3 & 3    & 4    \\ \hline
$\beta(l)$ & 1 & 2    & 3 & 3    & 3 & 3    & 3 & 3    & 4 & 4   & 4 & 4    & 7    \\ \hline
\end{tabular}}
\end{table}

\begin{example}
Figures \ref{fig:cos-drop} and \ref{fig:cps-drop} present the COS/CPS-DROP instances of Example \ref{DROPExample} with MIL Ranges (computed from the MIL between loco-states) listed in Table \ref{table:milrange}. The tuple beside each v-edge describes the edge length (in red) and the MIL lower/upper bound values (in blue). The v-edge lengths are identical in Figures \ref{fig:cos-drop} and \ref{fig:cps-drop}, but the estimated RW cost, i.e., MIL upper/lower bound values, is larger in Figure \ref{fig:cps-drop}. Figure \ref{fig:cos-lr-drop} illustrates the COS-LR-DROP instance obtained from COS-DROP with $r_\alpha=1$. For the top-left v-edge, the weighted sum of the edge length and RW cost in COS-LR-DROP is $4+1 \cdot 2 = 6$. Similarly, Figure \ref{fig:CSMS-2} shows a CPS-LR-DROP instance with $r_\beta=4.2$. \hfill \qedsymbol
\end{example}

\begin{algorithm}[t]  
    \caption{Cost Simplified Multiplier Searching (CSMS)}
    \label{alg:CSMS}
    \begin{algorithmic}[1]
        \Require 
        $G^{\text{v}}, s, t \in \Gamma^{\text{v}}, C, \alpha(\cdot), \beta(\cdot)$
        \Ensure 
        $r^\ast_\alpha, r^\ast_\beta$: Lagrange parameters
        \State $p \gets \text{Dijkstra}(s,t,l)$
        \If{$\beta(p) \leq C$}
          \Return Optimal
        \EndIf
        \State $p_\alpha \gets \text{Dijkstra}(s,t,\alpha(l))$, $p_\beta \gets \text{Dijkstra}(s,t,\beta(l))$
        \State $q_\alpha \gets p, q_\beta \gets p$
        \If{$\alpha(p_\alpha) > C$}
          \Return Infeasible
        \EndIf
        \For{$i \in \{ \alpha, \beta \}$}
          \While{True}
            \State $r_i \gets \frac{\text{l}(q_i) - \text{l}(p_i)}{i(p_i) - i(q_i)}$
            \State $x_i \gets \text{Dijkstra}(s,t,l+r_i \cdot i(l))$
            \If{$x_i = p_i$ or $x_i = q_i$}
              \State $r^\ast_i \gets r_i$
              \State \textbf{break}
            \EndIf
            \If{$i(x_i) \leq C$} 
              \State $p_i \gets x_i$
            \Else
              \State $q_i \gets x_i$ 
            \EndIf
          \EndWhile
        \EndFor
        \Return $r^\ast_\alpha, r^\ast_\beta$ 
    \end{algorithmic}
\end{algorithm}

We then present \textit{Cost Simplified Multiplier Searching (CSMS)} (Algorithm \ref{alg:CSMS}), which can be viewed as generalizing the LARAC algorithm on simplified dual worlds, to find the optimal $r^\ast_\alpha$ for COS-DROP and the optimal $r^\ast_\beta$ for CPS-DROP. CSMS maintains two v-paths $p_\alpha$ and $q_\alpha$. $p_\alpha$ is initialized as the v-path from $\gamma^{\text{v}}_{\text{s}}$ to $\gamma^{\text{v}}_{\text{t}}$ with the minimum RW cost, i.e., the optimal v-path in COS-LR-DROP with $r=\infty$. $q_\alpha$ is initialized as the shortest v-path from $\gamma^{\text{v}}_{\text{s}}$ to $\gamma^{\text{v}}_{\text{t}}$, i.e., the optimal v-path in COS-LR-DROP with $r=0$ (usually not feasible). The above two paths can be found by Dijkstra's algorithm on v-graph (instead of from the large loco-state space $ST$). The initial (and trivial) knowledge is that the optimal multiplier lies in $[0, \infty)$, which is the \textit{possible region} for the best multiplier $\tilde{r_\alpha}$. 

CSMS iteratively 1) updates $r_\alpha = \frac{\text{l}(q_\alpha) - \text{l}(p_\alpha)}{\alpha(p_\alpha) - \alpha(q_\alpha)}$, where $\alpha(p) = \sum_{e \in p} \alpha(\text{l}(e))$, 2) finds the optimal v-path $p^{\text{temp}}_\alpha$ in COS-LR-DROP with $r=r_\alpha$, and 3) examines if $p^{\text{temp}}_\alpha$ is feasible to COS-DROP. If it is feasible, the optimal multiplier leading to the shortest feasible RW path is greater than 0 but smaller than $r_\alpha$. CSMS thereby replaces $p_\alpha$ with $p^{\text{temp}}_\alpha$ to decrease $r_\alpha$ in the next iteration to search for a shorter v-path. Otherwise, $q_\alpha$ is substituted by $p^{\text{temp}}_\alpha$ to increase the multiplier in the next iteration. The above process stops when $p^{\text{temp}}_\alpha$ equals one of $p_\alpha$ or $q_\alpha$, and it returns $r_\alpha$ as the optimal multiplier  $r^\ast_\alpha$ for COS-DROP. $r^\ast_\beta$ for CPS-DROP is optimized analogously, as illustrated below.\footnote{To accelerate the process, an alternative is to adopt an early termination rule: simply terminate when the current value of $r_\alpha$ cannot be increased or decreased by a small ratio $\delta$.} 

As mentioned earlier, DWSP repeats CSMS for COS-LR-DROP and CPS-LR-DROP. Therefore it passes two candidates of multiplier, $r^\ast_\alpha$ and $r^\ast_\beta$, to the next phase RPGP. Note that here $|G^v|$ is tiny compared with the number of loco-states. Hence, finding nice multipliers with CSMS is significantly more efficient than directly applying the existing LARAC algorithm.

\begin{example} \label{CSMS_example}
Figure \ref{fig:CSMS-1} finds $r^\ast_\beta$ for the CPS-DROP instance in Figure \ref{fig:cps-drop}. In the first iteration, v-path $p_\beta$ is the blue one with length 14.9 and estimated RW cost 11. V-path $q_\beta$ is the red path with length 10.7 and estimated RW cost 12. CSMS then updates $r_\beta = \frac{10.7-14.9}{11-12} = 4.2$. Afterwards, since the shortest path is exactly $p_\beta$ and $q_\beta$ (both with aggregated cost 61.1) in Figure \ref{fig:CSMS-2}, $p^{\text{temp}}_\beta$ is either $p_\beta$ or $q_\beta$. Thus CSMS terminates with the optimal multiplier $r^\ast_\beta = 4.2$. \hfill \qedsymbol
\end{example} 

\subsection{Reference Path Generation Phase}
\label{subsec:rpgp}

Since CSMS only finds v-paths, we leverage $r^\ast_\alpha$ and $r^\ast_\beta$ to find the reference path $p^\ast$ in the corresponding LR-DROP instances. Specifically, because CPS-DROP considers the worst-case RW cost for each v-edge, any v-path feasible to CPS-DROP is also feasible to DROP. Consequently, the optimal RW path for LR-DROP with $r = r^\ast_\beta$ is feasible. On the other hand, as $r^\ast_\alpha$ is obtained by an \textit{optimistic} estimate of the RW costs, the the optimal RW-path for LR-DROP with $r^\ast_\alpha$ tends to be shorter but may not be feasible. Thus, we solve LR-DROP for both $r = r^\ast_\alpha$ and $r = r^\ast_\beta$ and return the better (shorter) feasible RW path as the reference path.

To solve LR-DROP with any multiplier $r$, a simple approach is to associate each edge $(st_1, st_2)$ with an \textit{LR cost} $\text{l}((st_1, st_2)) + r \cdot \text{MIL}(st_1, st_2)$ and apply Dijkstra's algorithm in $O(N \cdot \log N)$ time. However, it is again computationally expensive for a large $N = |ST|$. In contrast, we propose \textit{Informed Dual-World Search} (IDWS), which maintains a priority queue $\mathbf{Q}$ to store the loco-states on the boundaries of the visited area. Initially, $\mathbf{Q}$ contains only the start loco-state $st_{\text{s}}$. The algorithm pops one loco-state $st$ from $\mathbf{Q}$ according to the ordering strategies (detailed later) and \textit{expands} $st$ by pushing all unvisited neighboring loco-states of $st$ to $\mathbf{Q}$. 

Moreover, IDWS derives the \textit{Accumulated Estimated Cost} (AEC) $\text{g}(st_{\text{s}},st)$ and \textit{Remaining Estimated Cost} (REC) $\text{h}(st,\gamma^{\text{v}}_{\text{t}})$ upon reaching each loco-state $st$.\footnote{Note that $st_{\text{s}}$ is the start loco-state, and $\gamma^{\text{v}}_{\text{t}}$ is the destination virtual location, i.e., they are fixed variables for comprehensive representations.} $\text{g}(st_{\text{s}},st)$ is the current aggregated LR cost from $st_{\text{s}}$ to $st$ in LR-DROP. Therefore, if a loco-state $st_2$ is reached from expanding $st_1$, $\text{g}(st_{\text{s}},st_2) = \text{g}(st_{\text{s}},st_1) + \text{l}(st_1, st_2) + r \cdot \text{MIL}(st_1, st_2)$. $\text{h}(st,\gamma^{\text{v}}_{\text{t}})$ is the estimated total LR cost from $st$ to the destination virtual location $\gamma^{\text{v}}_{\text{t}}$ (detailed later). The above process repeats until the destination is reached, where IDWS then finds the corresponding RW path by backtracking from $st_{\text{t}}$ to $st_{\text{s}}$. IDWS leverages the idea of \textit{informed search} \cite{RD85} such that if IDWS is \textit{admissible}, i.e., $\text{h}(st,\gamma^{\text{v}}_{\text{t}})$ does not exceed the real total LR cost from $st$ to $\gamma^{\text{v}}_{\text{t}}$, then 1) the returned RW path is optimal to LR-DROP, and 2) the search process visits the fewest states among all algorithms.

\para{Total Estimated Cost Ordering (TECO).} \sloppy Specifically, let \textit{Minimum Remaining Length} $\text{MRL}(st,\gamma^{\text{v}}_{\text{t}})$ and \textit{Minimum Remaining Cost} $\text{MRC}(st,\gamma^{\text{v}}_{\text{t}})$ represent the lower bounds on the v-path length and RW cost from $st$ to $\gamma^{\text{v}}_{\text{t}}$, respectively. They are initialized as the exact v-path length and RW cost obtained from Dijkstra's algorithm on the transformed v-graph (instead of loco-states),\footnote{Traditional index frameworks \cite{TA13} can be incorporated to retrieve the v-path lengths but cannot be directly used for RW cost, since the users' physical worlds vary.} $\text{MRL}(st)$ is initiated as the shortest v-path length from $st$ to the destination, and $\text{MRC}(st)$ is initiated as the least RW cost from $st$ to the destination. Both values can be computed by Dijkstra's algorithm.\footnote{Note here the Dijkstra's algorithm is not computationally intensive since it only runs on the v-graph instead of the whole loco-state space. The MRL and MRC values are also stored, or offline indexed, to avoid repeated calculation. They are reused in the subsequent PPNP phase.} 
where $\text{MRL}(st,\gamma^{\text{v}}_{\text{t}})$ is derived by setting the edge cost between $\gamma^{\text{v}}_1$ and $\gamma^{\text{v}}_2$ as $\text{l}(\gamma^{\text{v}}_1, \gamma^{\text{v}}_2)$, and $\text{MRC}(st,\gamma^{\text{v}}_{\text{t}})$ is obtained by setting the edge cost as the \textit{MIL lower bound} $\alpha(\text{l}(\gamma^{\text{v}}_1, \gamma^{\text{v}}_2))$. Equipped with MRL and MRC, $\text{h}(st,\gamma^{\text{v}}_{\text{t}})$ and the ordering function $\text{f}(st_{\text{s}},st,\gamma^{\text{v}}_{\text{t}})$ in TECO are defined as follows.
\begin{align}
    \text{h}(st,\gamma^{\text{v}}_{\text{t}}) &= \text{MRL}(st,\gamma^{\text{v}}_{\text{t}}) + r \cdot \text{MRC}(st,\gamma^{\text{v}}_{\text{t}})\\ 
    \text{f}(st_{\text{s}},st,\gamma^{\text{v}}_{\text{t}}) &= \text{g}(st_{\text{s}},st) + \text{h}(st,\gamma^{\text{v}}_{\text{t}})
\end{align}
TECO is guided by AEC $\text{g}(st_{\text{s}},st)$ and REC $\text{h}(st,\gamma^{\text{v}}_{\text{t}})$ to extract the next loco-state in $\mathbf{Q}$ with the minimum $\text{f}(st_{\text{s}},st,\gamma^{\text{v}}_{\text{t}})$. Therefore, IDWS features the admissible property $\text{h}(st,\gamma^{\text{v}}_{\text{t}}) \leq \text{l}(p_{\text{remain}}) + r \cdot \text{c}(p_{\text{remain}})$ for any $p_{\text{remain}}$ from $st$ to $\gamma^{\text{v}}_{\text{t}}$, such that it generates an optimal solution to LR-DROP by exploring the fewest loco-states \cite{RD85}. 

\para{Ordering Strategies to improve user experience.} A feasible solution could be found in various orders of visiting candidate loco-states. In the following, we propose Physical World Safety Ordering (PWSO) and Virtual World Naturalness Ordering (VWNO) to generate good reference paths that enhance the user experience. PWSO prioritizes a p-state $st^{\text{p}}$ with the largest distance ${\text{d}_\text{s}}(st^{\text{p}})$ to any physical obstacle in p-graph, and VWNO prefers a v-state $st^{\text{v}}$ with the minimum total straight-line distance ${\text{d}_\text{a}}(st^{\text{v}})$ to the source and destination in v-space. When there are multiple loco-states $\mathbf{Q'} = \{ \argmin\limits_{st' \in \mathbf{Q}} \text{f}(st_{\text{s}},st',\gamma^{\text{v}}_{\text{t}}) \}$, IDWS extracts $st = \argmin\limits_{st \in \mathbf{Q'}} ({\text{d}_\text{a}}(st^{\text{v}}) - {\text{d}_\text{s}}(st^{\text{p}}))$ from $\mathbf{Q'}$ based on PWSO and VWNO, in favor of loco-states with lower ${\text{d}_\text{a}}$ and higher ${\text{d}_\text{s}}$.

\begin{algorithm}[t]
    \caption{Informed Dual-World Search (IDWS)}
    \label{alg:idws}
    \begin{algorithmic}[1]
        \Require LR-DROP instance, multiplier $r$, $\text{d}_{\text{s}}(\cdot)$, $\text{d}_{\text{a}}(\cdot)$
        \Ensure v-path length $\text{l}(p)$ and RW path $p$
        \State $\mathbf{Q} \gets \{ st_{\text{s}} \}$
        \State $\text{Visited} \gets \emptyset$
        \While {$\mathbf{Q} \neq \emptyset$}
          \State $\mathbf{Q'} = \{ \argmin\limits_{st' \in \mathbf{Q}} \text{f}(st_{\text{s}},st',\gamma^{\text{v}}_{\text{t}}) \}$ (TECO)
          \State $st = \argmin\limits_{st \in \mathbf{Q'}} ({\text{d}_\text{a}}(st^{\text{v}}) - {\text{d}_\text{s}}(st^{\text{p}}))$ (PWSO and VWNO)
          \If {$st$ contains $\gamma^{\text{v}}_{\text{t}}$}
            \Return $\text{g}(st_{\text{s}},st)$ and $\text{Backtrack}(st)$ 
          \EndIf
          \For {$st' \in N(st)$}
            \If {$st' \notin \text{Visited}$}
              \State $\text{pred}(st') \gets st$
              \State $\text{MRL}(st',\gamma^{\text{v}}_{\text{t}}) \gets \text{Dijkstra}({st'}^{\text{v}}, \gamma^{\text{v}}_{\text{t}}, \text{l}(\cdot))$
              \State $\text{MRC}(st',\gamma^{\text{v}}_{\text{t}}) \gets \text{Dijkstra}({st'}^{\text{v}}, \gamma^{\text{v}}_{\text{t}}, \alpha(\text{l}(\cdot)))$
              \State $\text{h}(st',\gamma^{\text{v}}_{\text{t}}) \gets \text{MRL}(st',\gamma^{\text{v}}_{\text{t}}) + r \cdot \text{MRC}(st',\gamma^{\text{v}}_{\text{t}})$
              \scriptsize \State $\text{g}(st_{\text{s}},st') \gets \text{g}(st_{\text{s}},st) + (\text{l}(st,st') + r \cdot \text{MIL}(st, st'))$ \normalsize
              \State $\text{f}(st_{\text{s}},st',\gamma^{\text{v}}_{\text{t}}) \gets \text{g}(st_{\text{s}},st') + \text{h}(st',\gamma^{\text{v}}_{\text{t}})$
              \State Add $st'$ to $\mathbf{Q}$
            \EndIf
          \EndFor
        \EndWhile
        \Return Infeasible
    \end{algorithmic}
\end{algorithm}

Since two relaxation parameters $\tilde{r_{\alpha}}, \tilde{r_{\beta}}$ were obtained in DWSP, RPGP repeats IDWS twice with $r = \tilde{r_\alpha}$ and $r = \tilde{r_\beta}$, and return the shorter feasible RW path. From the previous result, at least one RW path would be feasible; in fact, since the $\tilde{r_\alpha}$ and $\tilde{r_\beta}$ are good estimations from DWS, most of the time RPGP returns a \textit{close-to-optimal} RW path $\tilde{p}$, and the subsequent pruning strategies in PPNP are guided by $\text{l}(\tilde{p})$. The detailed steps of IDWS is given in Algorithm \ref{alg:idws}.

\begin{figure}[t]
\centering
  \begin{subfigure}[b]{0.6\columnwidth}
    \includegraphics[width=\columnwidth]{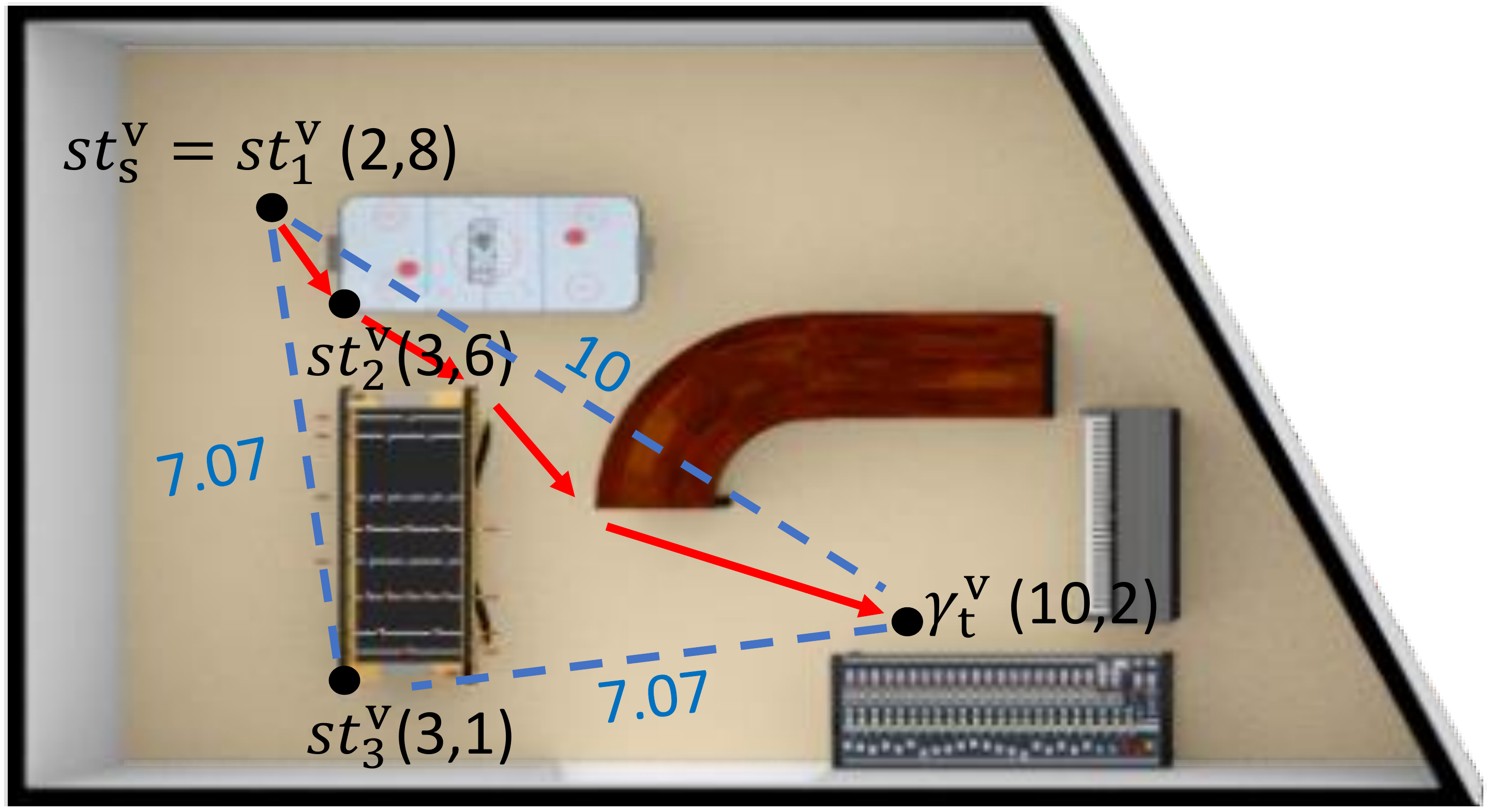}
    \caption{A virtual world.}
    \label{fig:idws-v}
  \end{subfigure}
  \hfill 
  \begin{subfigure}[b]{0.33\columnwidth}
    \includegraphics[width=\columnwidth]{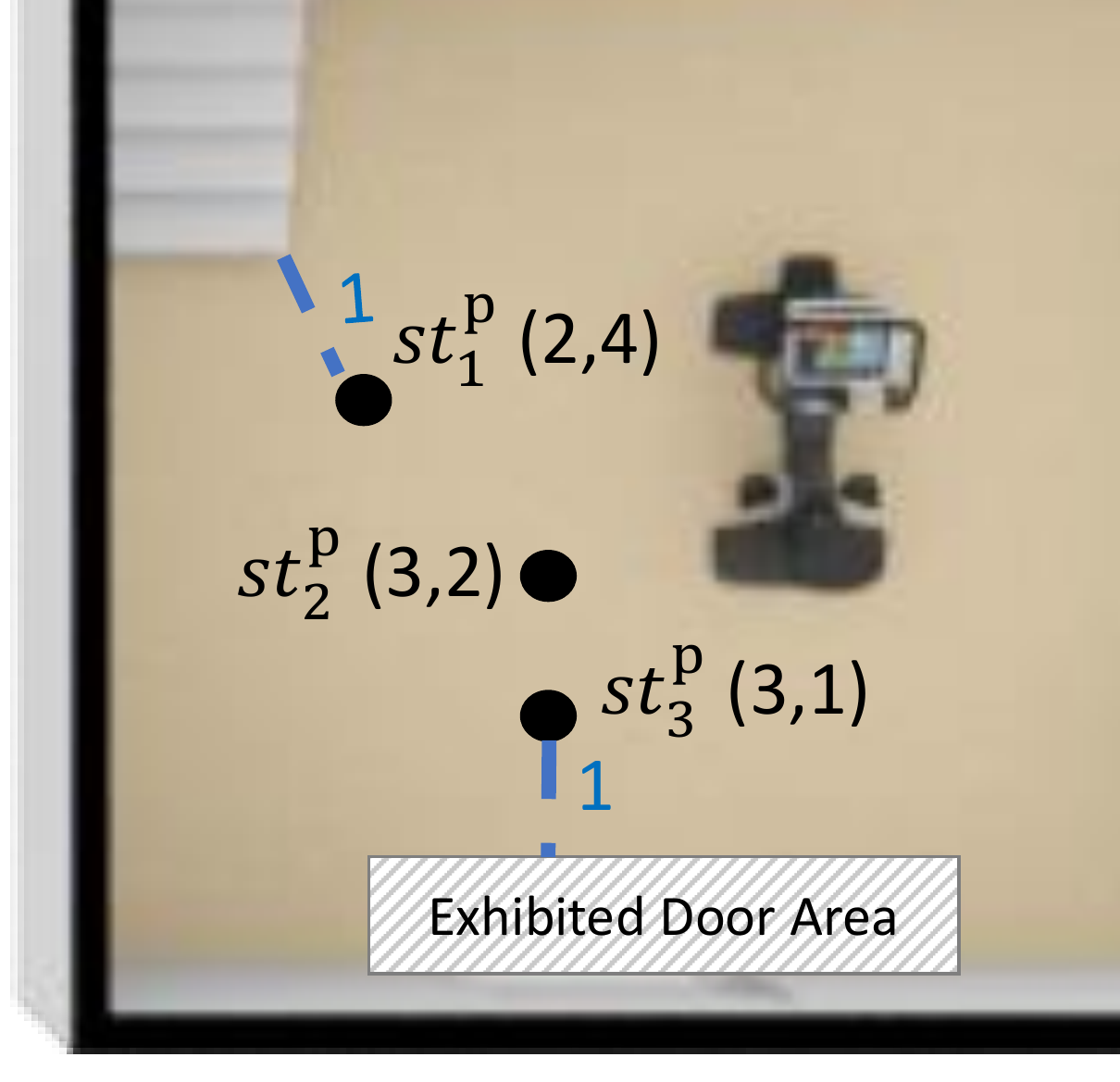}
    \caption{A physical world.}
    \label{fig:idws-p}
  \end{subfigure}
\caption{Three neighboring loco-states in IDWS.}
\label{fig:idws_example}
\end{figure}

\begin{table}[t]
\centering
\caption{An example of TECO in IDWS.}
\begin{tabular}{|c|c|c|c|c|c|}
\hline
    & \begin{tabular}[c]{@{}c@{}}$\text{g}(st_{\text{s}},st)$\\ (AEC)\end{tabular}  & MRL & MRC & \begin{tabular}[c]{@{}c@{}}$\text{h}(st,\gamma^{\text{v}}_{\text{t}})$\\ (REC)\end{tabular}  & \begin{tabular}[c]{@{}c@{}}$\text{f}(st_{\text{s}},st,\gamma^{\text{v}}_{\text{t}})$\\ (TECO)\end{tabular} \\ \hline\hline
$st_1$ & 4.5  & 10.7 & 5   & 31.7 & 36.2                                                   \\ \hline
$st_2$ & 2.2    & 8.5  & 4   & 25.3 & 27.5                                                   \\ \hline
$st_3$ & 20.7 & 7.1  & 2   & 15.5 & 36.2                                                   \\ \hline
\end{tabular}
\label{table:teco_example}
\end{table}

\begin{table}[t]
\centering
\caption{An example of PWSO and VWNO in IDWS.}
\begin{tabular}{|c|c|c|c|}
\hline
    & \begin{tabular}[c]{@{}c@{}}$\text{d}_{\text{s}}$\\ (PWSO)\end{tabular} & \begin{tabular}[c]{@{}c@{}}$\text{d}_{\text{a}}$\\ (VWNO)\end{tabular} & \begin{tabular}[c]{@{}c@{}}Total\\ ($\text{d}_{\text{a}} - \text{d}_{\text{s}}$)\end{tabular} \\ \hline\hline
$st_1$ & 1                                                    & 10                                                    & 9      \\ \hline
$st_3$ & 1                                                    & 14.14                                                    & 13.14     \\ \hline
\end{tabular}
\label{table:tie_breaking}
\end{table}

\begin{example}
\sloppy Recall the state after Example \ref{CSMS_example} where $\mathbf{Q}$ $= \{ st_{\text{s}} \}$ $= \{ (((2,8), 270^{\circ}),$ $((2,4), 270^{\circ}))\}$. IDWS first expands $st_{\text{s}}$ and adds all neighboring loco-states to $\mathbf{Q}$. Figure \ref{fig:idws_example} presents three neighboring loco-states: $st_1 = (((2,8), 270^{\circ}),$  $((2,4), 180^{\circ}))$, which is the result of a Reset operation right at the start loco-state; $st_2 = (((3,6), 315^{\circ}),$ $((3,2), 315^{\circ}))$, which represents a simple straight south-east step without RW operations; and $st_3 = (((3,1), 270^{\circ}),$  $((3,1), 270^{\circ}))$. From $st_{\text{s}}$ to $st_3$, the user walks a long step from $(2,8)$ to $(3,1)$ in the virtual world, while a set of acute RW operations are used to realize the physical transition from $(2,4)$ to $(3,1)$ so that the user does not bump into the boundary. 

Table \ref{table:teco_example} shows the heuristic values of $st_1, st_2$ and $st_3$, where $\text{g}(st_{\text{s}},st)$ is the aggregated LR cost from $st_{\text{s}}$ to $st$ in LR-DROP with $r=4.2$. For $st_2$, moving from $st_{\text{s}}$ to $st_2$ does not incur any RW cost, and the aggregated LR cost is 2.2 (the v-edge length). $\text{MRL}(st_2,\gamma^{\text{v}}_{\text{t}})$ is the shortest v-path length 8.5 from $(3,6)$ to the destination $(10,2)$, following the red \textit{v-path} in Figure \ref{fig:idws-v}. Note that the v-path, while containing $\gamma^{\text{v}}_{\text{s}} = \gamma^{\text{v}}_{\text{1}} = (2,8)$, is obtained on the v-graph instead of from the loco-states. Thus, it is not an RW path and does not passes through the unexplored $st_1$. $\text{MRC}(st_2,\gamma^{\text{v}}_{\text{t}})$ is the lowest estimated cost 4 from $(3,6)$ to $(10,2)$ in COS-DROP (also following the red path). 

Therefore, $\text{h}(st_2,\gamma^{\text{v}}_{\text{t}}) = 8.5 + 4 \cdot 4.2 = 25.3$, and $\text{f}(st_{\text{s}},st_2,\gamma^{\text{v}}_{\text{t}})$ for TECO is $2.2 + 25.3 = 27.5$. Since $st_2$ has the minimum heuristic value, IDWS explores $st_2$ earlier than $st_1$ and $st_3$. For $st_1$ and $st_3$ with the same heuristic value 36.2, Table \ref{table:tie_breaking} shows the PWSO and VWNO values of $st_1$ and $st_3$. For PWSO, their physical locations $(2,4)$ and $(3,1)$ are identically proximal to the nearest obstacle (shown in blue in Figure \ref{fig:idws-p}). In the virtual world, $st^{\text{v}}_1$ is right to the source with the combined straight-line distance 10 (also shown in blue in Figure \ref{fig:idws-v}). However, $st^{\text{v}}_3$ deviates a lot from the straight line and incurs a combined distance $7.07+7.07=14.14.$ Thus, VWNO favors $st_1$. \hfill \qedsymbol
\end{example}

\subsection{Pruning and Path Navigation Phase}
\label{subsec:ppnp}

The main idea behind PPNP is to leverage the reference RW path to remove redundant loco-states. It starts from the source state $st_{\text{s}}$ and iteratively updates the labels of loco-states according to MRL and MRC values. More specifically, for each loco-state $st$, let $\text{l}_\text{l}(st_{\text{s}},st)$ and $\text{c}_\text{l}(st_{\text{s}},st)$ respectively denote the length and RW cost for an RW path from $st_{\text{s}}$ to $st$ with the minimal \textit{length}. Similarly, let $\text{l}_\text{c}(st_{\text{s}},st)$ and $\text{c}_\text{c}(st_{\text{s}},st)$ denote the length and RW cost for an RW path from $st_{\text{s}}$ to $st$ with the minimal RW \textit{cost}. Finally, let $\text{pred}_\text{l}(st)$ and $\text{pred}_\text{c}(st)$ represent the predecessor loco-states of $st$ on the above paths. At the beginning, all four label values of $st_{\text{s}}$ itself is initiated to zero, and $\text{pred}_\text{l}(st_{\text{s}})$ = $\text{pred}_\text{c}(st_{\text{s}}) = st_{\text{s}}$. The label values and predecessors for all other loco-states are initiated upon first visiting (detailed later). 

Moreover, for any location pair $(\gamma^{\text{v}}_1,\gamma^{\text{v}}_2)$ in the v-graph, let $\text{c}^{\alpha}_{\min}(\gamma^{\text{v}}_1,\gamma^{\text{v}}_2)$ and $\text{c}^{\beta}_{\min}(\gamma^{\text{v}}_1,\gamma^{\text{v}}_2)$ denote the minimum RW costs for a v-path from $\gamma^{\text{v}}_1$ to $\gamma^{\text{v}}_2$ in COS-DROP and CPS-DROP, respectively.\footnote{$\text{c}^{\alpha}_{\min}(\gamma^{\text{v}}_1,\gamma^{\text{v}}_2)$ is acquired by finding the shortest-path on the v-graph with the edge weight between $(\gamma^{\text{v}}_1,\gamma^{\text{v}}_2)$ as $\alpha(\text{l}(\gamma^{\text{v}}_1,\gamma^{\text{v}}_2))$. Similarly, $\text{c}^{\beta}_{\min}(\gamma^{\text{v}}_1,\gamma^{\text{v}}_2)$ is found by replacing $\alpha(\text{l}(\gamma^{\text{v}}_1,\gamma^{\text{v}}_2))$ with $\beta(\text{l}(\gamma^{\text{v}}_1,\gamma^{\text{v}}_2))$. According to Theorem \ref{thm:bounded_cost}, there is a p-path incurring an RW cost between $\text{c}^{\alpha}_{\min}(\gamma^{\text{v}}_1,\gamma^{\text{v}}_2)$ and $\text{c}^{\beta}_{\min}(\gamma^{\text{v}}_1,\gamma^{\text{v}}_2)$.}
Meanwhile, let $\text{l}_{\min}(\gamma^{\text{v}}_1,\gamma^{\text{v}}_2)$ denote the lower bound of the length for a feasible v-path from $\gamma^{\text{v}}_1$ to $\gamma^{\text{v}}_2$, i.e., the lower bound of DROP. Note that exactly computing the tightest (largest) $\text{l}_{\min}$ is exactly a DROP query. However, here we only require $\text{l}_{\min}$ to be a lower bound. Therefore, $\text{l}_{\min}$ is given the shortest v-path length between $(\gamma^{\text{v}}_1,\gamma^{\text{v}}_2)$, which can be found with Dijkstra's algorithm again with the edge weight set between $(\gamma^{\text{v}}_1,\gamma^{\text{v}}_2)$ set to their original distnace $\text{l}(\gamma^{\text{v}}_1,\gamma^{\text{v}}_2)$. According to CSMS, if $\text{c}^{\beta}_{\min}(\gamma^{\text{v}}_1,\gamma^{\text{v}}_2) \leq C$, the shortest v-path between $(\gamma^{\text{v}}_1,\gamma^{\text{v}}_2)$ is feasible, and $\text{l}_{\min}$ is tight. Also, let $\gamma^{\text{v}}_{\text{s}}$ and $\gamma^{\text{v}}_{st}$ respectively represent the virtual locations of the source loco-state $st_{\text{s}}$ and the current loco-state $st$. Let $\tilde{L}$ denote the length of the current best feasible reference path.

The search process of PPNP resembles that in IDWS; PPNP here also maintains a priority queue $\mathbf{Q}$, and the search process also contains iterative rounds of loco-state examination. However, PPNP is subtly different from IDWS in RPGP. IDWS explores a tiny fraction of the loco-state space and finds a reference RW path, while PPNP investigates the loco-state space comprehensively and trims off redundant loco-states. While the search order in IDWS follows TECO, PWSO and VWNO, PPNP does not employ them. Instead, the order in PPNP is controlled by pruning strategies, and loco-states may re-enter the priority queue in PPNP. Throughout the search process, 
To skip redundant loco-states, PPNP explores the following pruning strategies. 1) \textit{Infeasible Loco-State Pruning} (ILSP). If $\text{c}^{\alpha}_{\min}(\gamma^{\text{v}}_{\text{s}}, \gamma^{\text{v}}_{st}) + \text{c}^{\alpha}_{\min}(\gamma^{\text{v}}_{st}, \gamma^{\text{v}}_{\text{t}}) > C$, every RW path from $st_{\text{s}}$ to the destination via $st$ is infeasible. Thus, $st$ is removed. 2) \textit{Suboptimal Loco-State Pruning} (SLSP). If $\text{l}_{\min}(\gamma^{\text{v}}_{\text{s}}, \gamma^{\text{v}}_{st}) + \text{l}_{\min}(\gamma^{\text{v}}_{st}, \gamma^{\text{v}}_{\text{t}}) > \tilde{L}$, any RW path from $st_{\text{s}}$ to the destination via $st$ is longer than the reference RW path. $st$ is thereby removed. 3) \textit{Unpromising Loco-State Locking} (ULSL). If $\text{c}_\text{c}(st_{\text{s}},st) + \text{c}^{\alpha}_{\min}(\gamma^{\text{v}}_{st}, \gamma^{\text{v}}_{\text{t}}) > C$, currently it is not likely to find any feasible RW path via $st$. Therefore, PPNP pauses the search expanded from $st$. 
Note that $st$ cannot be removed yet because when the RW path from $st_{\text{s}}$ to $st$ improves later, $\text{c}_\text{c}(st_{\text{s}},st)$ decreases. Hence, $st$ may be expanded accordingly. However, if PPNP now expands $st$, the subsequently visited loco-states always satisfy ULSL and create no feasible solution. Thus, when a loco-state satisfies ULSL, its examination is \textit{postponed} (i.e., not revisited) until all other loco-states in $\mathbf{Q}$ are examined. Similarly, a loco-state $st$ is shelved when $\text{l}_\text{l}(st_{\text{s}},st) + \text{l}_{\min}(\gamma^{\text{v}}_{st}, \gamma^{\text{v}}_{\text{t}}) > \tilde{L}$. If all remaining loco-states are postponed, the reference path cannot be improved, and those states are removed accordingly.

If the current loco-state $st$ passes the above pruning criteria, for each unvisited $st' \in \text{N}(st)$, PPNP assigns $\text{l}_\text{l}(st_{\text{s}},st') = \text{l}_\text{l}(st_{\text{s}},st) + \text{l}(st, st')$, $\text{c}_\text{l}(st_{\text{s}},st') = \text{c}_\text{l}(st_{\text{s}},st) + \text{MIL}(st, st')$, $\text{l}_\text{c}(st_{\text{s}},st') = \text{l}_\text{c}(st_{\text{s}},st) + \text{l}(st, st')$, $\text{c}_\text{c}(st_{\text{s}},st') = \text{c}_\text{c}(st_{\text{s}},st) + \text{MIL}(st, st')$, and also $\text{pred}_\text{l}(st') = \text{pred}_\text{c}(st') = st$. On the other hand, for each visited $st'$, PPNP updates $\text{l}_\text{l}(st_{\text{s}},st'), \text{c}_\text{l}(st_{\text{s}},st')$, and $\text{pred}_\text{l}(st')$ when the new $\text{l}_\text{l}(st_{\text{s}},st')$ is lower. $\text{l}_\text{c}(st_{\text{s}},st'), \text{c}_\text{c}(st_{\text{s}},st')$ and $\text{pred}_\text{c}(st')$ are also updated when $\text{c}_\text{c}(st_{\text{s}},st')$ is better. If the values are updated for a previously locked $st'$, PPNP \textit{unlocks} $st'$ by increasing the priority value from $-\infty$ to $0$.
An improved reference path with length $\text{l}'_\text{l}(st_{\text{s}},st')$ appears when $\text{l}'_\text{l}(st_{\text{s}},st') = \text{l}_\text{l}(st_{\text{s}},st') + \text{l}_{\min}(\gamma^{\text{v}}_{st'}, \gamma^{\text{v}}_{\text{t}}) < \tilde{L}$ and $\text{c}_\text{l}(st_{\text{s}},st') + \text{c}^{\beta}_{\min}(\gamma^{\text{v}}_{st'}, \gamma^{\text{v}}_{\text{t}}) \leq C$. Similarly, an improved reference path with length $\text{l}'_\text{c}(st_{\text{s}},st')$ appears when $\text{l}'_\text{c}(st_{\text{s}},st') = \text{l}_\text{c}(st_{\text{s}},st) + \text{l}_{\min}(\gamma^{\text{v}}_{st}, \gamma^{\text{v}}_{\text{t}}) < \tilde{L}$ and $\text{c}_\text{c}(st_{\text{s}},st) + \text{c}^{\beta}_{\min}(\gamma^{\text{v}}_{st}, \gamma^{\text{v}}_{\text{t}}) \leq C$. The above two cases correspond to the minimum-length and minimum-RW-cost paths to $st$, respectively. 

When all remaining loco-states in $\mathbf{Q}$ are \textit{locked}, i.e., postponed by ULSL before, PPNP computes the exact values of $\text{l}_\text{l}(st_{\text{s}},st)$, $\text{c}_\text{l}(st_{\text{s}},st)$, $\text{l}_\text{c}(st_{\text{s}},st)$, and $\text{c}_\text{c}(st_{\text{s}},st)$ by Dijkstra's algorithm on $ST$ with edge weight assigned to $\text{dist}(st_1, st_2)$ for $\text{l}_\text{l}(st_{\text{s}},st)$ and $\text{c}_\text{l}(st_{\text{s}},st)$, and $\text{MIL}(st_1, st_2)$ for $\text{l}_\text{c}(st_{\text{s}},st)$, and $\text{c}_\text{c}(st_{\text{s}},st)$.\footnote{Note that the exact values are not computed for all loco-states to reduce the computational cost of PPNP. Also, for $\text{l}_\text{l}(st_{\text{s}},st)$ and $\text{c}_\text{l}(st_{\text{s}},st)$, it suffices to apply Dijkstra's algorithm on the v-graph instead of $ST$ since the path length only depends on the v-edge lengths.} PPNP then re-checks the ULSL criteria. If all loco-states still satisfy at least one of the criteria in ULSL, the search process is terminated.

\begin{example}
Figure \ref{fig:pruning_example} illustrates the pruning strategies in Example \ref{DROPExample} in the same DROP query from $st_{\text{s}} = ((2,8), 270^{\circ}, (2,4), 270^{\circ})$ to $\gamma^{\text{v}}_{\text{t}} = (10,2)$ but with the RW with $C=5.5$. The reference RW path has the v-path (in red) length as 14.3 and RW cost as 5. For virtual location $(6,6)$, $\text{c}^{\alpha}_{\min}((2,8), (6,6)) + \text{c}^{\alpha}_{\min}((6,6), (10,2)) = 2 + 4 = 6 > 5.5$. Therefore, any loco-state at $(6,6)$ is pruned by ILSP. For virtual location $(12,6)$, since $\text{l}_{\min}((2,8), (12,6)) + \text{l}_{\min}((12,6), (10,2)) = 10.3 + 4.6 = 14.9 > 14.3$, any loco-state at $(12,6)$ is removed by SLSP. After PPNP expands $st_1 = ((5,1), 0^{\circ}, (5,2), 0^{\circ})$, for $st_1$, $\text{l}_\text{l}(st_{\text{s}},st_1) = 9.1, \text{c}_\text{l}(st_{\text{s}},st_1) = 4, \text{l}_\text{c}(st_{\text{s}},st_1) = 9.2, \text{c}_\text{c}(st_{\text{s}},st_1) = 4$. PPNP then visits the neighboring loco-states $st_2 = ((6,3), 53^{\circ}, (6,4), 53^{\circ})$. The transition from $st_1$ to $st_2$ has a step length of 2.2 with the RW cost as 1. When $st_2$ is examined, PPNP updates $\text{l}_\text{l}(st_{\text{s}},st_2) = 9.1+2.2=11.3, \text{c}_\text{l}(st_{\text{s}},st_2) = 4+1=5, \text{l}_\text{c}(st_{\text{s}},st_2) = 9.2+2.2=11.4, \text{c}_\text{c}(st_{\text{s}},st_2) = 4+1=5.$ Since $\text{c}_\text{c}(st_{\text{s}},st_2) + \text{c}^{\alpha}_{\min}((6,3), (10,2)) = 5+2 = 7 > 5.5$, $st_2$ is postponed by ULSL. However, after the optimal RW path with the v-path in blue is explored, $\text{c}_\text{c}(st_{\text{s}},st_2)$ is lowered to 3, and PPNP then finds a better RW path corresponding to the blue v-path with length 10.7. \hfill \qedsymbol
\end{example}

\begin{algorithm}[t]  
    \caption{Search Process in PPNP}
    \label{alg:ppnp-search}
    \begin{algorithmic}[1]
        \scriptsize 
        \Require DROP instance, reference RW path $\tilde{p}$
        \Ensure Trimmed loco-state space $X$
        \State $\mathbf{Q} \gets \{ st_{\text{s}} \}$
        \State $\text{Visited} \gets \emptyset$
        \State $\tilde{L} \gets \text{l}(\tilde{p})$
        \State Compute or query $\text{c}^\alpha_{\min}(\cdot)$, $\text{c}^\beta_{\min}(\cdot)$, $\text{l}_{\min}(\cdot)$
        \While {$\mathbf{Q} \neq \emptyset$}
          \If {$\text{top}(\mathbf{Q})$ is locked}
            \For {$st \in \mathbf{Q}$}
              \State Compute exact labels and check ULSL
              \If{Some labels of $st$ are updated}
                \State Add $st$ to $\mathbf{Q}$ with priority 0 and \textbf{break}
              \EndIf
              \Return $X = \text{Visited}$
            \EndFor
          \Else
              \State $st \gets \text{top}(\mathbf{Q})$
              \If {$\text{c}^{\alpha}_{\min}(\gamma^{\text{v}}_{\text{s}}, \gamma^{\text{v}}_{st}) + \text{c}^{\alpha}_{\min}(\gamma^{\text{v}}_{st}, \gamma^{\text{v}}_{\text{t}}) > C$ (ILSP)}
                \State Discard $st$ and \textbf{break}
              \EndIf
              \If {$\text{l}_{\min}(\gamma^{\text{v}}_{\text{s}}, \gamma^{\text{v}}_{st}) + \text{l}_{\min}(\gamma^{\text{v}}_{st}, \gamma^{\text{v}}_{\text{t}}) > \tilde{L}$ (SLSP)}
                \State Discard $st$ and \textbf{break}
              \EndIf
                \If {$\text{c}_\text{c}(st_{\text{s}},st) + \text{c}^{\alpha}_{\min}(\gamma^{\text{v}}_{st}, \gamma^{\text{v}}_{\text{t}}) > C$ or $\text{l}_\text{l}(st_{\text{s}},st) + \text{l}_{\min}(\gamma^{\text{v}}_{st}, \gamma^{\text{v}}_{\text{t}}) > \tilde{L}$ (ULSL)}
                \State Add $st$ back to $\mathbf{Q}$ with priority $-\infty$ and \textbf{break}
              \EndIf
              \For {$st' \in N(st)$}
                \State Compute $\text{l}_\text{l}(st_{\text{s}},st'), \text{c}_\text{l}(st_{\text{s}},st'), \text{l}_\text{c}(st_{\text{s}},st'), \text{c}_\text{c}(st_{\text{s}},st')$
                \State Update $\tilde{L}$
                \If {$st \notin \mathbf{Q}$ or $\text{Visited}$}
                  \State Add $st$ to $\mathbf{Q}$ with priority 0
                  \State $\text{pred}_\text{l}(st'), \text{pred}_\text{c}(st') \gets st$
                \Else
                  \State Update the labels and predecessor tag
                \EndIf
              \EndFor
           \EndIf
        \EndWhile
        \Return $X = \text{Visited}$
        \normalsize
    \end{algorithmic}
\end{algorithm}

\begin{algorithm}[t]  
    \caption{Round-and-Scaled DEWN DP}
    \label{alg:ppnp-dp}
    \begin{algorithmic}[1]
        \Require 
        $X$: the remaining loco-state space, $ \epsilon$: the desired approximation ratio, $\underline{L}, \overline{L}$: a lower bound and an upper bound of the optimal objective
        \Ensure 
        $\tilde{p}$: a $(1+\epsilon)$-approximation solution
        \State $S \gets \frac{\epsilon \cdot \underline{L}}{|X|}$
        \For{$(st_1, st_2) \in X \times X$}
          \State $\text{l}'(st_1, st_2) \gets S \cdot \lceil \frac{\text{l}(\gamma^{\text{v}}_1, \gamma^{\text{v}}_2)}{S} \rceil$
        \EndFor
        \For{$\frac{l}{S} = 1$ to $\lceil \frac{\overline{L}}{S} \rceil+ |X|$}
          \For{$st \in X$}
            \State $\text{c}(st,l) \gets \infty$
            \For{$st' \in N(st)$}
              \scriptsize \If{$\text{c}(st',l-\text{l}(st,st'))+\text{MIL}(st,st') < \text{c}(st,l)$} \normalsize
                 \State $\text{c}(st,l) \gets \text{c}(st',l-\text{l}(st,st'))+\text{MIL}(st,st')$
                 \State $\text{Predecessor}(st) \gets st'$
              \EndIf
            \EndFor
            \If {$\gamma^{\text{v}} = \gamma^{\text{v}}_{\text{t}}$ and $\text{c}(st,l) \leq C$}
              \State $\tilde{p} \gets \text{Backtrack}(st)$
              \Return $\tilde{p}$
            \EndIf
          \EndFor
        \EndFor
        \Return ``Infeasible''
    \end{algorithmic}
\end{algorithm}

\subsection{Approximate Solution}
\label{subsec:theo}
With substantial loco-states removed, DP states are then generated from the remaining loco-states, where each DP state is a combination of a loco-state and a \textit{rounded} v-path length (detailed later). Note that the number of DP states is much smaller compared with Basic DP because 1) the pruning process effectively trims off the loco-states, and 2) the number of possible v-path lengths is reduced by a rounding strategy that discretizes the length of v-edges with a scale parameter $S$. The length $\text{l}(e)$ of each edge $e$ is rounded to $S \cdot \lceil \frac{\text{l}(e)}{S} \rceil$. The degradation of solution quality is limited because the rounding error in each edge is at most $S$ (correlated to the approximation ratio). Let ${\text{DROP}_\text{X}}$ denote the \textit{post-rounding} problem instance of DROP with (i) the edge length in v-graph rounded by $S$ and (ii) redundant loco-states removed, and let $X$ be the loco-state space in ${\text{DROP}_\text{X}}$. We set $S$ to $\frac{\epsilon \cdot \underline{L}}{|X|}$, where $\underline{L}$ is a lower bound of $\text{l}(p^\ast)$, the length of the optimal RW path.\footnote{One approach here is to leverage the length of the v-path obtained by CSMS in CPS-LR-DROP; one can prove that it is indeed a lower bound of $\text{l}(p^\ast)$ via Property \ref{property:lagrange}. We revisit this issue in the time complexity discussions.} The detailed procedure is shown in Algorithm \ref{alg:ppnp-dp}, and Algorithm \ref{alg:dewn} presents the framework of DEWN. 

\begin{algorithm}[t]  
    \caption{Dual Entangled World Navigation (DEWN)}
    \label{alg:dewn}
    \begin{algorithmic}[1]
        \Require 
        $ST, st_{\text{s}}, \gamma^{\text{v}}_{\text{t}}, \text{MIL}(\cdot), C$
        \Ensure 
        RW path $p$
        \State Construct or query $\alpha(\cdot)$ and $\beta(\cdot)$ (MIL Range)
        \State $(r^\ast_{\alpha}, r^\ast_{\beta}) \gets \text{CSMS}(G^{\text{v}}, \gamma^{\text{v}}_{\text{s}}, \gamma^{\text{v}}_{\text{t}}, C, \alpha(\cdot), \beta(\cdot))$
        \If {Infeasible}
          \Return Infeasible
        \EndIf
        \If{Optimal (the shortest v-path $p^{\text{v}}$ is feasible)}
          \State $p^{\text{p}} \gets \text{IMCA}(p^{\text{v}})$
          \Return $p$
        \EndIf
        \State $\tilde{p}_\alpha \gets \text{IDWS}(ST, st_{\text{s}}, \gamma^{\text{v}}_{\text{t}}, r^\ast_{\alpha})$
        \State $\tilde{p}_\beta \gets \text{IDWS}(ST, st_{\text{s}}, \gamma^{\text{v}}_{\text{t}}, r^\ast_{\beta})$
        \State $\tilde{p} \gets$ the shorter feasible RW path between $\tilde{p}_\alpha$ and $\tilde{p}_\beta$
        \State $X \gets \text{Search and Pruning in PPNP}(ST, C, \tilde{p})$
        \State $p \gets \text{Rounded-and-Scaled DEWN DP}(X, \epsilon, \underline{L}, \overline{L})$
        \Return $p$
    \end{algorithmic}
\end{algorithm}

\begin{lemma} \label{pruning_optimality}
There exists an RW path $p^\ast$ such that $p^\ast$ is \textit{optimal} in DROP and \textit{feasible} in ${\text{DROP}_\text{X}}$.
\end{lemma}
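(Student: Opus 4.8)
The plan is to exhibit one optimal RW path $p^\ast$ of DROP and argue that none of its loco-states are ever discarded by the three pruning rules, so that $p^\ast$ survives entirely inside the trimmed space $X$; feasibility in ${\text{DROP}_\text{X}}$ then comes for free, because the rounding in ${\text{DROP}_\text{X}}$ inflates only v-edge lengths and leaves every $\text{MIL}$ value — hence the whole RW cost $\text{c}(p^\ast)$ — unchanged. Throughout I assume DROP is feasible, which is exactly the case in which PPNP and the subsequent DP are reached (CSMS returns \emph{Infeasible} otherwise), so an optimal $p^\ast$ exists.

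First I would dispatch ILSP and SLSP, since both fire on static per-state bounds. Fix any intermediate loco-state $st$ of $p^\ast$ with virtual location $\gamma^{\text{v}}_{st}$, and split $p^\ast$ at $st$. By Theorem \ref{thm:bounded_cost} the aggregate MIL lower bound never exceeds the true RW cost of a v-path, and $\text{c}^{\alpha}_{\min}$ minimizes that aggregate over all v-paths, so the two sub-paths give $\text{c}^{\alpha}_{\min}(\gamma^{\text{v}}_{\text{s}},\gamma^{\text{v}}_{st}) + \text{c}^{\alpha}_{\min}(\gamma^{\text{v}}_{st},\gamma^{\text{v}}_{\text{t}}) \le \text{c}(p^\ast) \le C$, which is precisely the negation of the ILSP firing condition. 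Likewise, since $\text{l}_{\min}$ is the shortest v-path length between two virtual locations (a lower bound on any feasible length), the two sub-paths give $\text{l}_{\min}(\gamma^{\text{v}}_{\text{s}},\gamma^{\text{v}}_{st}) + \text{l}_{\min}(\gamma^{\text{v}}_{st},\gamma^{\text{v}}_{\text{t}}) \le \text{l}(p^\ast)$; because $\tilde{L}$ is always the length of a genuine feasible reference path, optimality of $p^\ast$ forces $\text{l}(p^\ast) \le \tilde{L}$ at every moment of the run, so the SLSP condition ($>\tilde{L}$) is never met on $p^\ast$ either.

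The subtle rule is ULSL, because locking merely \emph{postpones} a loco-state, and the only permanent removal happens later, against the \emph{exact} labels recomputed by Dijkstra on $ST$. Here I would use that the exact $\text{c}_\text{c}(st_{\text{s}},st)$ and $\text{l}_\text{l}(st_{\text{s}},st)$ are the true minimum RW cost and length to reach $st$, hence are at most the cost and length that $p^\ast$ itself accumulates up to $st$; combining these with the optimistic continuation bounds $\text{c}^{\alpha}_{\min}(\gamma^{\text{v}}_{st},\gamma^{\text{v}}_{\text{t}}) \le \text{c}(p^\ast_{\text{remain}})$ (Theorem \ref{thm:bounded_cost}) and $\text{l}_{\min}(\gamma^{\text{v}}_{st},\gamma^{\text{v}}_{\text{t}}) \le \text{l}(p^\ast_{\text{remain}})$ yields $\text{c}_\text{c}(st_{\text{s}},st)+\text{c}^{\alpha}_{\min}(\gamma^{\text{v}}_{st},\gamma^{\text{v}}_{\text{t}}) \le \text{c}(p^\ast) \le C$ and $\text{l}_\text{l}(st_{\text{s}},st)+\text{l}_{\min}(\gamma^{\text{v}}_{st},\gamma^{\text{v}}_{\text{t}}) \le \text{l}(p^\ast) \le \tilde{L}$. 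Thus neither permanent-removal trigger of ULSL fires on a loco-state of $p^\ast$, even after the exact recomputation that precedes termination.

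Finally I would close with a short induction along $p^\ast$: the source $st_{\text{s}}$ is kept, and whenever a (never-removed) loco-state of $p^\ast$ is eventually examined — which must happen, as it is never discarded — its successor on $p^\ast$ is pushed into $\mathbf{Q}$ and enters the visited set $X$. Hence every loco-state of $p^\ast$ lies in $X$, so $p^\ast$ is a legal path of ${\text{DROP}_\text{X}}$; as rounding preserves $\text{c}(p^\ast)\le C$, it remains feasible there while staying optimal for DROP. I expect the ULSL step to be the main obstacle, since it is the only rule whose removal is decoupled from its triggering and therefore must be justified against the exact labels rather than the tentative labels maintained during the search; the monotonicity linking tentative labels, unlocking, and the final exact pass is where the argument needs the most care.
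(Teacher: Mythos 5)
Your proposal is correct and takes essentially the same route as the paper's proof: both reduce the claim to showing that no loco-state of an optimal $p^\ast$ can be excluded from $X$ by ILSP, SLSP, or the final exact-label ULSL check, using the same prefix/suffix lower-bound inequalities against feasibility ($\leq C$) and optimality ($\leq \tilde{L}$), and both observe that feasibility in $\text{DROP}_\text{X}$ then follows because rounding alters only v-edge lengths, not MIL values. The only difference is packaging: the paper argues by contradiction on the \emph{first} loco-state of $p^\ast$ missing from $X$ (applying the ULSL exact-label argument to its predecessor), whereas you argue directly that every loco-state of $p^\ast$ survives and close with an induction along the path.
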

\begin{proof}

If the lemma does not hold, \textit{every} optimal path $p^\ast$ in DROP is infeasible in ${\text{DROP}_\text{X}}$. Since the MIL values are identical in DROP and ${\text{DROP}_\text{X}}$, the only possibility that some feasible path $p$ in DROP is infeasible in ${\text{DROP}_\text{X}}$ is that $p$ consists of non-existing loco-states, i.e., loco-states not in $X$. For every $p^\ast$, the above implies there exists at least one loco-state in $p^\ast$ that is not in $X$. Given an arbitrary $p^\ast$, let $st$ represent \textit{the first loco-state} $st \in p^\ast$ such that $st \notin X$, i.e., $st$ precedes all other loco-states in $p^\ast$ that are not in $X$. Let $\text{pred}_{st}$ represent the predecessor of $st$ in $p^\ast$. Thus, we have $\text{pred}_{st} \in X$. Furthermore, let $\gamma^{\text{v}}_{\text{pred}_{st}} $ denote the virtual location of $\text{pred}_{st}$.

Since $st \notin X$, $st$ is either 1) removed by ILSP, 2) removed by SLSP, or 3) never visited in PPNP. Suppose $st$ is removed by ILSP. It follows from the definition of ILSP that $c(p^\ast) \geq \text{c}^{\alpha}_{\min}(\gamma^{\text{v}}_{\text{s}}, \gamma^{\text{v}}_{st}) + \text{c}^{\alpha}_{\min}(\gamma^{\text{v}}_{st}, \gamma^{\text{v}}_{\text{t}}) > C$, which contradicts with the fact that $p^\ast$ is feasible in DROP. Next, if $st$ is removed by SLSP, it implies $\text{l}_{\min}(\gamma^{\text{v}}_{\text{s}}, \gamma^{\text{v}}_{st}) + \text{l}_{\min}(\gamma^{\text{v}}_{st}, \gamma^{\text{v}}_{\text{t}}) > \tilde{L}$ for some reference path length $\tilde{L}$ during PPNP. However, $\text{l}_{\min}(\gamma^{\text{v}}_{\text{s}}, \gamma^{\text{v}}_{st})$ is a lower bound of any v-path length between $\gamma^{\text{v}}_{\text{s}}$ and $\gamma^{\text{v}}_{st}$, and $\text{l}_{\min}(\gamma^{\text{v}}_{st}, \gamma^{\text{v}}_{\text{t}})$ is a lower bound of any v-path length between $\gamma^{\text{v}}_{st}$ and $\gamma^{\text{v}}_{\text{t}}$. Thus, we have $l(p^\ast) > \text{l}_{\min}(\gamma^{\text{v}}_{\text{s}}, \gamma^{\text{v}}_{st}) + \text{l}_{\min}(\gamma^{\text{v}}_{st}, \gamma^{\text{v}}_{\text{t}}) > \tilde{L} > l(p^\ast)$, which is a contradiction, where the last inequality comes from the optimality of $p^\ast$ in DROP.

Finally, suppose $st$ is never visited in PPNP. For this to happen, $\text{pred}_{st}$ satisfies ULSL, and also remains satisfying ULSL in the final round of checking in PPNP, since otherwise PPNP either discards $\text{pred}_{st}$ later (not consistent with the definition of $st$) or visits the neighborhood of $\text{pred}_{st}$ (which will visit $st$). In the final checking, PPNP either finds that $\text{c}_\text{c}(st_{\text{s}},\text{pred}_{st}) + \text{c}^{\alpha}_{\min}(\gamma^{\text{v}}_{\text{pred}_{st}} , \gamma^{\text{v}}_{\text{t}}) > C$ or $\text{l}_\text{l}(st_{\text{s}},\text{pred}_{st}) + \text{l}_{\min}(\gamma^{\text{v}}_{\text{pred}_{st}}, \gamma^{\text{v}}_{\text{t}}) > \tilde{L}$. For the former case, since $\text{c}_\text{c}(st_{\text{s}},\text{pred}_{st})$ was given the exact minimum RW cost from $st_{\text{s}}$ to $\text{pred}_{st}$, the total RW cost incurred by the subpath from $st_{\text{s}}$ to $\text{pred}_{st}$ in $p^\ast$ is at least $\text{c}_\text{c}(st_{\text{s}},\text{pred}_{st})$, which implies that $c(p^\ast) \geq \text{c}_\text{c}(st_{\text{s}},\text{pred}_{st}) + \text{c}^{\alpha}_{\min}(\gamma^{\text{v}}_{\text{pred}_{st}} , \gamma^{\text{v}}_{\text{t}}) > C$, which contradicts with the fact that $p^\ast$ is feasible. For the latter case, since $\text{l}_\text{l}(st_{\text{s}},\text{pred}_{st})$ was given the exact shortest RW path length from $st_{\text{s}}$ to $\text{pred}_{st}$, the path length of $p^\ast$ is $l(p^\ast) \geq \text{l}_\text{l}(st_{\text{s}},\text{pred}_{st}) + \text{l}_{\min}(\gamma^{\text{v}}_{\text{pred}_{st}}, \gamma^{\text{v}}_{\text{t}}) > \tilde{L} \geq l(p^\ast)$, which is a contradiction, where the last inequality again comes from the optimality of $p^\ast$ in DROP. Since all cases lead to contradictions, it implies $st$ does not exist, which completes the proof.

\end{proof}

\begin{theorem} \label{thm:fptas}
DEWN returns a $(1+\epsilon)$-approximation for DROP.
\end{theorem}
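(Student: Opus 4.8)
The plan is to combine the structural guarantee of Lemma \ref{pruning_optimality} with a standard rounding-error analysis, handling feasibility and the approximation ratio separately. First I would establish that the algorithm never spuriously reports infeasibility and that its output is genuinely feasible for DROP. Lemma \ref{pruning_optimality} supplies an optimal DROP path $p^\ast$ that survives pruning and lies in $X$; since the rounding in ${\text{DROP}_\text{X}}$ only inflates edge lengths and leaves every $\text{MIL}$ value unchanged, $p^\ast$ still satisfies $\text{c}(p^\ast) \le C$ in ${\text{DROP}_\text{X}}$. Hence the rounded DP in Algorithm \ref{alg:ppnp-dp} has at least one feasible candidate, so it returns some path $\tilde{p}$ rather than ``Infeasible''. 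Conversely, Algorithm \ref{alg:ppnp-dp} only accepts a destination state with $\text{c}(st,l) \le C$, and because the $\text{MIL}$ costs coincide in DROP and ${\text{DROP}_\text{X}}$, the returned $\tilde{p}$ satisfies $\text{c}(\tilde{p}) \le C$ in the original DROP as well; it also terminates at $\gamma^{\text{v}}_{\text{t}}$, so $\tilde{p}$ is feasible for DROP.

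Next I would argue that the DP is exact on the rounded instance. Algorithm \ref{alg:ppnp-dp} is Basic DP restricted to the trimmed space $X$ with the rounded lengths $\text{l}'$, so the same optimality argument used for Equation \eqref{eq:dprelation} shows $\text{c}(st,l)$ is the minimum RW cost over all paths from $st_{\text{s}}$ to $st$ whose rounded length equals $l$. Scanning $l$ in increasing order and stopping at the first feasible destination state therefore yields the path $\tilde{p}$ of minimum rounded length $\text{l}_\text{X}(\tilde{p})$ among all DROP-feasible paths in $X$. In particular, since $p^\ast$ is feasible in ${\text{DROP}_\text{X}}$, we obtain $\text{l}_\text{X}(\tilde{p}) \le \text{l}_\text{X}(p^\ast)$.

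The core step is the rounding bound. Rounding up gives $\text{l}(e) \le \text{l}'(e) < \text{l}(e) + S$ for every edge, so $\text{l}(\tilde{p}) \le \text{l}_\text{X}(\tilde{p})$ holds regardless of the number of edges, and for the comparison path $p^\ast$, which is simple and lies in $X$ (hence uses at most $|X|$ edges), we get $\text{l}_\text{X}(p^\ast) \le \text{l}(p^\ast) + |X| \cdot S$. Substituting $S = \tfrac{\epsilon \cdot \underline{L}}{|X|}$ makes this additive slack exactly $\epsilon \cdot \underline{L}$. Chaining the inequalities yields $\text{l}(\tilde{p}) \le \text{l}_\text{X}(\tilde{p}) \le \text{l}_\text{X}(p^\ast) \le \text{l}(p^\ast) + \epsilon \cdot \underline{L}$. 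Finally, because $\underline{L}$ is a lower bound on the optimal length $\text{l}(p^\ast)$ (the CPS-LR-DROP v-path length, which is a lower bound via Property \ref{property:lagrange}), we conclude $\text{l}(\tilde{p}) \le \text{l}(p^\ast) + \epsilon\,\text{l}(p^\ast) = (1+\epsilon)\,\text{l}(p^\ast)$, the claimed ratio.

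The main obstacle will be keeping the two roles of the pruned space $X$ straight: Lemma \ref{pruning_optimality} must certify that some optimal $p^\ast$ is simultaneously feasible and present in $X$, so that the rounding chain has a valid comparison path and the edge-count factor $|X|$ legitimately applies to $p^\ast$. I would also take care that the lower bound $\underline{L}$ used to set $S$ genuinely satisfies $\underline{L} \le \text{l}(p^\ast)$; otherwise the additive $\epsilon \cdot \underline{L}$ slack could not be converted into a multiplicative $(1+\epsilon)$ guarantee. The feasibility direction is comparatively routine once one observes that the $\text{MIL}$ costs are identical in the two instances, and the polynomial running time of the scheme follows from a separate counting argument bounding $|X|$ and the number of length buckets $O(\overline{L}/S)$, which I would treat as complementary to the ratio claim asserted here.
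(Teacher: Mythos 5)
Your proposal is correct and follows essentially the same route as the paper's proof: invoke Lemma \ref{pruning_optimality} to obtain an optimal DROP path $p^\ast$ that remains feasible in ${\text{DROP}_\text{X}}$, use the DP's optimality on the rounded instance to get $\text{l}_\text{X}(\tilde{p}) \leq \text{l}_\text{X}(p^\ast)$, bound the total rounding error by $S \cdot |X| = \epsilon \cdot \underline{L}$, and convert the additive slack to a multiplicative $(1+\epsilon)$ factor via $\underline{L} \leq \text{l}(p^\ast)$. Your write-up is in fact slightly more explicit than the paper's on two points it leaves implicit --- that $\text{l}(\tilde{p}) \leq \text{l}_\text{X}(\tilde{p})$ because rounding only inflates lengths, and that the returned path is feasible for the original DROP since the MIL costs coincide --- but these are elaborations, not a different argument.
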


\begin{proof}
Let the solution returned by DEWN be $p$. By Lemma \ref{pruning_optimality}, let $p^\ast$ denote an optimal solution in DROP which is a feasible solution (not necessarily optimal) for ${\text{DROP}_\text{X}}$. The MIL values are identical in DROP and ${\text{DROP}_\text{X}}$. Let $\text{l}(p^\ast)$ and $\text{l}(p)$ denote the v-path lengths of $p^\ast$ and $p$ in DROP, respectively. Similarly, $\text{l}_\text{X}(p^\ast)$ and $\text{l}_\text{X}(p)$ are their v-path lengths in ${\text{DROP}_\text{X}}$. Due to the possible rounding error, we have $\text{l}_\text{X}(p) \leq \text{l}_\text{X}(p^\ast)$, where $p^\ast$ may not be the optimal solution in ${\text{DROP}_\text{X}}$.

Since $p^\ast$ is optimal and it does not pass through more than $|X|$ loco-states, $\text{l}(p^\ast)$ is the sum of at most $|X|$ v-edge lengths, and the rounding error of each edge is at most $S$. The total rounding error along $p^\ast$ thereby does not surpass $S \cdot |X|$. Thus, we have $\text{l}_\text{X}(p) \leq \text{l}_\text{X}(p^\ast) \leq \text{l}(p^\ast) + S \cdot |X| = \text{l}(p^\ast) + \epsilon \cdot \underline{L} \leq (1+\epsilon)\text{l}(p^\ast)$. The final inequality holds because $\underline{L}$ is a lower bound of $\text{l}(p^\ast)$.
\end{proof}

\para{Time Complexity.}
First, the MIL range can be obtained offline in $O(N^2)$-time since every pair of loco-states needs to be examined once. First, CSMS involves at most $O(|E^{\text{v}}| \cdot \log^3 |E^{\text{v}}|)$ iterations \cite{AJ01}. On the other hand, if CSMS chooses to binary partition the remaining possible region for $r$, and early termination is applied with parameter $\delta$, then the number of iterations is $O(\log(\frac{1}{\delta})).$ Specifically, each iteration of CSMS needs $O(|E^{\text{v}}| + |\Gamma^{\text{v}}| \cdot \log|\Gamma^{\text{v}}|)$ time. The total complexity of CSMS is $O(|E^{\text{v}}|^2 \cdot \log^3|E^{\text{v}}| + |\Gamma^{\text{v}}|^3 \cdot \log^4|\Gamma^{\text{v}}|)$. Moreover, for RPGP, all MRL and MRC values can be derived in $O(|\Gamma^{\text{v}}|)$ iterations, whereas each iteration invokes one Dijkstra's algorithm in the v-graph. The total complexity to find MRL and MRC is therefore $O(|\Gamma^{\text{v}}|) \cdot O(|E^{\text{v}}| + |\Gamma^{\text{v}}| \cdot \log{|\Gamma^{\text{v}}|}) =
O(|\Gamma^{\text{v}}|^3 + |\Gamma^{\text{v}}|^2 \cdot \log{|\Gamma^{\text{v}}|})$, and the complexity of IDWS is $O(N^2 + N \cdot \log{N}) = O(N^2)$. Note that the first two phases collaborate to generate a promising reference RW path in $O(N^2)$ time, while the actual computing effort is effectively reduced by TECO. In contrast, directly finding the optimal $r$ in LR-DROP needs $O(N^2 \cdot \log^4{N})$.

For PPNP, three times of Dijkstra's algorithm on the v-graph is involved to find $c^\alpha_{\min}(\gamma^{\text{v}}_1, \gamma^{\text{v}}_2)$, $c^\beta_{\min}(\gamma^{\text{v}}_1, \gamma^{\text{v}}_2)$, and $\text{l}_{\min}(\gamma^{\text{v}}_1, \gamma^{\text{v}}_2)$ 
for each pair $(\gamma^{\text{v}}_1, \gamma^{\text{v}}_2)$, and the time complexity is $O(|\Gamma^{\text{v}}|^2) \cdot O(|E^{\text{v}}| + |\Gamma^{\text{v}}| \cdot \log{|\Gamma^{\text{v}}|}) = O(|\Gamma^{\text{v}}|^4 + |\Gamma^{\text{v}}|^3 \cdot \log{|\Gamma^{\text{v}}|})$. The complexity of the main search process in PPNP is $O(N^2)$ time because each loco-state has at most $N$ neighboring loco-states, and PPNP updates all labels in $O(1)$ time. After the pruning, there are $|X|$ remaining loco-states. 
Let $\overline{L}$ be the length of the reference RW path. For each loco-state, PPNP creates up to $\lceil \frac{\overline{L}}{S} \rceil + |X| = O(\frac{\overline{L}}{\underline{L}} \cdot \frac{|X|}{\epsilon})$ DP states. Each DP state is examined at most once in $O(|X|)$-time to determine the best predecessor DP state. Thus the total time complexity of the last step is $O(|X|^2 \cdot (\frac{\overline{L}}{\underline{L}} \cdot \frac{|X|}{\epsilon})) = O(\frac{|X|^3}{\epsilon} \cdot \frac{\overline{L}}{\underline{L}})$ time to traverse those states to find the approximate solution.\footnote{Here we point out that the ratio $\frac{\overline{L}}{\underline{L}}$ is not guaranteed to be in $O(1)$. To lower this ratio to $O(1)$, from a theoretical view, it may be necessary to invoke polynomial-time parameter testing techniques \cite{FE02}; practically, $\overline{L}$ is usually already very close to $\underline{L}$, as most of the time the ratio is less than 2 in our experiments.}

To sum up, the time complexity of DEWN is $O(N^2 + \frac{N^3}{\epsilon})$ since $|\Gamma^{\text{v}}|$, $|E^{\text{v}}|$, and $|X|$ are all smaller than $N$. Therefore, DEWN is an FPTAS of DROP. In contrast, the time complexity of Basic DP is $O(N^2 \cdot 2^{|E^{\text{v}}|})$. The running time of DEWN is significantly lower than Basic DP due to the following reasons: 1) DEWN processes mainly on v-graphs (of size $|\Gamma^{\text{v}}|$ and $|E^{\text{v}}|$), whereas Basic DP examines the whole DP space. 2) ILSP and SLSP effectively reduce loco-state space to $X$; 3) The rounding strategy scales the total possible v-path lengths from $O(2^{|E^{\text{v}}|})$ to $O(\frac{|X|}{\epsilon})$.

\section{Enhancements and Extensions} \label{sec:extensions}
In this section, we propose an enhancement for DROP, then show that DEWN can serve as a building block to support other spatial queries. 

\subsection{Critical Orientation Simplification}
While DEWN is efficient and effective in solving DROP, the practical efficiency is still a concern in some applications. In Simp-DEWN (see also Section \ref{sec:exp}) as a simplified implementation, we introduce the COS strategy to efficiently simplify the loco-states by ignoring the orientations for the computationally intensive parts of DEWN. The idea of COS is as follows. On solving DROP, DEWN always examine the user orientations in the dual worlds carefully to leverage RO to steer users away from obstacles, incurring the side-effect of a large time complexity, since a single pair of locations $(\gamma^{\text{v}}, \gamma^{\text{p}})$ corresponds to $|\Theta|^2$ valid loco-states. The final time complexity is then dependent on $O(\Theta^4)$. However, through careful investigation of the RW paths, we found that the complexity can be significantly reduced by allowing an additional RW operation at each loco-state along the RW path. More specifically, if an additional RO (or Reset, but more costly) is allowed at each loco-state along the RW path, then the orientations of the user need not to be considered in DEWN, which trades solution quality for computational efficiency.

Concretely, this is achieved by setting $\Theta = \{ 0^{\circ} \}$, or equivalently speaking, merging all the loco-states with identically locations $(\gamma^{\text{v}}, \gamma^{\text{p}})$. The additional RW cost incurred can be bounded since this incurs at most the largest cost of an RO or Reset at each loco-state. Denote $C_\theta$ as the maximum RW cost incurred, depending on the cost model, to correct the user orientation. The following theorem shows that this simplification gives an error-bounded FPTAS.

\begin{theorem} \label{thm:simplification}
DEWN with COS finds an RW path $\tilde{p}$ that (i) $\text{l}(\tilde{p}) \leq (1+\epsilon) \cdot \text{l}(p^\ast)$, and (ii) $\text{c}(\tilde{p}) \leq C + C_\theta \cdot \text{D}(G^{\text{v}})$, where $\text{D}(G^{\text{v}})$ is the diameter of the v-graph.
\end{theorem}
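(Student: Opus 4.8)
The plan is to read the COS simplification as producing an ordinary DROP instance on a collapsed state space, so that the FPTAS guarantee of Theorem~\ref{thm:fptas} can be invoked verbatim for the length bound, and then to account separately for the orientation corrections needed to turn the orientation-free solution into a genuine RW path. Write $\text{DROP}'$ for the instance obtained by setting $\Theta=\{0^\circ\}$ and merging all loco-states sharing a location pair $(\gamma^{\text{v}},\gamma^{\text{p}})$, and define the transition cost between two merged states as $\text{MIL}'(\cdot,\cdot)$, the minimum of the original $\text{MIL}$ over all choices of the four orientations. Since $\text{DROP}'$ is itself a DROP instance with $|\Theta|=1$, running DEWN on it yields, by Theorem~\ref{thm:fptas}, a path $\tilde{p}$ that is a $(1+\epsilon)$-approximation for $\text{DROP}'$ and satisfies $\sum \text{MIL}'(\tilde{p}) \leq C$.

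For part (i) I would compare the two optima. Projecting the DROP-optimal path $p^\ast$ onto location pairs (discarding orientations) gives a path in $\text{DROP}'$ of the same v-path length, since $\text{l}$ depends only on virtual locations; it is feasible in $\text{DROP}'$ because $\text{MIL}' \leq \text{MIL}$ edge by edge, so its aggregated cost is at most $\text{c}(p^\ast)\leq C$. Hence the optimal length of $\text{DROP}'$ is at most $\text{l}(p^\ast)$, and combining this with the FPTAS bound gives $\text{l}(\tilde{p}) \leq (1+\epsilon)\cdot\text{l}(p^\ast)$, which is exactly (i).

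For part (ii) the key subtlety is that $\tilde{p}$ is feasible only with respect to $\text{MIL}'$, which picks the best orientations for each edge \emph{independently}, whereas a real walk must chain orientations: the arrival orientation of one transition generally differs from the departure orientation that the next edge's $\text{MIL}'$ assumes. I would realize $\tilde{p}$ greedily, edge by edge: before each transition insert a single RO (or a Reset when an RO is insufficient) to rotate the user into the departure orientation attaining that edge's $\text{MIL}'$, then perform the min-cost transition, which lands at the corresponding arrival orientation. Each insertion costs at most $C_\theta$ by definition, and after it the required orientation holds regardless of the arrival orientation, so the construction is always well defined and never double-counts the correction against the $\text{MIL}'$ term. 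The realized cost is therefore $\sum \text{MIL}'(\tilde{p}) + (\text{number of transitions})\cdot C_\theta \leq C + (\text{number of transitions})\cdot C_\theta$.

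It remains to bound the number of transitions by $\text{D}(G^{\text{v}})$, and this is the step I expect to be the main obstacle. Since $\tilde{p}$ is (approximately) length-minimizing it may be taken to be a simple v-path, but a simple path's hop-length is \emph{not} automatically bounded by the graph diameter, so this cannot be asserted for free; the argument must show that the returned v-path traverses at most $\text{D}(G^{\text{v}})$ v-edges, i.e.\ relate its hop-count to the v-graph diameter (for instance by replacing $\tilde{p}$, without changing length, by a route whose hop-count does not exceed the diameter between its endpoints). Once the number of transitions is shown to be at most $\text{D}(G^{\text{v}})$, the bound $\text{c}(\tilde{p}) \leq C + C_\theta\cdot\text{D}(G^{\text{v}})$ follows immediately. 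Alongside this I would verify the two soft points of the reduction: that $\text{MIL}'$ is well defined as a minimum over the finite orientation set, and that a single inserted RO/Reset per step genuinely suffices to set up every min-cost transition, which is precisely the premise recorded in the COS construction.
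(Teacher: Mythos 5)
Your part (i) argument is sound and in fact more careful than the paper's (which simply declares this part trivial because ``the simplification does not affect the path length''): projecting $p^\ast$ onto the collapsed instance, observing $\text{MIL}' \leq \text{MIL}$ edge by edge so that feasibility transfers, and then invoking Theorem~\ref{thm:fptas} is exactly the right way to make that claim rigorous. Your per-edge realization scheme for part (ii) also matches the first step of the paper's construction (one orientation correction of cost at most $C_\theta$ per transition). The genuine gap is the one you yourself flag and leave unresolved: you never bound the number of transitions by $\text{D}(G^{\text{v}})$, and that bound \emph{is} the substance of claim (ii); what you have proved is only $\text{c}(\tilde{p}) \leq C + C_\theta \cdot (\text{number of v-edges of } \tilde{p})$, which is strictly weaker. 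Moreover, neither of your suggested escapes closes the gap. First, you cannot ``take $\tilde{p}$ to be a simple v-path'': in the collapsed instance the states are pairs $(\gamma^{\text{v}}, \gamma^{\text{p}})$, and an (approximately) optimal path may revisit the same virtual location while realized at different physical locations --- a point the paper's own proof makes explicitly. Second, replacing $\tilde{p}$ by a hop-minimal route between its endpoints is not a length- or cost-preserving move, so it would forfeit both the $(1+\epsilon)$ guarantee of part (i) and the budget-$C$ portion of the cost bound.

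The paper closes this step with a construction your proposal is missing: after placing a Reset at each turning point, it (a) excises, for every virtual location visited more than once, the entire sub-path between its first and last visit, inserting a Reset to reconnect the two loco-states, and (b) merges Resets that thereby become co-located; it then argues that the surviving, un-mergeable Resets number at most $\text{D}(G^{\text{v}})$, each costing at most $C_\theta$. This loop-removal-plus-merging argument is the ingredient that ties the count of orientation corrections to the v-graph itself rather than to the (possibly much longer) walk $\tilde{p}$, and it is what your attempt would need in order to be a complete proof of (ii). You are right to be suspicious of the diameter bound as such --- the paper asserts it (``it is clear that\ldots'') rather than deriving it, and a simple path's hop count is in general not bounded by the graph diameter --- but as written, your attempt stops one full step short of where the paper's proof ends.
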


\begin{proof}
The first part of the theorem is trivial since the simplification does not affect the path length. Let $p_{\text{sim}}$ be the returned path by DEWN with the simplification, and $p^{\text{v}}_{\text{sim}}$ be its v-path. Note that $p^{\text{v}}_{\text{sim}}$ may not necessarily be a simple path, since the optimal RW path may traverse the same location in the virtual world multiple times while realizing at different physical locations. Nevertheless, consider the following steps: 1) at each v-state (a turning point on the v-path), apply an additional Reset operation to find the optimal face orientation for the next edge (so that the real RW cost for the next step is the lowest). 2) if the v-path goes through the same v-state multiple times, remove all the intermediate part of the RW path between the first and last times, and apply an additional Reset operation to connect the loco-states (to relocate the user in the physical world). 3) merge the Reset operations if there exists multiple ones at a turning point due to the above.
Denote $p_{\text{fin}}$ to be the resulted RW path. From the definitions, it is clear that the above modifications invoke at most $\text{D}(G^{\text{v}})$ seperated Reset operations that cannot be merged, each incurring an RW cost at most $C_\theta$. Therefore, it must hold that $\text{c}(\tilde{p}) \leq C + C_\theta \cdot \text{D}(G^{\text{v}})$, which is the second part of the theorem.
\end{proof}

\subsection{Extension to Spatial Queries in Dual Worlds}
\label{sec:spatial}

Here we show that existing spatial query algorithms and index structures can exploit DEWN as a building block to support \textit{Dual-world $k$-Nearest Neighbors} (D$k$NN) and \textit{Dual-world Range} (DR) queries. Similar to DROP, D$k$NN and DR incorporate the RW cost constraint for dual-world VR applications (e.g., virtual touring and navigation). Given a start loco-state $st_{\text{s}}$, D$k$NN finds the POIs in the virtual world so that the v-paths from $st_{\text{s}}$ to them are the top-$k$ shortest ones, and and they comply with the RW cost constraint. Similarly, DR query returns all virtual POIs within a specified range with the RW paths from $st_{\text{s}}$ following the RW cost constraint. 

For D$k$NN, a computationally intensive approach is to find the RW path for every POI with DEWN and then extract the top-$k$ solution. In contrast, the resurging Incremental Euclidean Restriction (IER) algorithm \cite{kNN16} for $k$NN can solve D$k$NN more efficiently, by exploiting DEWN to find promising RW paths. More specifically, upon retrieval of the next Euclidean NN $\gamma^{\text{v}}$ in the virtual world, instead of using Dijkstra's algorithm to evaluate the path distance from $\gamma^{\text{v}}_{\text{s}}$ (the v-state of $st_{\text{s}}$) to $\gamma^{\text{v}}$ in the v-graph, DEWN can be invoked to find the v-path length of the RW path from $st_{\text{s}}$ to $\gamma^{\text{v}}$ following the RW constraint $C$. Moreover, the results for previous DEWN queries can be reused as pruning criteria in determining v-path lengths for subsequent candidate NN's, similar to the idea of Pruned Landmark Labeling \cite{TA15}. 

Moreover, ROAD \cite{ROAD} for $k$NN and range queries can incorporate MIL Range in DEWN to support DR. After ROAD partitions the network into multiple \textit{Regional Subnetworks} (Rnets), in addition to precomputing the path distances between each pair of border nodes of an Rnet, potential RW costs between pairs of border nodes can also be obtained by aggregating MIL Ranges on the v-edges. Furthermore, DEWN queries with different RW cost constraints can be issued to find multiple v-path lengths and the corresponding RW costs between border nodes for constructing multiple shortcut RW subpaths to bypass the Rnets. These shortcuts enable the traversal algorithm to bypass sparse areas containing few POIs without examining the detailed paths inside the Rnets, thereby achieving significant speedups in D$k$NN and DR queries. 

\section{Experiments}
\label{sec:exp}

In this section, we evaluate DEWN against several state-of-the-art algorithms on real datasets for various VR application scenarios. 

\subsection{Experiment Setup and Evaluation Plan}
We collect virtual maps for VR traveling and gaming scenarios, and physical maps from real indoor spatial layouts. For VR traveling, real spatial datasets (POIs and their spatial information) are extracted from OpenStreetMap\footnote{\url{https://www.openstreetmap.org/}.} where convex-hull corners of objects (e.g., buildings, lakes) are added to the location sets according to \cite{EM04} to build the visibility graphs. The numbers of \textit{virtual locations} are 40k in \textit{Seattle}, 79k in \textit{Boston}, 110k in \textit{Taipei}, and 564k in \textit{Yellow Stone}. For VR gaming, maze-puzzle layouts are collected from a maze generator project\footnote{\url{https://github.com/boppreh/maze}.} in which all turning corners are regarded as the locations in visibility graphs, and the number of locations ranges from 625 to 2025. Physical layouts are real indoor layouts\footnote{\url{https://bit.ly/2DZXQLv}, \url{https://bit.ly/2DZXS65}.} divided into up to 672 grid cells of 0.3m $\times$ 0.3m (body-sized areas), where a cell is either an empty cell (free space) or a part of an obstacle. For each combination of virtual and physical maps, 100 samples are generated with random start and destination locations. RW costs are derived according to the detection thresholds in \cite{FS10,SS13,FM16}. 

We compare DEWN with five baselines: Basic DP (DP), Minimum Cost Path (MCP), Constrained Labeling (COLA) \cite{SW16}, k-Shortest-Path (kSP) \cite{TA15}, and Simplified DEWN (S-DEWN). DP is the dynamic programming baseline proposed in Section \ref{sec:dp}. MCP focuses on finding the RW path with the minimum RW cost (instead of minimum length of v-path) via Dijkstra's algorithm on the loco-state space. COLA exploits only the v-graph with each v-edge associated with an \textit{estimated} RW cost\footnote{The estimated RW cost for a v-edge with length $l$ is set to $\beta(l)$ (the MIL upper bound) as a safe estimate, since COLA does not process the physical world.} then finds the shortest v-path such that the total \textit{estimated} RW cost along the v-path follows the RW cost constraint. 4) kSP first finds the top-$k$ minimum-length v-paths. Reset is adopted when obstacles in the physical world are reached. The v-path with the minimum RW cost among the $k$ candidates is returned. We also implement a more scalable variation of DEWN, namely S-DEWN, that 1) removes the orientation information from loco-states and 2) directly returns the reference path (without orientation information) as the solution. Note that only DEWN and DP have theoretical guarantees on both feasibility and solution quality, while MCP only ensures the feasibility, and COLA, kSP and S-DEWN have none. 

We first evaluate all algorithms in Section \ref{subsec:exp_diff_alg} with the following metrics: 1) v-path length, 2) incurred RW cost, 3) average feasibility (proportion of solutions satisfying the RW cost constraint), and 4) running time (in seconds). Afterward, Section \ref{subsec:exp_scenario} evaluates all methods in various VR scenarios: urban traveling (\textit{Seattle}), natural traveling (\textit{Yellow Stone}) and a maze gaming map. Section \ref{subsec:exp_on_off} examines the efficacy of various pruning and ordering strategies proposed in DEWN. Sections \ref{subsec:exp_sensitivity} and \ref{subsec:large_exp} conducts sensitivity and scalability tests on various query parameters. Finally, to understand users' behaviors in real VR applications, a user study is detailed in Section \ref{subsec:userstudy}. The default parameters are $k=5$ for kSP, and $\epsilon = 0.1$ for all algorithms. All algorithms are implemented on an HP DL580 Gen 9 server with an Intel 2.10GHz CPU and 1TB RAM.

\subsection{Comparison of Different Algorithms} \label{subsec:exp_diff_alg}

\begin{figure}[t]
\centering
  \begin{subfigure}[t]{0.49\columnwidth}
    \includegraphics[width=\linewidth, height=2.35cm]{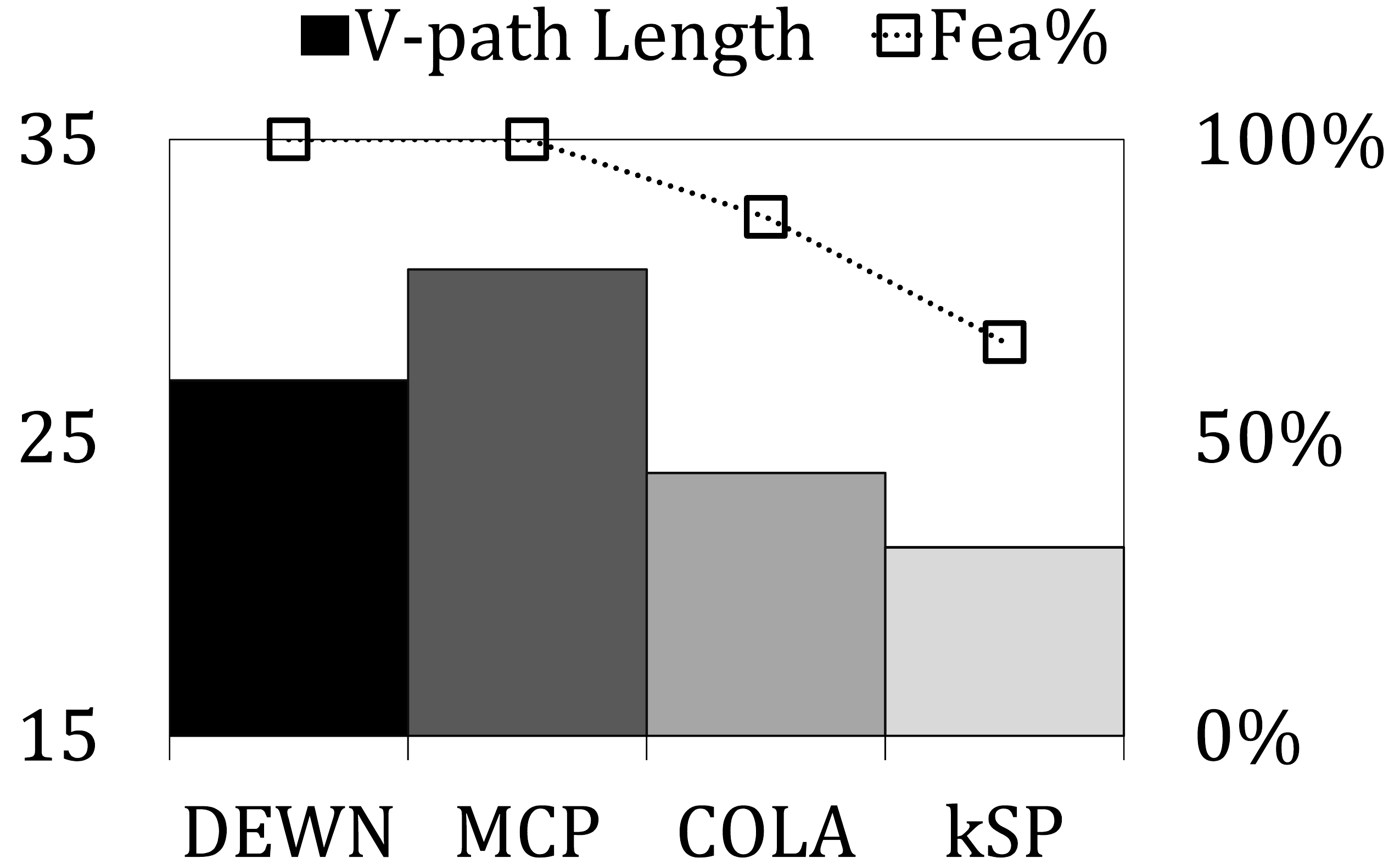}
    \caption{ \centering  V-path length and fea. ratio (\textit{Boston}).}
    \label{fig:bos_fea_obj}
  \end{subfigure}
  \hfill
  \begin{subfigure}[t]{0.49\columnwidth}
    \includegraphics[width=\linewidth, height=2.35cm]{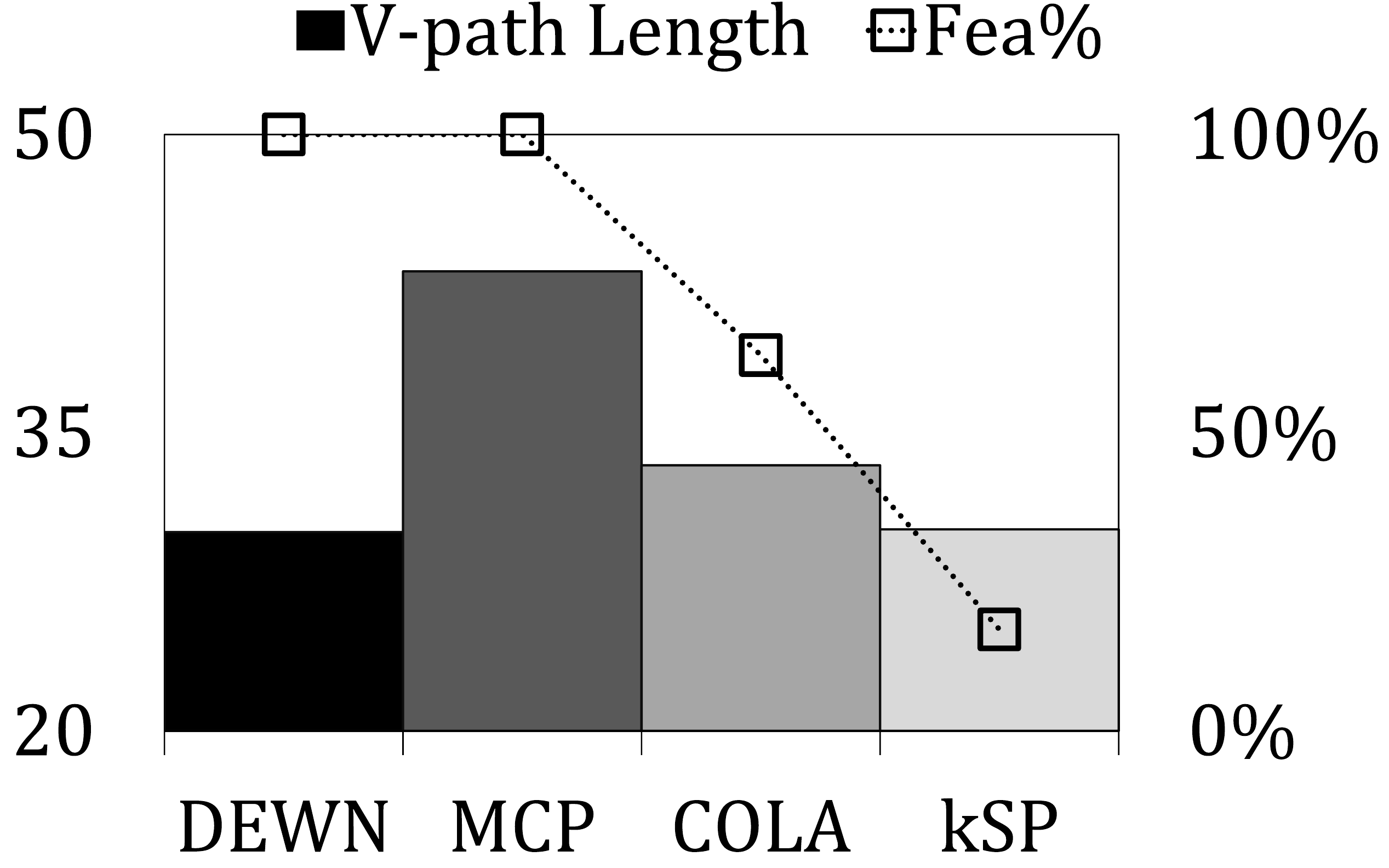}
    \caption{ \centering V-path length and fea. ratio (\textit{Taipei}).}
    \label{fig:tp_fea_obj}
  \end{subfigure}\\
  
  \begin{subfigure}[t]{0.49\columnwidth}
    \includegraphics[width=\linewidth, height=2.35cm]{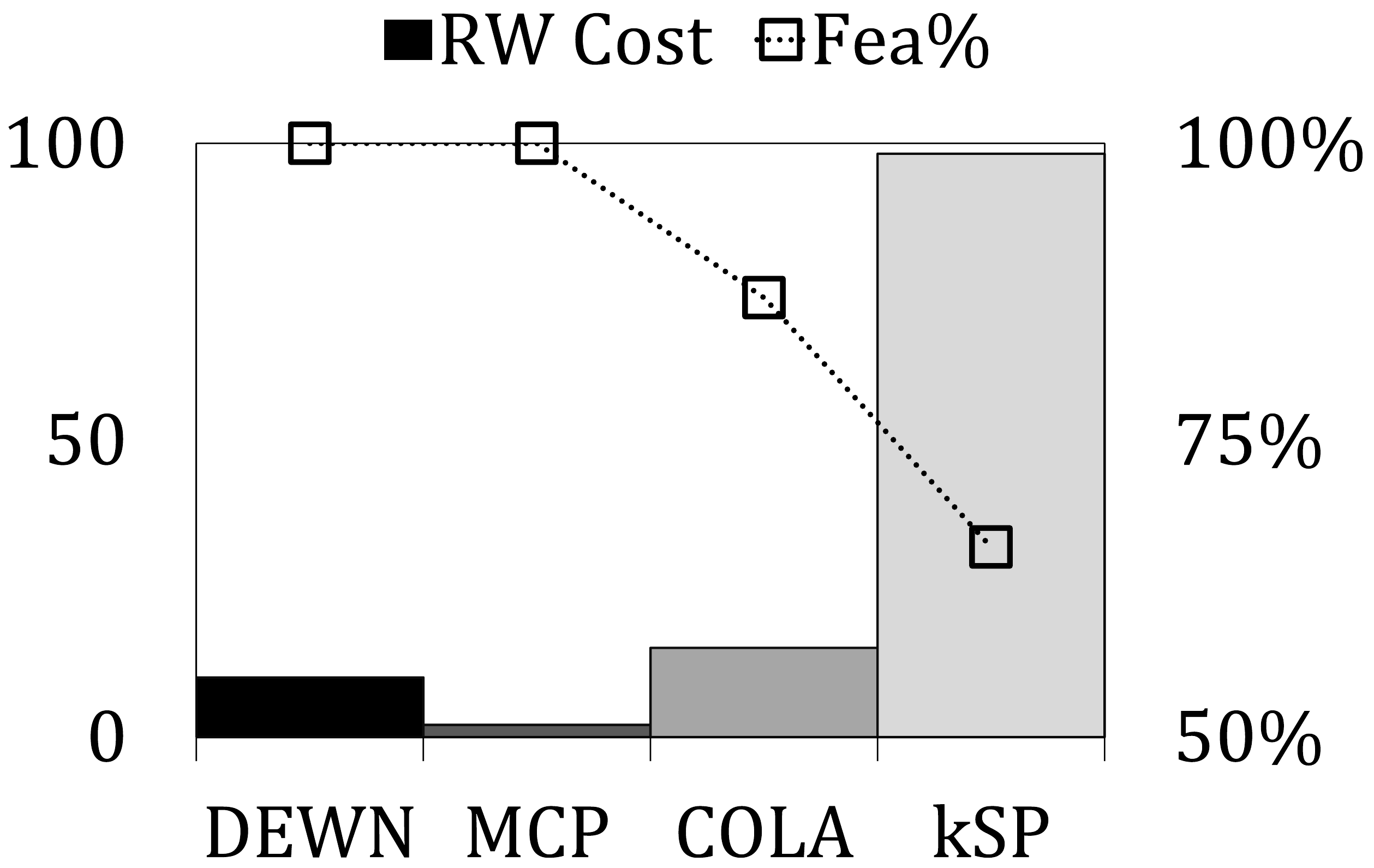}
    \caption{ \centering RW cost and fea. ratio (\textit{Boston}).}
    \label{fig:bos_fea_cost}
  \end{subfigure}
  \hfill
  \begin{subfigure}[t]{0.49\columnwidth}
    \includegraphics[width=\linewidth, height=2.35cm]{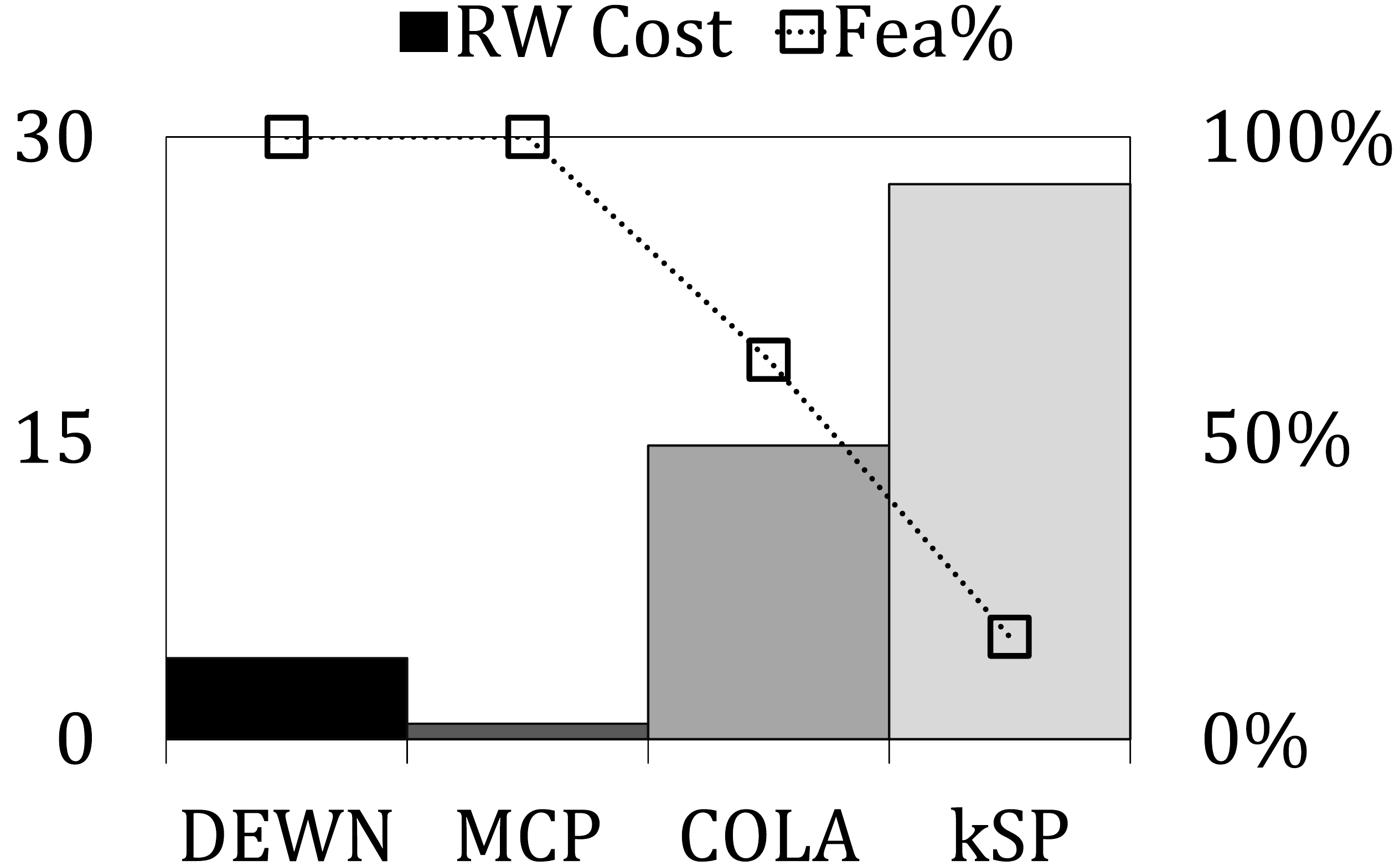}
    \caption{ \centering RW cost and fea. ratio (\textit{Taipei}).}
    \label{fig:tp_fea_cost}
  \end{subfigure}\\

  \begin{subfigure}[t]{0.49\columnwidth}
    \includegraphics[width=\linewidth, height=2.35cm]{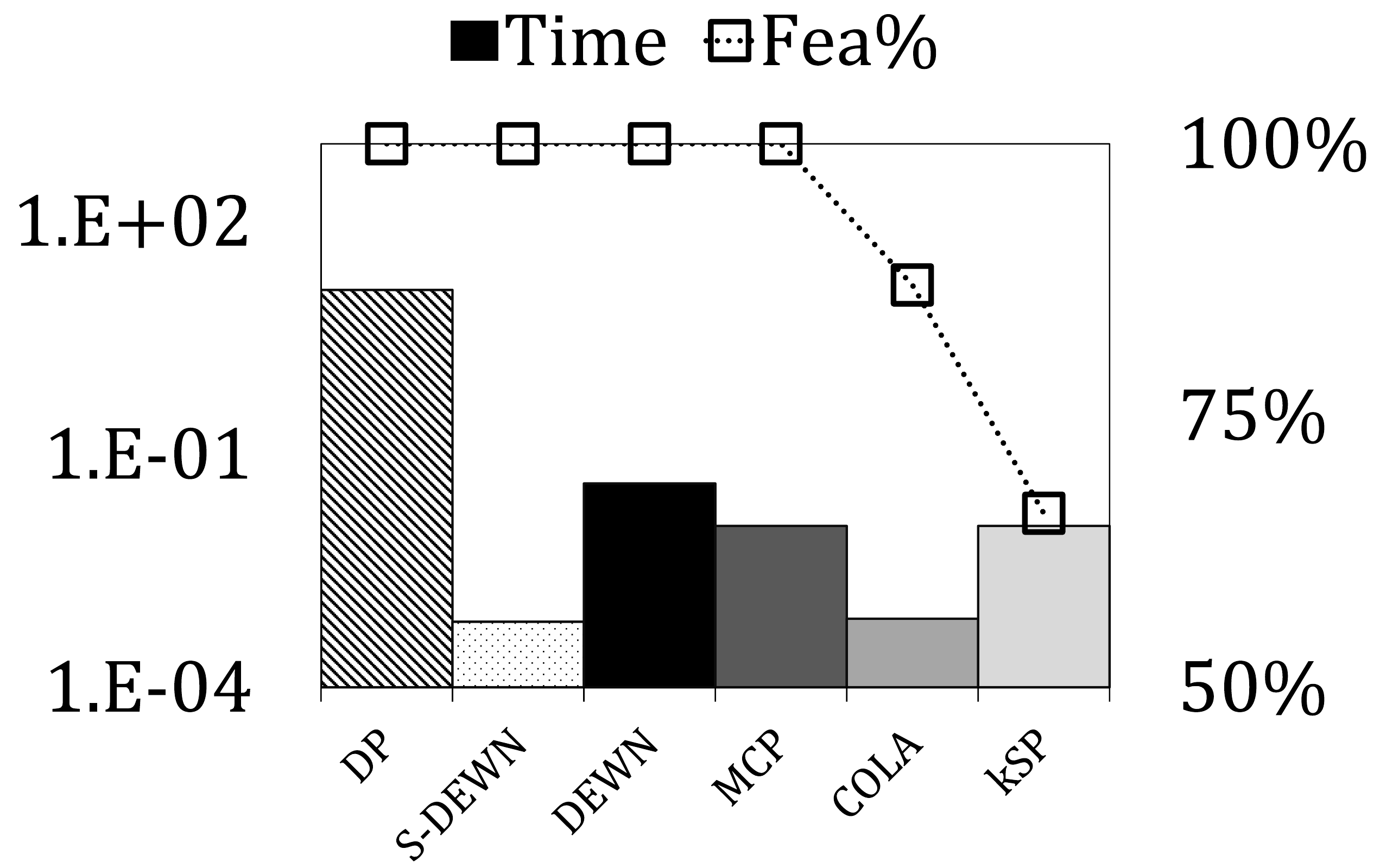}
    \caption{ \centering Running time (\textit{Boston}).}
    \label{fig:bos_fea_time}
  \end{subfigure}
  \hfill
  \begin{subfigure}[t]{0.49\columnwidth}
    \includegraphics[width=\linewidth, height=2.35cm]{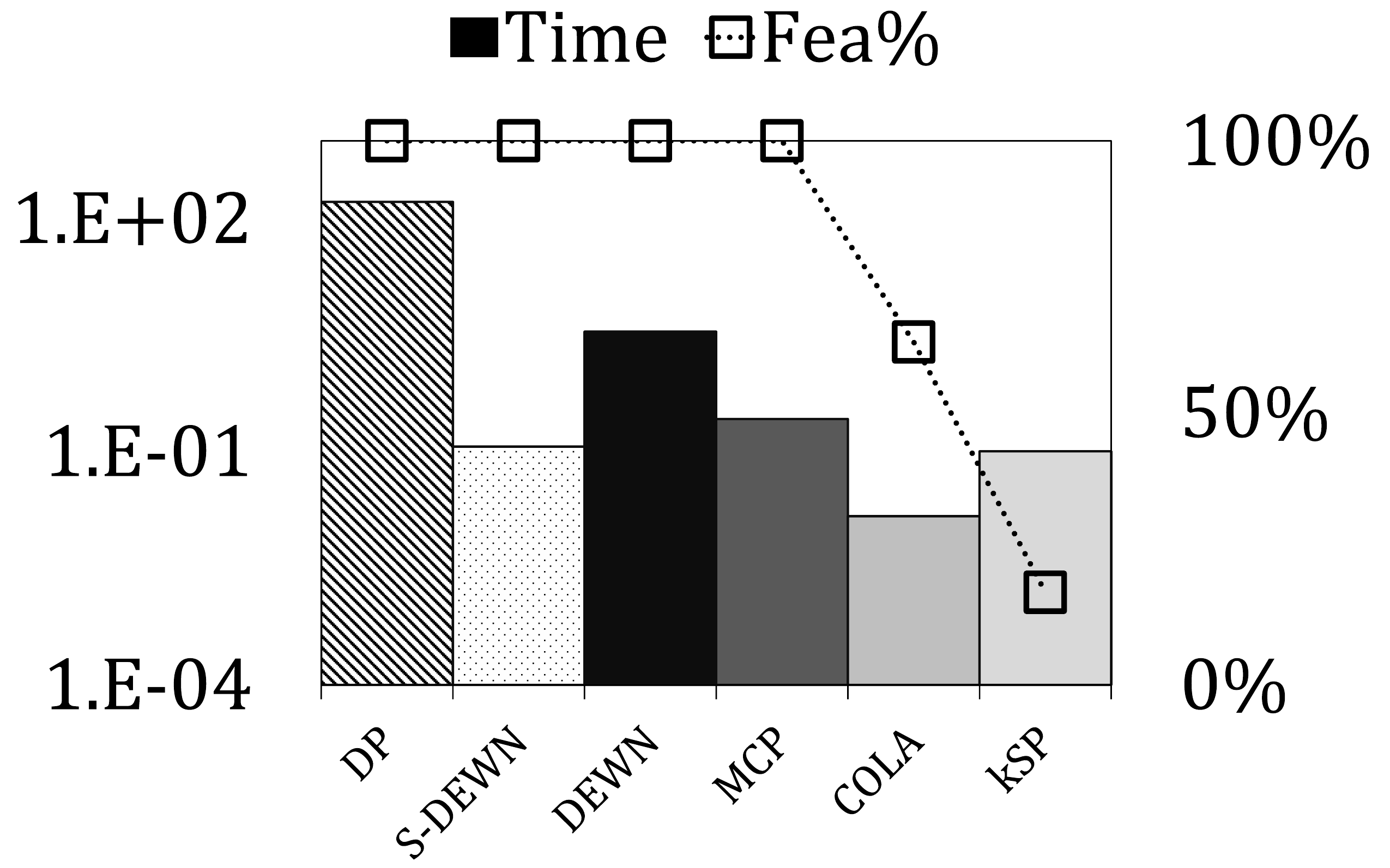}
    \caption{ \centering Running time (\textit{Taipei}).}
    \label{fig:tp_fea_time}
  \end{subfigure}
  \hfill 
  \centering \caption{Experimental results in city maps.}
  \label{fig:general_behavior}
\end{figure}

Figure \ref{fig:bos_fea_obj} and \ref{fig:tp_fea_obj} compare the v-path length and feasibility of all algorithms.\footnote{V-path lengths are averaged only from feasible solutions. S-DEWN and DP share similar results with DEWN and are not shown here.} DEWN, DP, and MCP achieve 100\% feasibility by carefully examining both worlds. Compared with MCP, DEWN generates shorter v-paths by leveraging LR to properly allocate the RW cost budget. The feasibility of kSP is poor because the RW cost is not carefully reduced during path search, as shown in Figure \ref{fig:bos_fea_cost} and \ref{fig:tp_fea_cost}. Figure \ref{fig:bos_fea_time} and \ref{fig:tp_fea_time} show the running time of all algorithms. kSP and COLA are efficient since they are designed for a single world (v-graph here) without ensuring the feasibility, whereas DP, DEWN, and MCP explore loco-states on both worlds. Compared with DP, DEWN generates feasible solutions with much smaller time because the pruning strategies effectively trim off redundant loco-states. S-DEWN is even faster as it only invokes dual-world simplification and reference path generating phases on v-graph, but it does not provide any theoretical guarantee.

\subsection{Comparisons on Different VR Scenarios} \label{subsec:exp_scenario}
\begin{figure}[t]
\centering
  \begin{subfigure}[t]{0.49\columnwidth}
    \includegraphics[width=\linewidth, height=2.35cm]{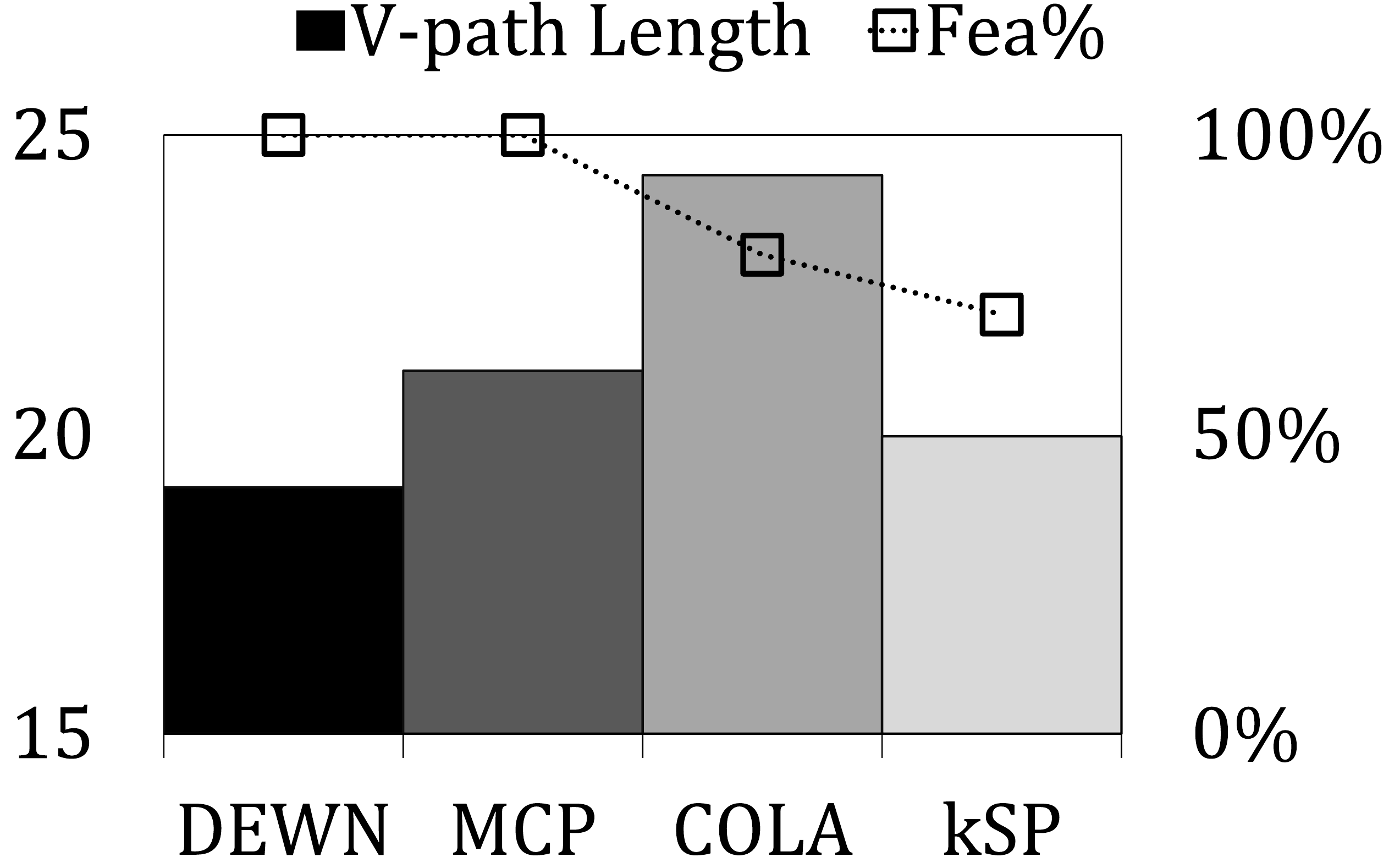}
    \caption{ \centering  V-path length and fea. ratio (\textit{Seattle}).}
    \label{fig:urban_fea_obj}
  \end{subfigure}
  \hfill
  \begin{subfigure}[t]{0.49\columnwidth}
    \includegraphics[width=\linewidth, height=2.35cm]{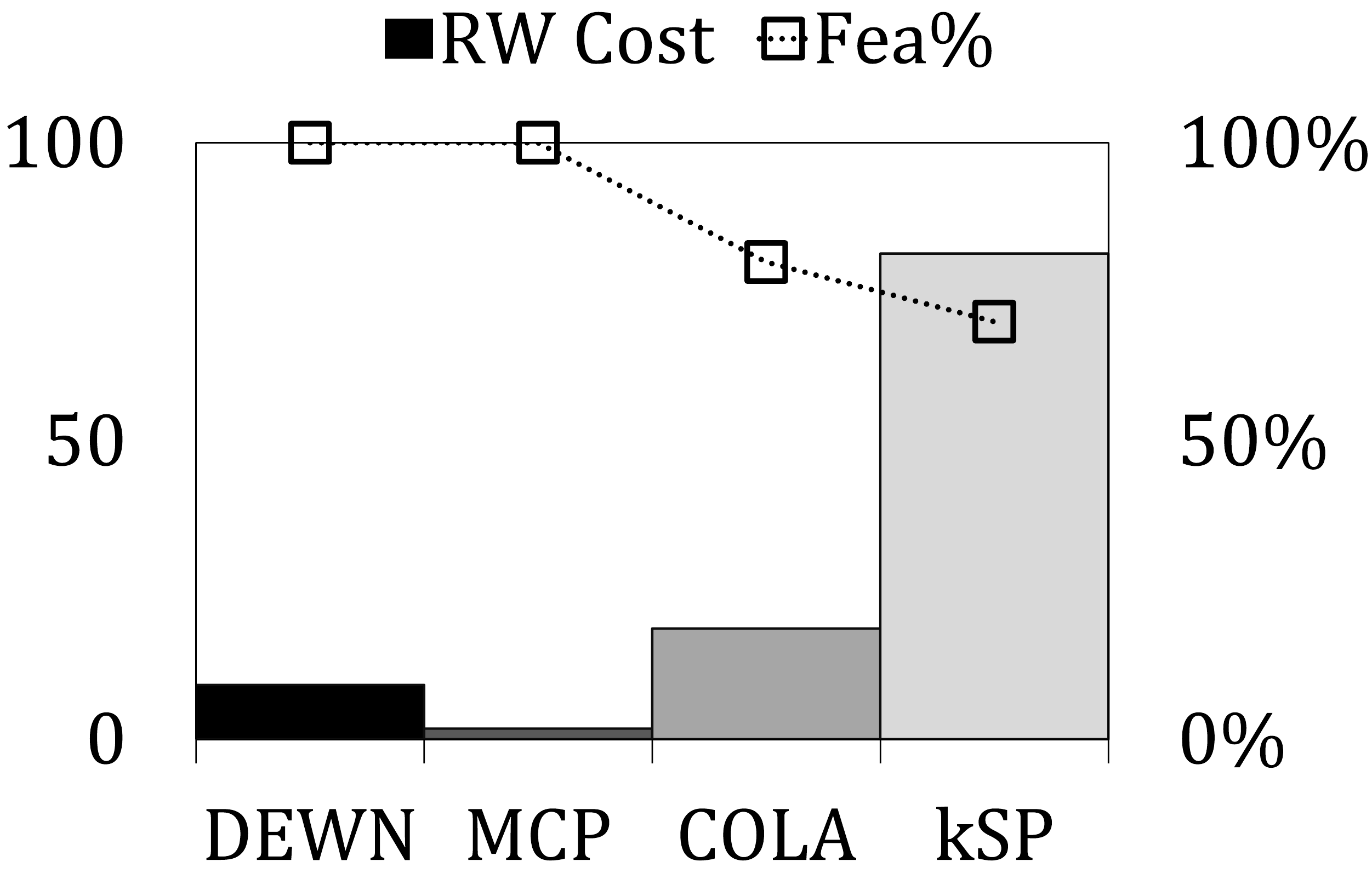}
    \caption{ \centering RW cost and fea. ratio (\textit{Seattle}).}
    \label{fig:urban_fea_cost}
  \end{subfigure}\\
  
  \begin{subfigure}[t]{0.49\columnwidth}
    \includegraphics[width=\linewidth, height=2.35cm]{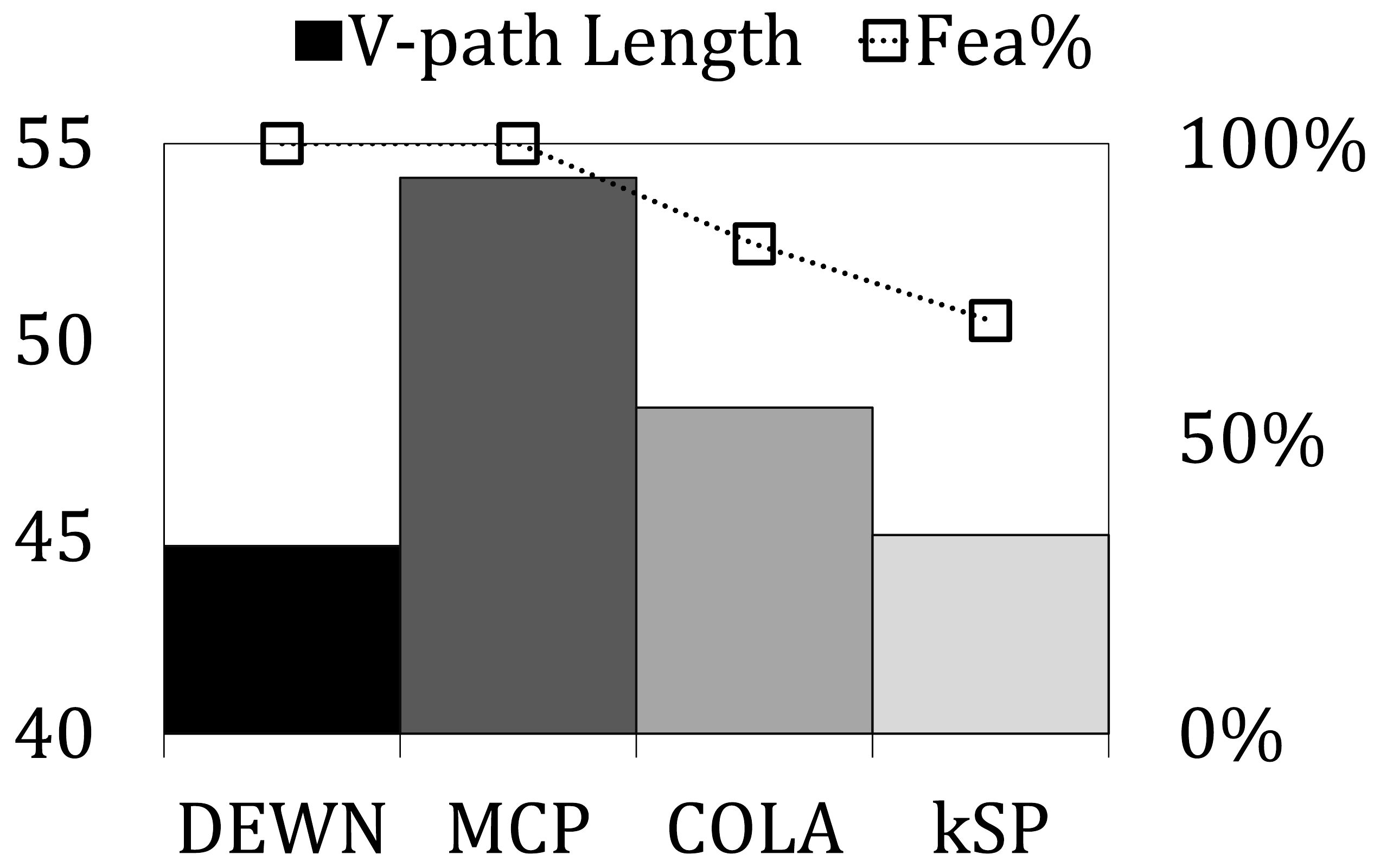}
    \caption{ \centering V-path length and fea. ratio (\textit{Yellow Stone}).}
    \label{fig:nature_fea_obj}
  \end{subfigure}
  \hfill
  \begin{subfigure}[t]{0.49\columnwidth}
    \includegraphics[width=\linewidth, height=2.35cm]{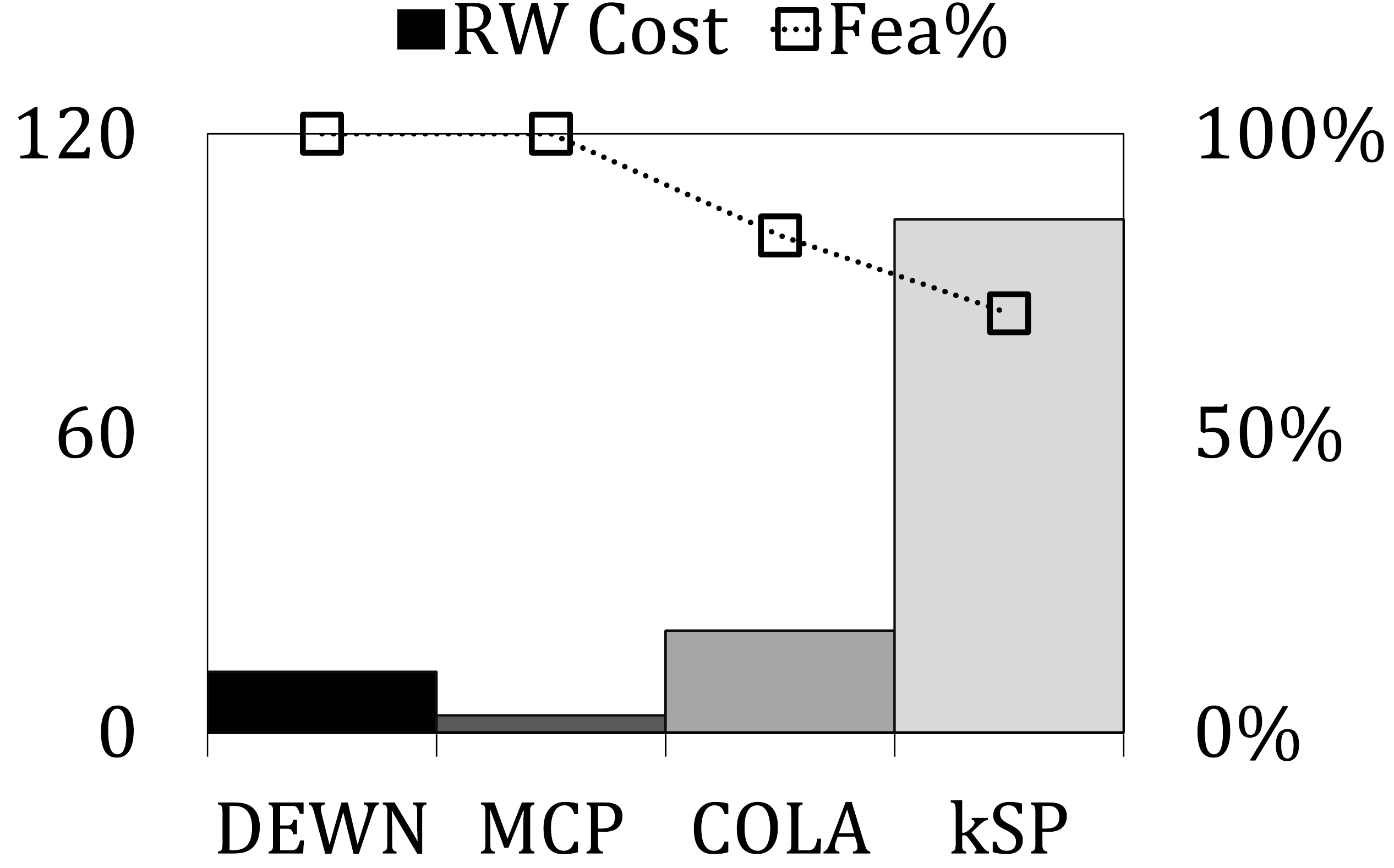}
    \caption{ \centering RW cost and fea. ratio (\textit{Yellow Stone}).}
    \label{fig:nature_fea_cost}
  \end{subfigure}\\

  \begin{subfigure}[t]{0.49\columnwidth}
    \includegraphics[width=\linewidth, height=2.35cm]{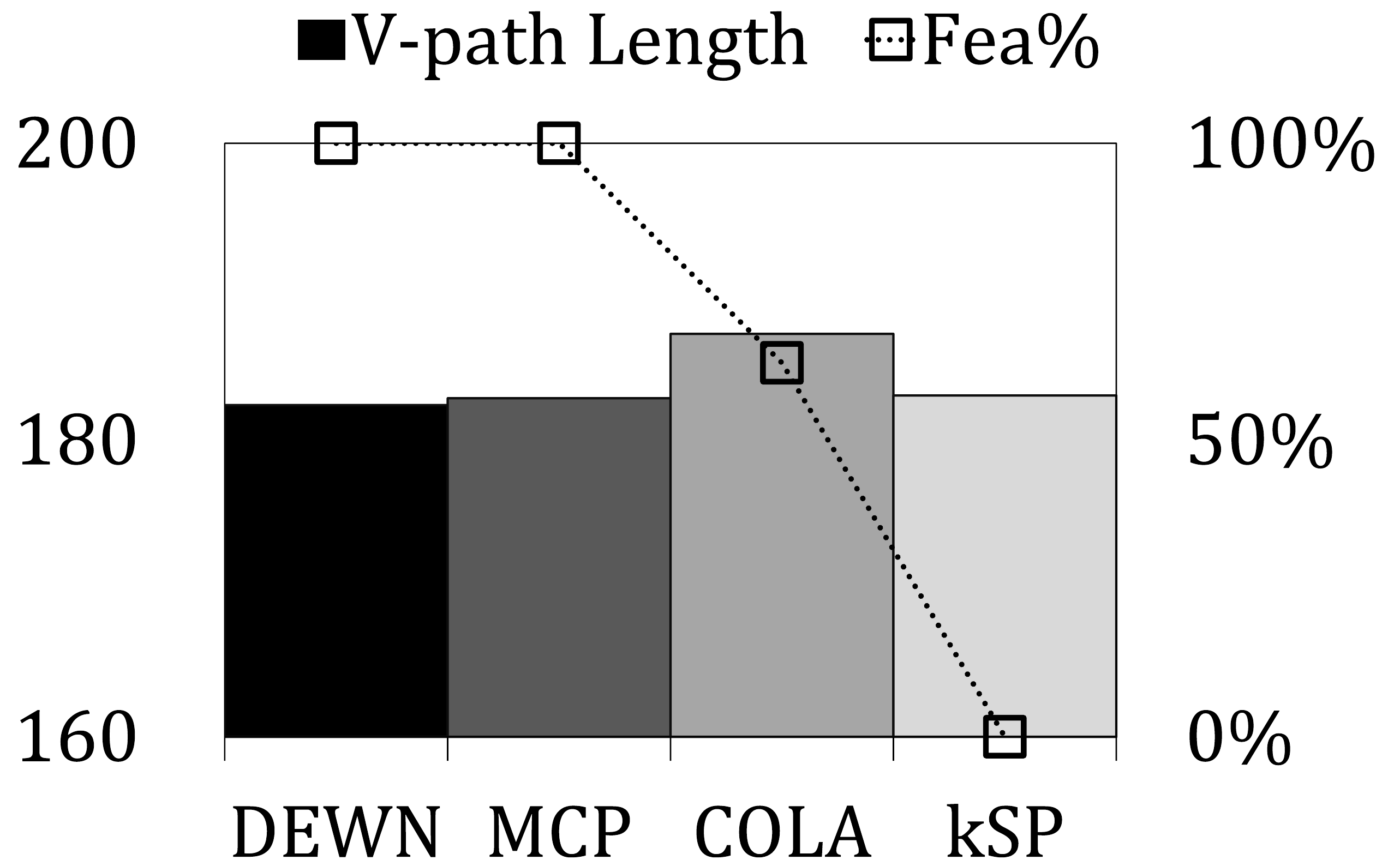}
    \caption{ \centering V-path length and fea. ratio (\textit{Maze}).}
    \label{fig:maze_fea_obj}
  \end{subfigure}
  \hfill
  \begin{subfigure}[t]{0.49\columnwidth}
    \includegraphics[width=\linewidth, height=2.35cm]{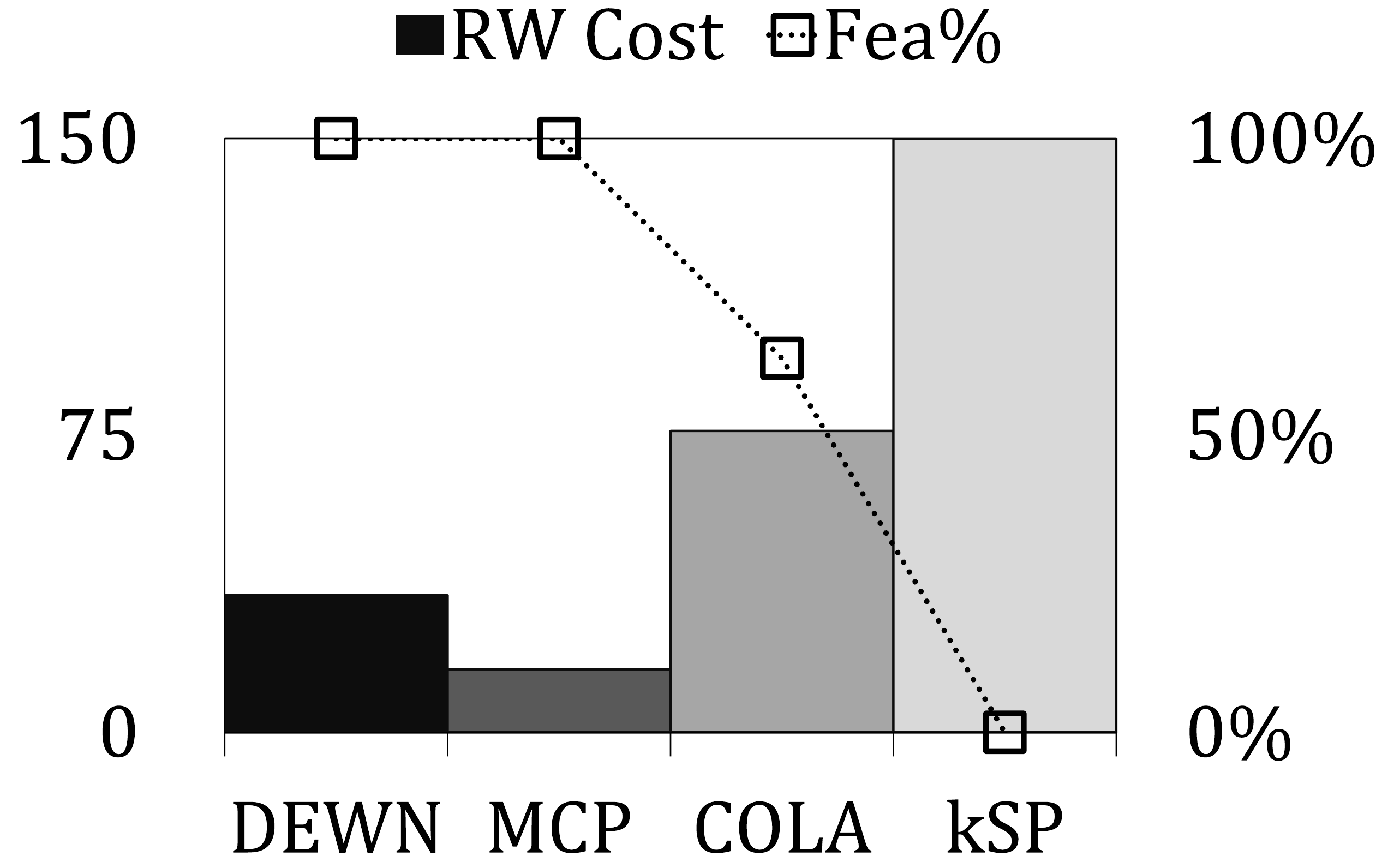}
    \caption{ \centering RW cost and fea. ratio (\textit{Maze}).}
    \label{fig:maze_fea_cost}
  \end{subfigure}
  \centering \caption{Comparisons on different VR scenarios.}
  \label{fig:diff_scenarios}
\end{figure}

Figure \ref{fig:diff_scenarios} compares \textit{Seattle}, \textit{Yellow Stone} and \textit{Maze} (scaled to similar sizes for a fair comparison) with open space ratios of 50.1\%, 98.2\%, and 47.7\%, respectively.\footnote{The result of S-DEWN is similar to DEWN and thereby not shown, and DP does not scale up here.} Note that the v-path lengths in Figures \ref{fig:urban_fea_obj}, \ref{fig:nature_fea_obj} and \ref{fig:maze_fea_obj} are averaged only from feasible solutions. Therefore, under the same RW cost constraint, feasible v-paths in \textit{Seattle} are shorter than those in \textit{Yellow Stone} since \textit{Seattle} involves more buildings and obstacles and thereby requires more RW operations. The feasibility ratios of COLA and kSP are much lower in \textit{Maze} (especially, 0\% for kSP) since there are only a few v-paths connecting the source and destination, and the challenge thus becomes identifying the p-path following the RW cost constraint, as the v-paths returned by all algorithms are similar. Figures \ref{fig:urban_fea_cost}, \ref{fig:nature_fea_cost} and \ref{fig:maze_fea_cost} manifest that single-world COLA and kSP are difficult to meet the RW cost constraint because the physical-world layouts are not investigated. In contrast, MCP focuses on reducing the RW cost, but its v-paths are longer than those of DEWN, especially in \textit{Yellow Stone} where the virtual world contains mainly free spaces and thus easier to be optimized.

\begin{figure}[t]
\centering
  \begin{subfigure}[b]{0.48\columnwidth}
    \includegraphics[width=\linewidth, height=2.35cm]{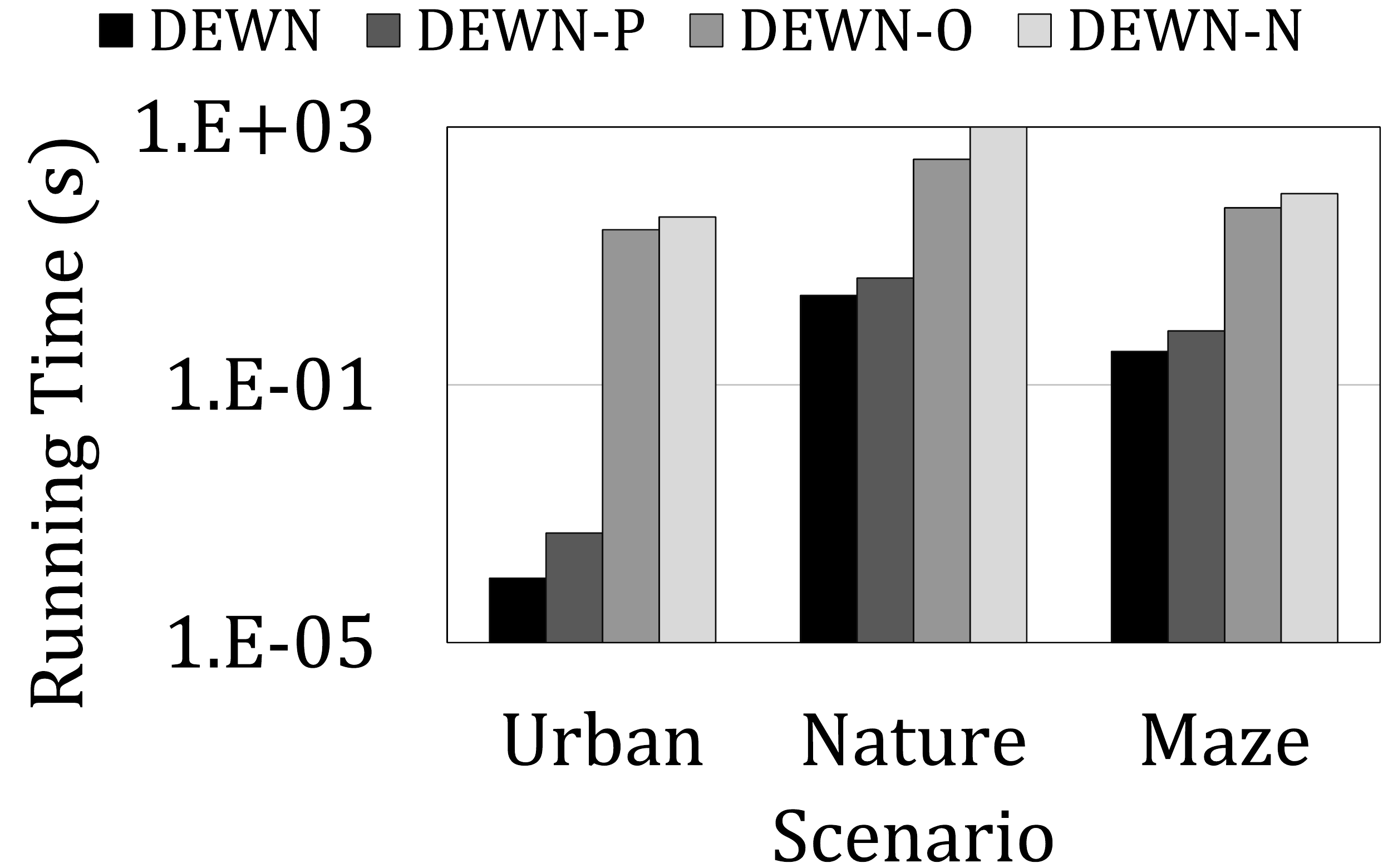}
    \caption{ \centering Effects on different \newline scenarios.}
    \label{fig:switch_scenarios}
  \end{subfigure}
  \hfill 
  \begin{subfigure}[b]{0.48\columnwidth}
    \includegraphics[width=\linewidth, height=2.35cm]{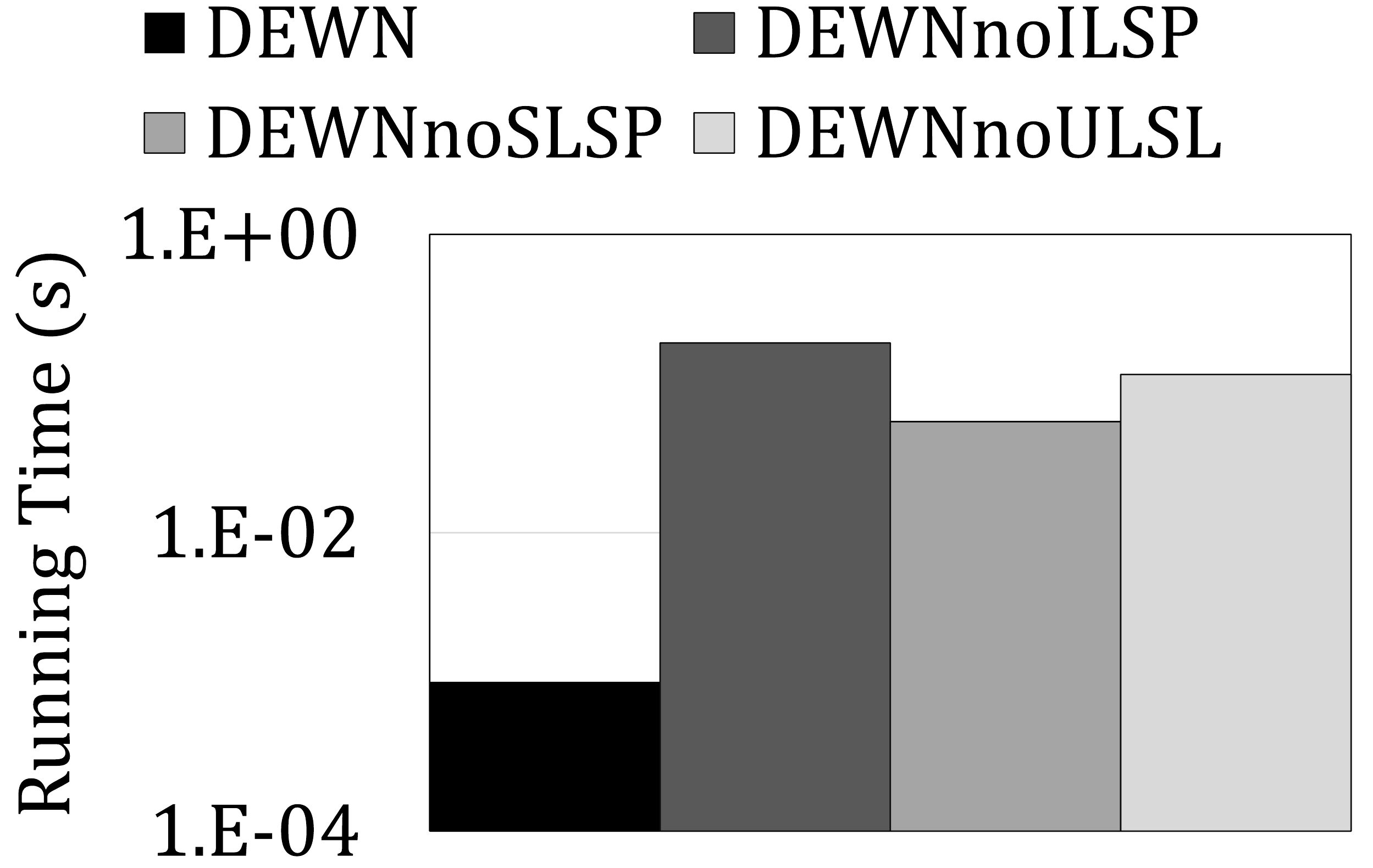}
    \caption{ \centering Effects of pruning \newline (\textit{Seattle}).}
    \label{fig:prune-urban}
  \end{subfigure}
  \hfill
  \begin{subfigure}[b]{0.48\columnwidth}
    \includegraphics[width=\linewidth, height=2.35cm]{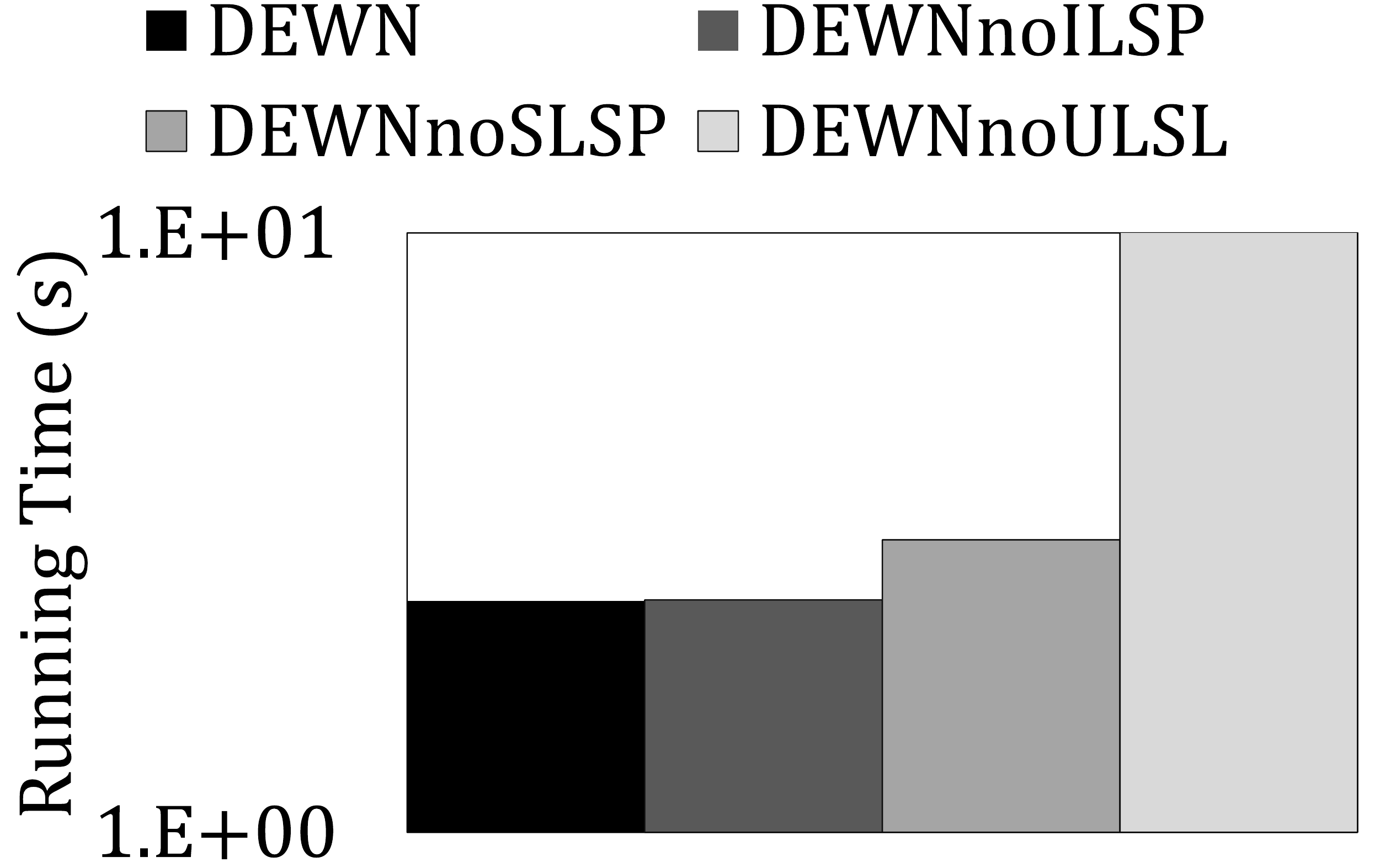}
    \caption{ \centering Effects of pruning \newline (\textit{Yellow Stone}).}
    \label{fig:prune-nature}
  \end{subfigure}
  \hfill
  \begin{subfigure}[b]{0.48\columnwidth}
    \includegraphics[width=\linewidth, height=2.35cm]{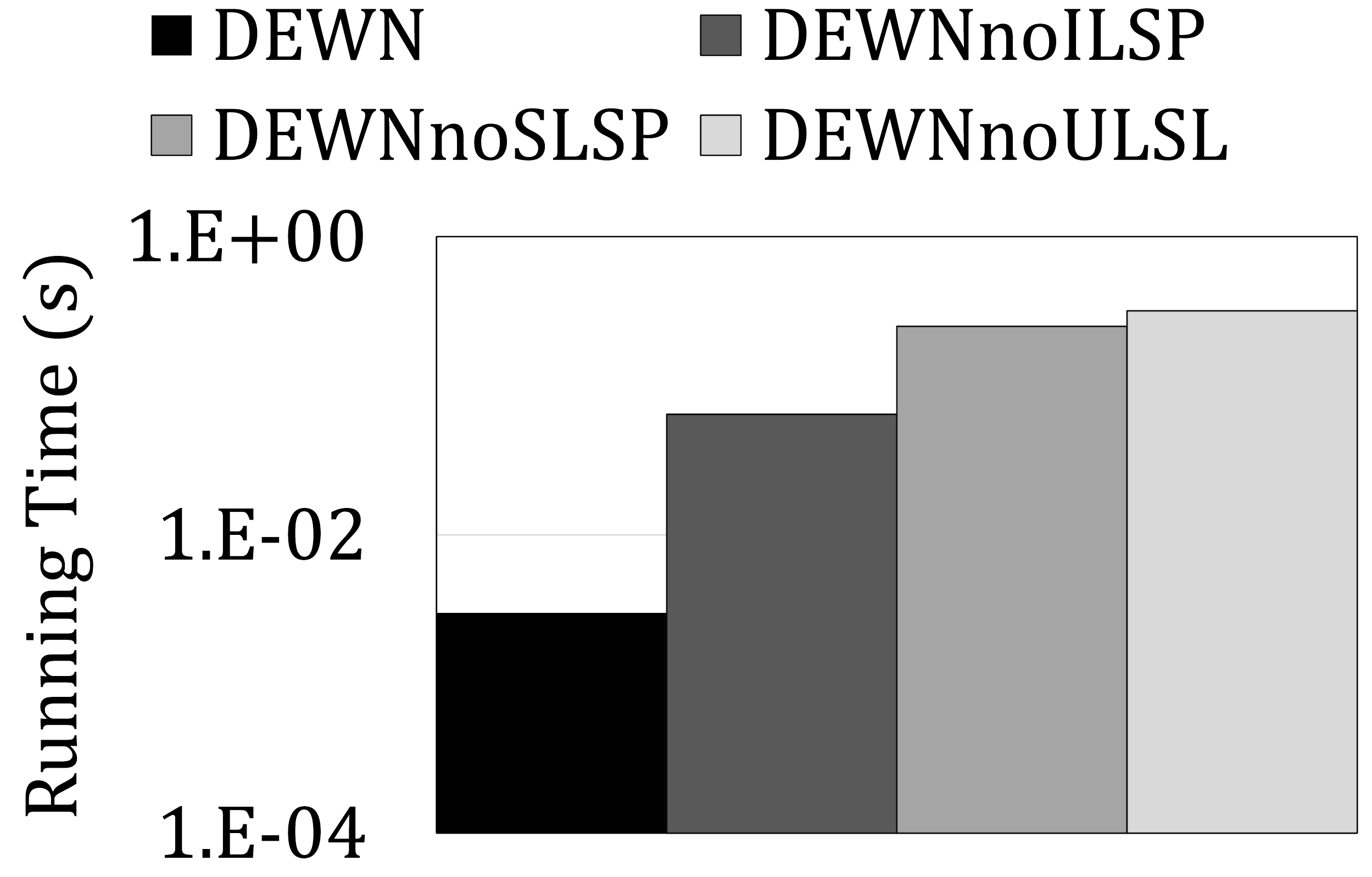}
    \caption{ \centering Effects of pruning \newline (\textit{Maze}).}
    \label{fig:prune-maze}
  \end{subfigure}
  \hfill
  \centering \caption{Effects of pruning and ordering strategies.}
  \label{fig:pruning_ordering}
\end{figure}

\subsection{Evaluation of Ordering and Pruning} \label{subsec:exp_on_off}
Figure \ref{fig:pruning_ordering} shows the efficacy of the pruning strategies (ILSP, SLSP and ULSL) and the ordering strategies (TECO, PWSO and VWNO) where DEWN-P includes only the pruning strategies, DEWN-O incorporates only the ordering strategies, and the na\"ive DEWN-N applies none of them. Figure \ref{fig:switch_scenarios} manifests that all strategies effectively speedup DEWN, and the pruning strategies play more dominant roles in improving efficiency since a massive number of loco-states are effectively removed. In contrast, without pruning, the merits of ordering for DEWN-O are not unveiled because reference paths are not leveraged to truncate redundant search. Figures \ref{fig:prune-urban}, \ref{fig:prune-nature}, and \ref{fig:prune-maze} further show the efficacy of the pruning strategies in different scenarios with the leave-one-out setting. For example, DEWNnoILSP employs only SLSP and ULSL without ILSP. ULSL is very important in \textit{Yellow Stone} since it has more open space, and the v-paths thereby include many long straight segments. Therefore, the loco-state labels in PPNP are acquired in a more straightforward fashion, instead of being iteratively improved. For \textit{Maze}, since there are much fewer v-path candidates, the reference path length tends to be close to the optimal length. Therefore, SLSP can effectively discard many redundant loco-states in most parts of the maze. In contrast, \textit{Seattle} consists of grid-based street layouts with abundant possibilities for v-paths. Therefore, ILSP is more important because it leverages MIL Range to estimate the RW costs in zigzagging RW paths.

\subsection{Sensitivity Test on Query Parameters} \label{subsec:exp_sensitivity}

\begin{figure}[t]
  \begin{subfigure}[b]{0.48\columnwidth}
    \includegraphics[width=\columnwidth, height=2.35cm]{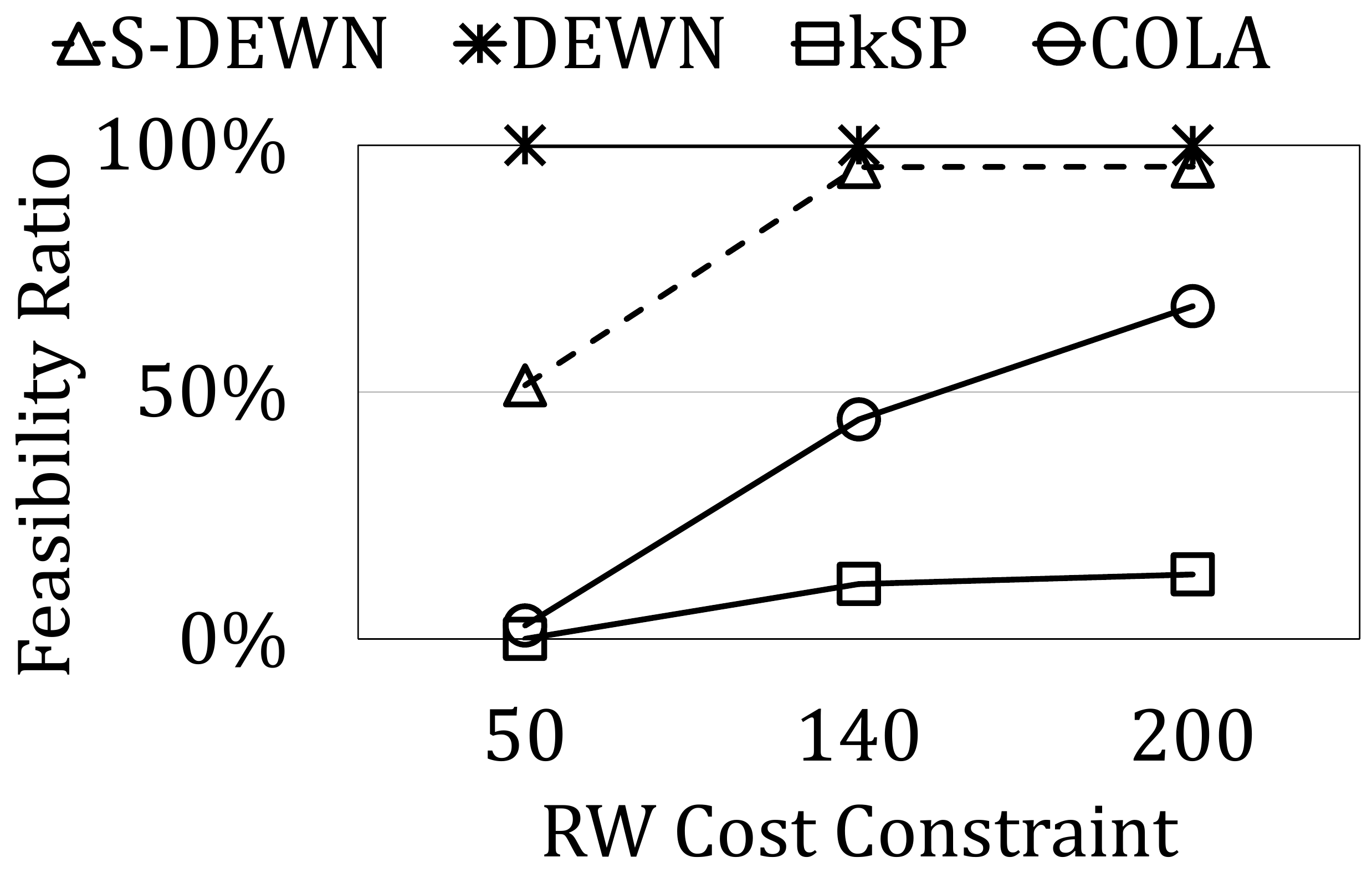}
    \caption{Fea. ratio on diff. $C$.}
    \label{fig:Diff_c}
  \end{subfigure}
  \hfill 
  \begin{subfigure}[b]{0.48\columnwidth}
    \includegraphics[width=\columnwidth, height=2.35cm]{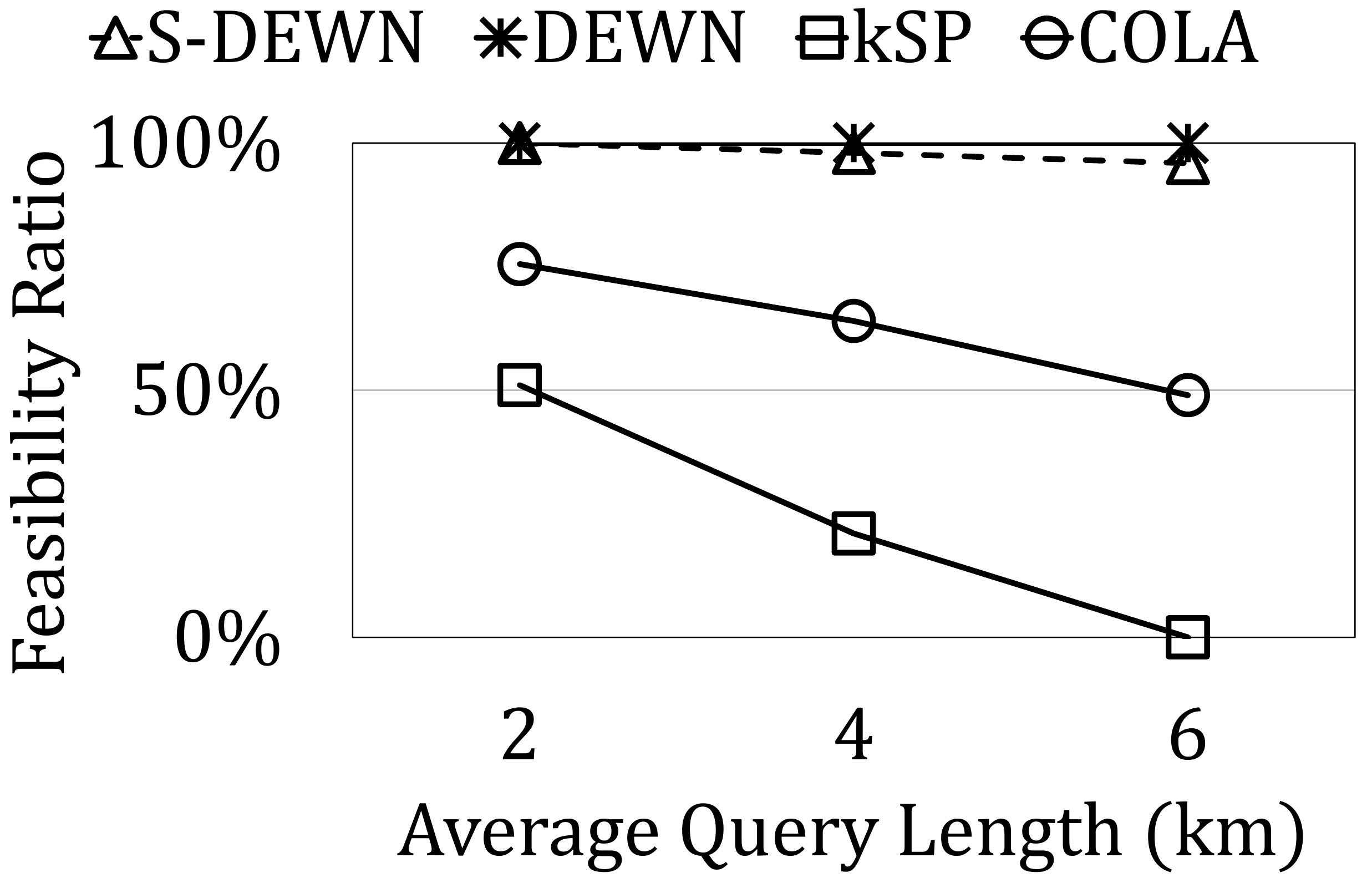}
    \caption{Fea. ratio on diff. $\ell_{\text{q}}$.}
    \label{fig:Diff_ql}
  \end{subfigure}
  \caption{Sensitivity test on query parameters.}
  \label{fig:diff_parameters}
\end{figure}

Figure \ref{fig:diff_parameters} evaluates all algorithms with varied query parameters. Figure \ref{fig:Diff_c} indicates that the feasibility improves with an increasing RW cost constraint $C$, where the results of MCP and DEWN are the same (thereby with only DEWN shown here). DEWN is always feasible since the Pruning and Path Navigation phase guarantees the solution feasibility, but other approaches have difficulty in finding a feasible solution, especially for a small $C$. COLA outperforms kSP because kSP focuses on minimizing the v-path length during the path search. Figure \ref{fig:Diff_ql} compares different algorithms with various straight line distances $\ell_q$ between the start and destination locations in the virtual world. kSP and COLA have difficulty finding feasible solutions when they are far way with more POIs between them, implying less feasible for larger maps. By contrast, DEWN and S-DEWN effectively find feasible RW paths because IDWS derives promising multipliers to balance the v-path length and the RW cost.

\subsection{Scalability Test on Large Virtual Maps} \label{subsec:large_exp}

\begin{figure}[t]
  \begin{subfigure}[b]{0.48\columnwidth}
    \includegraphics[width=\columnwidth, height=2.35cm]{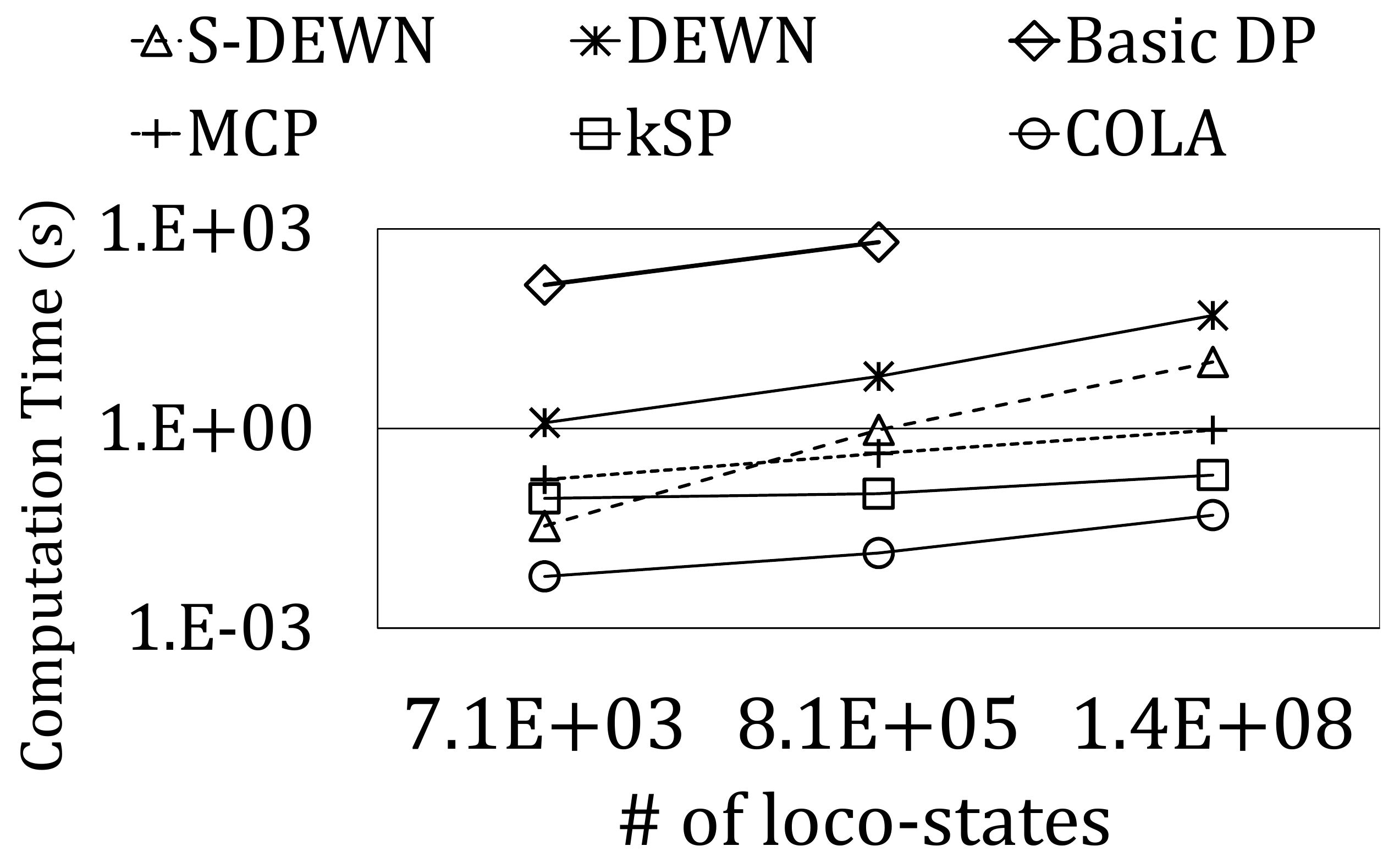}
    \caption{Efficiency on diff. maps.}
    \label{fig:Vmap_time}
  \end{subfigure}
  \hfill 
  \begin{subfigure}[b]{0.48\columnwidth}
    \includegraphics[width=\columnwidth, height=2.35cm]{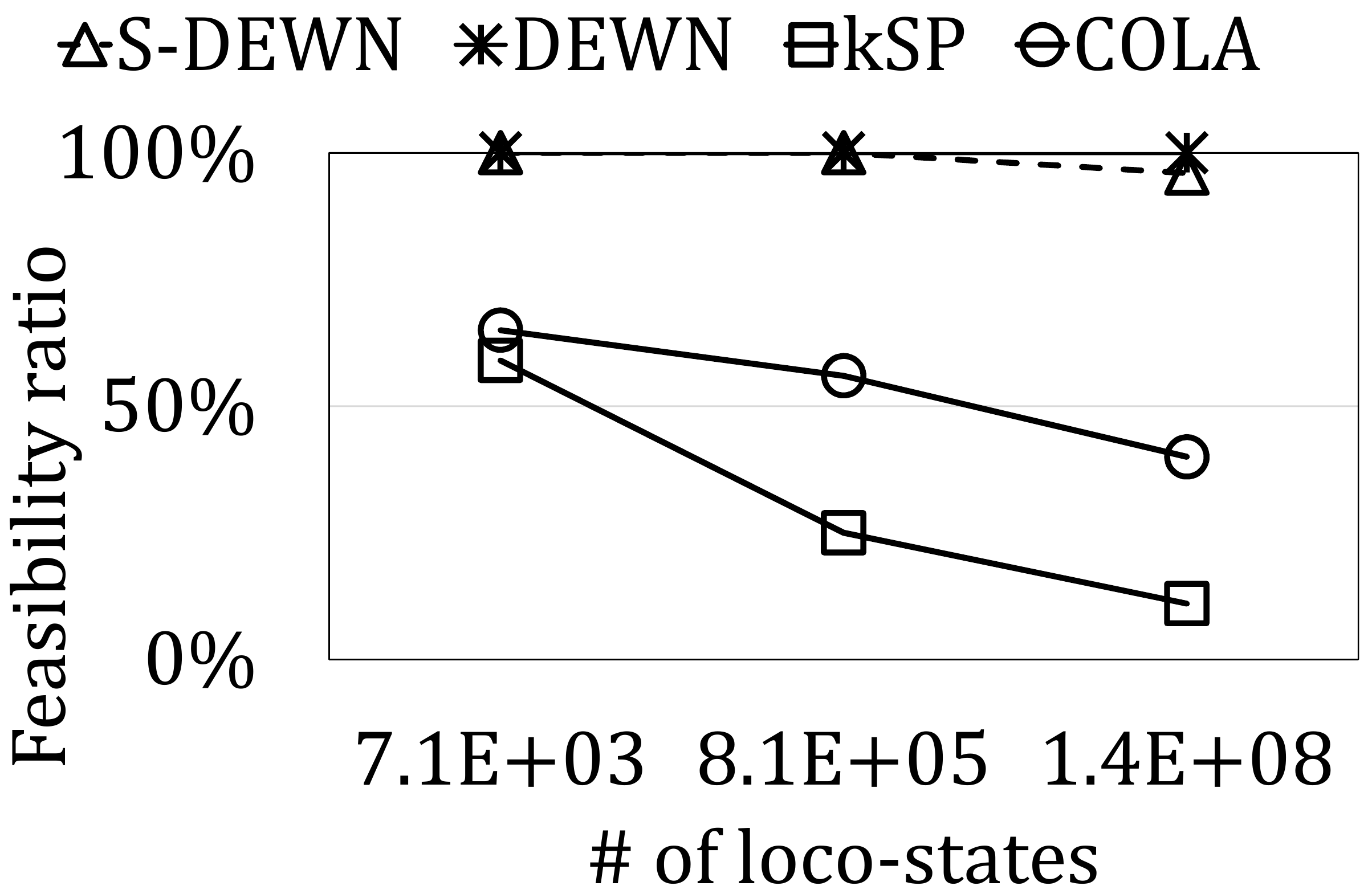}
    \caption{Fea. ratio on diff. maps.}
    \label{fig:Vmap_feas}
  \end{subfigure}
  \caption{Scalability test on large virtual maps.}
  \label{fig:scalability}
\end{figure}

Figure \ref{fig:scalability} compares the scalability of all methods in virtual maps with different sizes (i.e., number of virtual locations), which in turn results in different numbers of loco-states. The feasibilities of single-world methods kSP and COLA drop as the number of locations increases because more corners will appear in the virtual world and thus require more complicated RW paths. Figure \ref{fig:Vmap_time} indicates that DP is extremely unscalable. S-DEWN has comparable efficiency with single world methods kSP and COLA, which are implemented with specialized index structures. In regards to feasibility, S-DEWN and DEWN consistently outperforms single world methods.

\subsection{User Study}
\label{subsec:userstudy}

We conduct a user study to understand users' behaviors while they walk along paths returned by different algorithms in a VR maze, built by Unity 2017.3.1f1 and SteamVR Plugin 1.2.2, for users wearing hTC VIVE HMD. The VE is a 3D Pac-man arcade game implemented like \cite{PacMan16} but incorporated with real walking experience, where users are navigated along precomputed RW paths. Following the user-study setting of VR in Computer Graphics and HCI research \cite{EH14, MA16, HC17}, we recruited 30 users to test our developed system and provide feedback on immersion according to Presence Questionnaire \cite{MA16}, including important questions such as ``How much are your experiences in the virtual environment consistent with your real-world experiences?'' We also measure the dizziness according to Simulator Sickness Questionnaire (SSQ), which evaluates symptoms such as headaches, vertigo, and nausea, and is widely used in measuring motion sickness in VR applications \cite{CN12, EH14}. 

The experiment is described to the users as a single-user VR arcade game similar to the classical Pac-Man. Users are asked to actually walk along predefined paths in the virtual world and touch red reward pellets along the path in order to collect game points. The specified path is shown as a sequence of blue guide wires. The reward pellets actually play the role of anchor points; we update the user's location information when the device detects the user ``touching'' the cube, and apply different RW operation gains accordingly. Every user experiences multiple paths that vary in both total length, total RW cost, kinds of used RW operations, and also the virtual environment. We ask the users to provide feedback after finishing each path. 

\sloppy An example of the virtual and physical worlds is shown in Figure \ref{fig:user_vmap} and \ref{fig:user_pmap}, respectively. The virtual world is a maze environment while the physical world is an office equipped with two VIVE Lighthouse tracking base stations in the corners. The numbers represent the sequence of states in v-path and p-path of the example, where S and T are the start and destination locations, respectively. The Rotation, Reset, and Translation operations are labeled as yellow stars, green triangles, and red lines, respectively. No RW operation is involved for the white circles and blue lines in this example. 
\begin{figure}[t]
\centering
  \begin{subfigure}[b]{0.6\columnwidth}
    \centering
    \includegraphics[width=0.6\columnwidth]{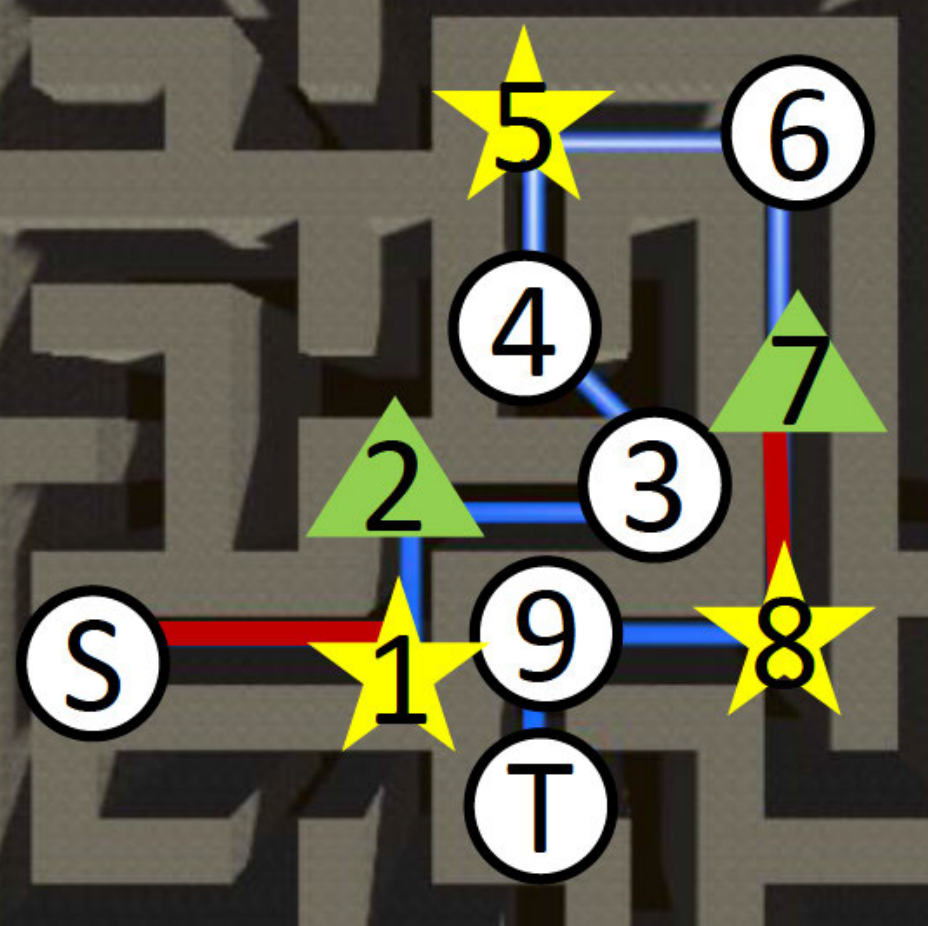}
    \caption{A v-path in the maze environment.}
    \label{fig:user_vmap}
  \end{subfigure}
  \begin{subfigure}[b]{0.36\columnwidth}
    \centering
    \includegraphics[width=0.7\columnwidth]{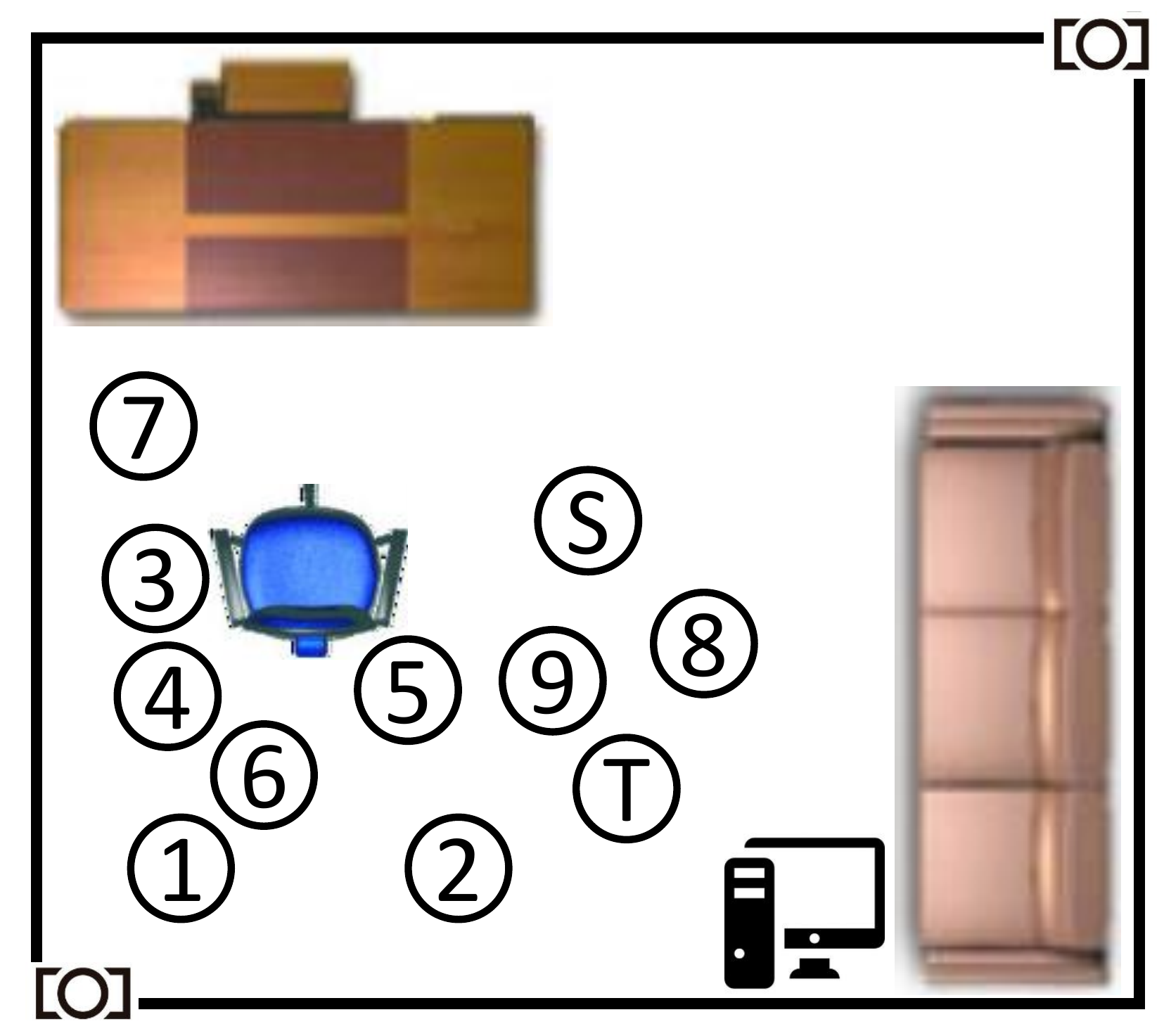}
    \caption{A p-path.}
    \label{fig:user_pmap}
  \end{subfigure}
  \begin{subfigure}[b]{0.49\columnwidth}
    \includegraphics[width=\columnwidth, height=2.35cm]{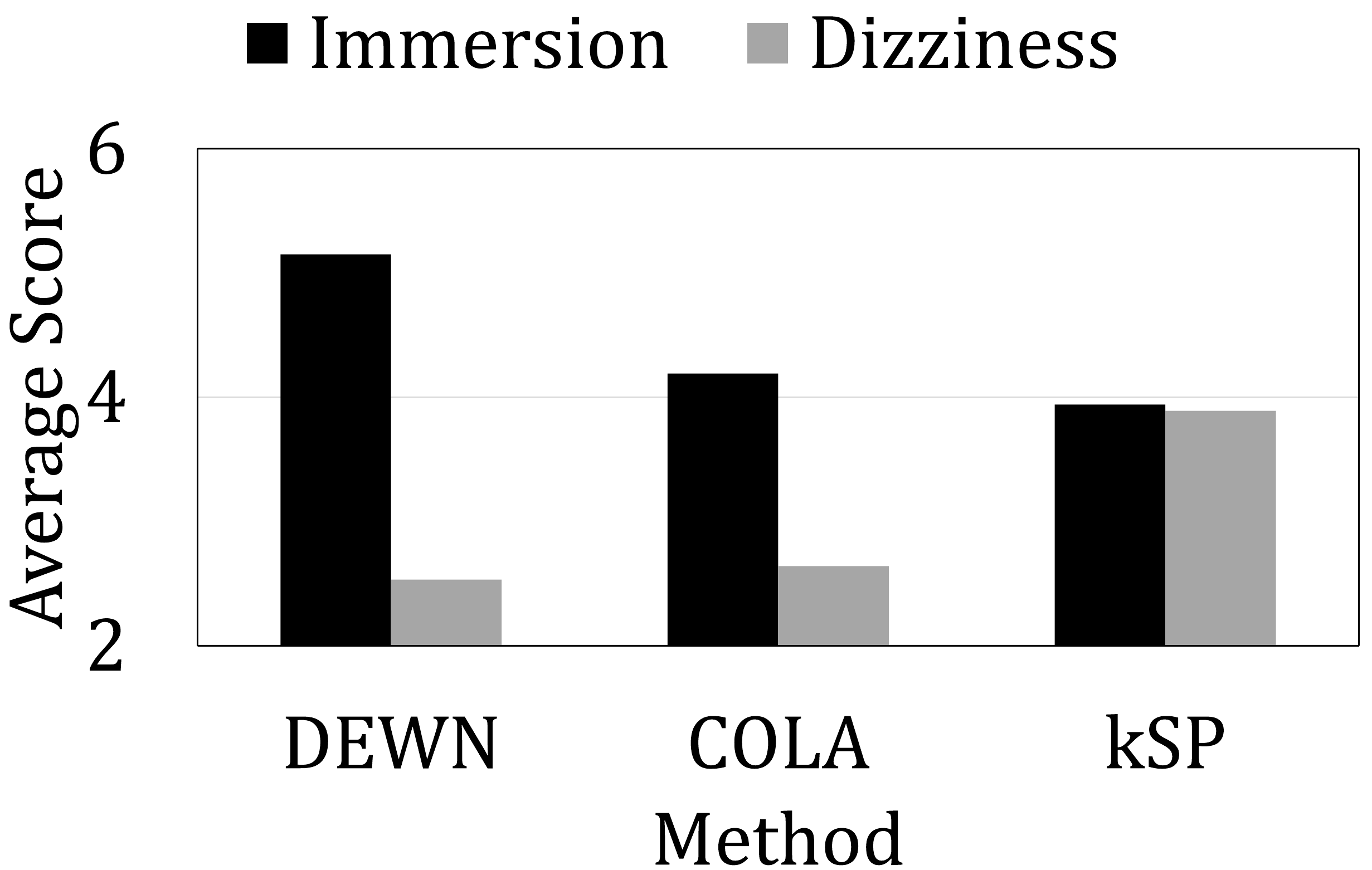}
    \caption{Effects on diff. methods.}
    \label{fig:user_method}
  \end{subfigure}
  \begin{subfigure}[b]{0.49\columnwidth}
    \includegraphics[width=\columnwidth, height=2.35cm]{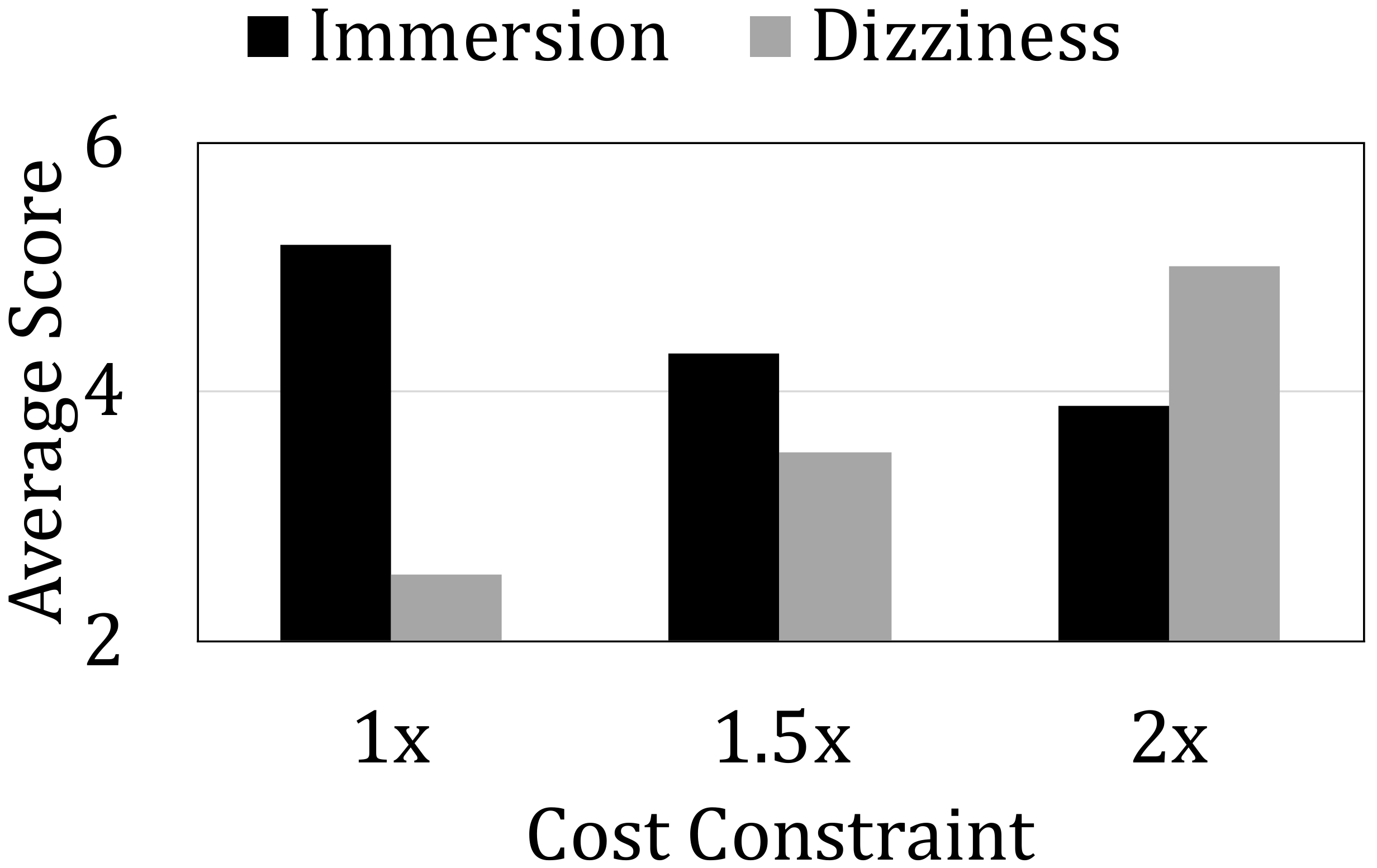}
    \caption{Effects on $C$.}
    \label{fig:user_cost}
  \end{subfigure}\\
\caption{Results of user study.}
\label{user_study}
\end{figure}
\sloppy Figure \ref{fig:user_method} compares the average immersion scores of DEWN, COLA, and kSP. DEWN outperforms kSP and COLA because it allows the users to follow the path with fewer and smoother RW operations. Figure \ref{fig:user_cost} presents the average immersion scores of DEWN with different RW cost constraints. As $C$ grows, the immersion slightly decreases with dizziness growing. However, according to user feedback, 93.5\% of the users are upset with kSP because it involves many Reset operations, and users almost bump into physical obstacles before RW operations are performed because the physical layout is not processed during the path search. All user recognizes that the RW operations in DEWN are much smoother for them.

\section{Conclusion} \label{sec:conclusions}
To the best of our knowledge, there exists no prior research that fully supports simultaneous movements in dual worlds for immersive user experience in VR. In this paper, we leverage Redirected Walking (RW) to formulate DROP, aiming to find the shortest-distance path in the virtual world, while constraining the RW cost to ensure immersive experience. Based on the idea of MIL Range, we design DEWN and propose various ordering and pruning strategies for efficient processing of DROP. Afterward, we show that the existing spatial query algorithms and index structures can leverage DEWN as a building block to support $k$NN and range queries in the dual worlds of VR. Experimental results and a user study manifest that DEWN can effectively find shorter v-paths with smoother RW operations compared with the baselines in various VR scenarios.

\begin{acks}
This work is supported in part by MOST in Taiwan through grant 107-2221-E-001-011-MY3.
\end{acks}

\bibliographystyle{plain}
\bibliography{ref}

\begin{thebibliography}{10}

\bibitem{kNN16}
Tenindra Abeywickrama, Muhammad~Aamir Cheema, and David Taniar.
\newblock k-nearest neighbors on road networks: {A} journey in experimentation
  and in-memory implementation.
\newblock {\em {PVLDB}}, 9(6):492--503, 2016.

\bibitem{TA15}
Takuya Akiba, Takanori Hayashi, Nozomi Nori, Yoichi Iwata, and Yuichi Yoshida.
\newblock Efficient top-k shortest-path distance queries on large networks by
  pruned landmark labeling.
\newblock In {\em {AAAI}}, pages 2--8, 2015.

\bibitem{TA13}
Takuya Akiba, Yoichi Iwata, and Yuichi Yoshida.
\newblock Fast exact shortest-path distance queries on large networks by pruned
  landmark labeling.
\newblock In {\em {ACM} {SIGMOD}}, pages 349--360, 2013.

\bibitem{AA17}
Anika Anwar and Tanzima Hashem.
\newblock Optimal obstructed sequenced route queries in spatial databases.
\newblock In {\em {EDBT}}, pages 522--525. OpenProceedings.org, 2017.

\bibitem{MA16}
Mahdi Azmandian, Mark Hancock, Hrvoje Benko, Eyal Ofek, and Andrew~D. Wilson.
\newblock Haptic retargeting: Dynamic repurposing of passive haptics for
  enhanced virtual reality experiences.
\newblock In {\em {CHI}}, pages 1968--1979. {ACM}, 2016.

\bibitem{EB16}
Evren Bozgeyikli, Andrew Raij, Srinivas Katkoori, and Rajiv~V. Dubey.
\newblock Point {\&} teleport locomotion technique for virtual reality.
\newblock In {\em {CHI} {PLAY}}, pages 205--216. {ACM}, 2016.

\bibitem{HC17}
Haiwei Chen and Henry Fuchs.
\newblock Supporting free walking in a large virtual environment: imperceptible
  redirected walking with an immersive distractor.
\newblock In {\em {CGI}}, pages 22:1--22:6. {ACM}, 2017.

\bibitem{MTree}
Paolo Ciaccia, Marco Patella, and Pavel Zezula.
\newblock M-tree: An efficient access method for similarity search in metric
  spaces.
\newblock In {\em {VLDB}}, pages 426--435. Morgan Kaufmann, 1997.

\bibitem{PacMan16}
Daniel Clarke, Graham McGregor, Brianna Rubin, Jonathan Stanford, and
  T.~C.~Nicholas Graham.
\newblock Arcaid: Addressing situation awareness and simulator sickness in a
  virtual reality pac-man game.
\newblock In {\em {CHI} {PLAY} (Companion)}, pages 39--45. {ACM}, 2016.

\bibitem{DC09}
Daniel~C. Cliburn, Stacy Rilea, David Parsons, Prakash Surya, and Jessica
  Semler.
\newblock The effects of teleportation on recollection of the structure of a
  virtual world.
\newblock In {\em EGVE/ICAT/EuroVR}, pages 117--120. Eurographics Association,
  2009.

\bibitem{VGBook}
Mark de~Berg, Otfried Cheong, Marc~J. van Kreveld, and Mark~H. Overmars.
\newblock {\em Computational geometry: algorithms and applications, 3rd
  Edition}.
\newblock Springer, 2008.

\bibitem{RD85}
Rina Dechter and Judea Pearl.
\newblock Generalized best-first search strategies and the optimality of {A*}.
\newblock {\em J. {ACM}}, 32(3):505--536, 1985.

\bibitem{FE02}
Funda Erg{\"{u}}n, Rakesh~K. Sinha, and Lisa Zhang.
\newblock An improved {FPTAS} for restricted shortest path.
\newblock {\em Inf. Process. Lett.}, 83(5):287--291, 2002.

\bibitem{Adapt11}
Marcus Goetz and Alexander Zipf.
\newblock Formal definition of a user-adaptive and length-optimal routing graph
  for complex indoor environments.
\newblock {\em Geo-spatial Information Science}, pages 119--128, 2011.

\bibitem{RTree}
Antonin Guttman.
\newblock R-trees: {A} dynamic index structure for spatial searching.
\newblock In {\em {SIGMOD} Conference}, pages 47--57. {ACM} Press, 1984.

\bibitem{MSH16}
Mohamed~S. Hassan, Walid~G. Aref, and Ahmed~M. Aly.
\newblock Graph indexing for shortest-path finding over dynamic sub-graphs.
\newblock In {\em {SIGMOD} Conference}, pages 1183--1197. {ACM}, 2016.

\bibitem{EH14}
Eric Hodgson, Eric~R. Bachmann, and Tyler Thrash.
\newblock Performance of redirected walking algorithms in a constrained virtual
  world.
\newblock {\em {IEEE} Trans. Vis. Comput. Graph.}, 20(4):579--587, 2014.

\bibitem{HH04}
Han{-}Pang Huang and Shu~Yun Chung.
\newblock Dynamic visibility graph for path planning.
\newblock In {\em {IROS}}, pages 2813--2818. {IEEE}, 2004.

\bibitem{AJ01}
Alp{\'{a}}r J{\"{u}}ttner, Bal{\'{a}}zs Szviatovszki, Ildik{\'{o}} M{\'{e}}cs,
  and Zsolt Rajk{\'{o}}.
\newblock Lagrange relaxation based method for the {QoS} routing problem.
\newblock In {\em {INFOCOM}}, pages 859--868. {IEEE}, 2001.

\bibitem{KM10}
Marcelo Kallmann.
\newblock Shortest paths with arbitrary clearance from navigation meshes.
\newblock In {\em Symposium on Computer Animation}, pages 159--168.
  Eurographics Association, 2010.

\bibitem{SK16}
Sanket Khanwalkar, Shonali Balakrishna, and Ramesh Jain.
\newblock Exploration of large image corpuses in virtual reality.
\newblock In {\em {ACM} Multimedia}, pages 596--600. {ACM}, 2016.

\bibitem{AB16}
Sujeong Kim, Aniket Bera, Andrew Best, Rohan Chabra, and Dinesh Manocha.
\newblock Interactive and adaptive data-driven crowd simulation.
\newblock In {\em {VR}}, pages 29--38. {IEEE} Computer Society, 2016.

\bibitem{MK17}
Manikanta Kotaru and Sachin Katti.
\newblock Position tracking for virtual reality using commodity wifi.
\newblock In {\em {CVPR}}, pages 2671--2681. {IEEE} Computer Society, 2017.

\bibitem{RO16}
Andreas Kunz, Markus Zank, Morten Fjeld, and Thomas Nescher.
\newblock Real walking in virtual environments for factory planning and
  evaluation.
\newblock {\em Procedia Cirp}, 44:257--262, 2016.

\bibitem{LE17VR}
Eike Langbehn, Paul Lubos, Gerd Bruder, and Frank Steinicke.
\newblock Application of redirected walking in room-scale {VR}.
\newblock In {\em {VR}}, pages 449--450. {IEEE} Computer Society, 2017.

\bibitem{LE17}
Eike Langbehn, Paul Lubos, Gerd Bruder, and Frank Steinicke.
\newblock Bending the curve: Sensitivity to bending of curved paths and
  application in room-scale {VR}.
\newblock {\em {IEEE} Trans. Vis. Comput. Graph.}, 23(4):1389--1398, 2017.

\bibitem{LE18}
Eike Langbehn, Paul Lubos, and Frank Steinicke.
\newblock Evaluation of locomotion techniques for room-scale {VR:} joystick,
  teleportation, and redirected walking.
\newblock In {\em {VRIC}}, pages 4:1--4:9. {ACM}, 2018.

\bibitem{ROAD}
Ken C.~K. Lee, Wang{-}Chien Lee, Baihua Zheng, and Yuan Tian.
\newblock {ROAD:} {A} new spatial object search framework for road networks.
\newblock {\em {IEEE} Trans. Knowl. Data Eng.}, 24(3):547--560, 2012.

\bibitem{YL14}
Ying Lu, Jiaheng Lu, Gao Cong, Wei Wu, and Cyrus Shahabi.
\newblock Efficient algorithms and cost models for reverse spatial-keyword
  \emph{k}-nearest neighbor search.
\newblock {\em {ACM} Trans. Database Syst.}, 39(2):13:1--13:46, 2014.

\bibitem{LP18}
Pietro Lungaro, Rickard Sj{\"{o}}berg, Alfredo~Fanghella Valero, Ashutosh
  Mittal, and Konrad Tollmar.
\newblock Gaze-aware streaming solutions for the next generation of mobile {VR}
  experiences.
\newblock {\em {IEEE} Trans. Vis. Comput. Graph.}, 24(4):1535--1544, 2018.

\bibitem{Knapsack90}
Silvano Martello.
\newblock {\em Knapsack problems: algorithms and computer implementations}.
\newblock John Wiley \& Sons Ltd., 1990.

\bibitem{MS18}
Sebastian Marwecki, Maximilian Brehm, Lukas Wagner, Lung{-}Pan Cheng,
  Florian~'Floyd' Mueller, and Patrick Baudisch.
\newblock Virtualspace - overloading physical space with multiple virtual
  reality users.
\newblock In {\em {CHI}}, page 241. {ACM}, 2018.

\bibitem{EM04}
Ellips Masehian and M.~R. Amin{-}Naseri.
\newblock A voronoi diagram-visibility graph-potential field compound algorithm
  for robot path planning.
\newblock {\em J. Field Robotics}, 21(6):275--300, 2004.

\bibitem{MM18}
Marcell Missura, Daniel~D. Lee, and Maren Bennewitz.
\newblock Minimal construct: Efficient shortest path finding for mobile robots
  in polygonal maps.
\newblock In {\em {IROS}}, pages 7918--7923. {IEEE}, 2018.

\bibitem{CN12}
Christian~T. Neth, Jan~L. Souman, David Engel, Uwe Kloos, Heinrich~H.
  B{\"{u}}lthoff, and Betty~J. Mohler.
\newblock Velocity-dependent dynamic curvature gain for redirected walking.
\newblock {\em {IEEE} Trans. Vis. Comput. Graph.}, 18(7):1041--1052, 2012.

\bibitem{RWsurvey}
Niels~Christian Nilsson, Tabitha~C. Peck, Gerd Bruder, Eri Hodgson, Stefania
  Serafin, Mary~C. Whitton, Frank Steinicke, and Evan~Suma Rosenberg.
\newblock 15 years of research on redirected walking in immersive virtual
  environments.
\newblock {\em {IEEE} Computer Graphics and Applications}, 38(2):44--56, 2018.

\bibitem{FM16}
Niels~Christian Nilsson, Evan~A. Suma, Rolf Nordahl, Mark~T. Bolas, and
  Stefania Serafin.
\newblock Estimation of detection thresholds for audiovisual rotation gains.
\newblock In {\em {VR}}, pages 241--242. {IEEE} Computer Society, 2016.

\bibitem{YMN15}
Masaaki Nishino, Norihito Yasuda, Shin{-}ichi Minato, and Masaaki Nagata.
\newblock Bdd-constrained search: {A} unified approach to constrained shortest
  path problems.
\newblock In {\em {AAAI}}, pages 1219--1225. {AAAI} Press, 2015.

\bibitem{PD03}
Dimitris Papadias, Jun Zhang, Nikos Mamoulis, and Yufei Tao.
\newblock Query processing in spatial network databases.
\newblock In {\em {VLDB}}, pages 802--813. Morgan Kaufmann, 2003.

\bibitem{PM09}
Michalis Potamias, Francesco Bonchi, Carlos Castillo, and Aristides Gionis.
\newblock Fast shortest path distance estimation in large networks.
\newblock In {\em {CIKM}}, pages 867--876. {ACM}, 2009.

\bibitem{MP16}
Mores Prachyabrued and Christoph~W. Borst.
\newblock Design and evaluation of visual interpenetration cues in virtual
  grasping.
\newblock {\em {IEEE} Trans. Vis. Comput. Graph.}, 22(6):1718--1731, 2016.

\bibitem{MQ14}
Miao Qiao, Hong Cheng, Lijun Chang, and Jeffrey~Xu Yu.
\newblock Approximate shortest distance computing: {A} query-dependent local
  landmark scheme.
\newblock {\em {IEEE} Trans. Knowl. Data Eng.}, 26(1):55--68, 2014.

\bibitem{SR01}
Sharif Razzaque, David Swapp, Mel Slater, Mary~C. Whitton, and Anthony Steed.
\newblock Redirected walking in place.
\newblock In {\em {EGVE}}, pages 123--130. Eurographics Association, 2002.

\bibitem{MR18}
Michael Rietzler, Jan Gugenheimer, Teresa Hirzle, Martin Deubzer, Eike
  Langbehn, and Enrico Rukzio.
\newblock Rethinking redirected walking: On the use of curvature gains beyond
  perceptual limitations and revisiting bending gains.
\newblock In {\em {ISMAR}}, pages 115--122. {IEEE}, 2018.

\bibitem{spatial}
Philippe Rigaux, Michel Scholl, and Agnes Voisard.
\newblock {\em Spatial databases: with application to GIS}.
\newblock Elsevier, 2001.

\bibitem{RR11}
Roy~A. Ruddle, Ekaterina~P. Volkova, and Heinrich~H. B{\"{u}}lthoff.
\newblock Walking improves your cognitive map in environments that are
  large-scale and large in extent.
\newblock {\em {ACM} Trans. Comput.-Hum. Interact.}, 18(2):10:1--10:20, 2011.

\bibitem{PS18}
Patric Schmitz, Julian Hildebrandt, Andr{\'{e}}~Calero Valdez, Leif Kobbelt,
  and Martina Ziefle.
\newblock You spin my head right round: Threshold of limited immersion for
  rotation gains in redirected walking.
\newblock {\em {IEEE} Trans. Vis. Comput. Graph.}, 24(4):1623--1632, 2018.

\bibitem{SS13}
Stefania Serafin, Niels~C. Nilsson, Erik Sikstr{\"{o}}m, Amalia
  de~G{\"{o}}tzen, and Rolf Nordahl.
\newblock Estimation of detection thresholds for acoustic based redirected
  walking techniques.
\newblock In {\em {VR}}, pages 161--162. {IEEE} Computer Society, 2013.

\bibitem{JS16}
Jongkyu Shin, Gwangseok An, Joon{-}Sang Park, Seung~Jun Baek, and Kyogu Lee.
\newblock Application of precise indoor position tracking to immersive virtual
  reality with translational movement support.
\newblock {\em Multimedia Tools Appl.}, 75(20):12331--12350, 2016.

\bibitem{SM16}
Misha Sra.
\newblock Asymmetric design approach and collision avoidance techniques for
  room-scale multiplayer virtual reality.
\newblock In {\em {UIST} (Adjunct Volume)}, pages 29--32. {ACM}, 2016.

\bibitem{FS10}
Frank Steinicke, Gerd Bruder, Jason Jerald, Harald Frenz, and Markus Lappe.
\newblock Estimation of detection thresholds for redirected walking techniques.
\newblock {\em {IEEE} Trans. Vis. Comput. Graph.}, 16(1):17--27, 2010.

\bibitem{NS14}
Nusrat Sultana, Tanzima Hashem, and Lars Kulik.
\newblock Group nearest neighbor queries in the presence of obstacles.
\newblock In {\em {SIGSPATIAL/GIS}}, pages 481--484. {ACM}, 2014.

\bibitem{SQ18}
Qi~Sun, Anjul Patney, Li{-}Yi Wei, Omer Shapira, Jingwan Lu, Paul Asente, Suwen
  Zhu, Morgan McGuire, David Luebke, and Arie~E. Kaufman.
\newblock Towards virtual reality infinite walking: dynamic saccadic
  redirection.
\newblock {\em {ACM} Trans. Graph.}, 37(4):67:1--67:13, 2018.

\bibitem{SY18}
Yaping Sun, Zhiyong Chen, Meixia Tao, and Hui Liu.
\newblock Communication, computing and caching for mobile {VR} delivery:
  Modeling and trade-off.
\newblock In {\em {ICC}}, pages 1--6. {IEEE}, 2018.

\bibitem{ST17}
Sam Tregillus, Majed~Al Zayer, and Eelke Folmer.
\newblock Handsfree omnidirectional {VR} navigation using head tilt.
\newblock In {\em {CHI}}, pages 4063--4068. {ACM}, 2017.

\bibitem{MU99}
Martin Usoh, Kevin Arthur, Mary~C. Whitton, Rui Bastos, Anthony Steed, Mel
  Slater, and Frederick P.~Brooks Jr.
\newblock Walking {\textgreater} walking-in-place {\textgreater} flying, in
  virtual environments.
\newblock In {\em {SIGGRAPH}}, pages 359--364. {ACM}, 1999.

\bibitem{SW16}
Sibo Wang, Xiaokui Xiao, Yin Yang, and Wenqing Lin.
\newblock Effective indexing for approximate constrained shortest path queries
  on large road networks.
\newblock {\em {PVLDB}}, 10(2):61--72, 2016.

\bibitem{BW07}
Betsy Williams, Gayathri Narasimham, Bj{\"{o}}rn Rump, Timothy~P. McNamara,
  Thomas~H. Carr, John~J. Rieser, and Bobby Bodenheimer.
\newblock Exploring large virtual environments with an {HMD} when physical
  space is limited.
\newblock In {\em {APGV}}, volume 253 of {\em {ACM} International Conference
  Proceeding Series}, pages 41--48. {ACM}, 2007.

\bibitem{WG18}
Graham~A. Wilson, Mark McGill, Matthew Jamieson, Julie~R. Williamson, and
  Stephen~A. Brewster.
\newblock Object manipulation in virtual reality under increasing levels of
  translational gain.
\newblock In {\em {CHI}}, page~99. {ACM}, 2018.

\bibitem{JZ04}
Jun Zhang, Dimitris Papadias, Kyriakos Mouratidis, and Manli Zhu.
\newblock Spatial queries in the presence of obstacles.
\newblock In {\em {EDBT}}, volume 2992 of {\em Lecture Notes in Computer
  Science}, pages 366--384. Springer, 2004.

\bibitem{ZY18}
Yang Zhang, Chouchang~(Jack) Yang, Scott~E. Hudson, Chris Harrison, and
  Alanson~P. Sample.
\newblock Wall++: Room-scale interactive and context-aware sensing.
\newblock In {\em {CHI}}, page 273. {ACM}, 2018.

\bibitem{HZ16}
Huaijie Zhu, Xiaochun Yang, Bin Wang, and Wang{-}Chien Lee.
\newblock Range-based obstructed nearest neighbor queries.
\newblock In {\em {SIGMOD} Conference}, pages 2053--2068. {ACM}, 2016.

\end{thebibliography}

\appendix
\section{Definitions of RW operations} \label{appen:rw_def}

Without loss of generality, when a user moves in the virtual world, it is assumed that the user first changes her orientation and then walks a straight step afterward.  Given the walking length $\ell$ of a user, the transition function from the previous loco-state $st_{n-1} = ((\gamma^{\text{v}}_{n-1}, \theta^{\text{v}}_{n-1}), (\gamma^{\text{p}}_{n-1}, \theta^{\text{p}}_{n-1}))$ to the next loco-state $st_n = ((\gamma^{\text{v}}_n, \theta^{\text{v}}_n), (\gamma^{\text{p}}_n, \theta^{\text{p}}_n))$ after walking a step is calculated as follows:
$$ \theta^{\text{p}}_{n} = \theta^{\text{p}}_{n-1} + \triangle \theta $$
$$ \theta^{\text{v}}_{n} = \theta^{\text{v}}_{n-1} + \triangle \theta $$
$$ \gamma^{\text{p}}_{n} = \gamma^{\text{p}}_{n-1} + \ell \times [\cos(\theta^{\text{p}}_{n}), \sin(\theta^{\text{p}}_{n})]^T$$
$$ \gamma^{\text{v}}_{n} = \gamma^{\text{v}}_{n-1} + \ell \times [\cos(\theta^{\text{v}}_n), \sin(\theta^{\text{v}}_n)]^T$$
where $\triangle \theta$ is the orientation difference between two loco-states.

\begin{definition}
\textit{Translation Gain} ($m_\text{T}$). When a Translation Operation (\textit{Translation}) is applied into an HMD, the change in the transition function is
$$ \gamma^{\text{v}}_{n} = \gamma^{\text{v}}_{n-1} + \ell \times m_\text{T} \times [\cos(\theta^{\text{v}}_{n}), \sin(\theta^{\text{v}}_{n})]^{\intercal}.$$
\end{definition}
In other words, when a user walks an $\ell$-length step in the physical world, she walks an $(\ell \times m_\text{T})$-length step in the virtual world, and $m_\text{T}$ is the \textit{translation gain}.

\begin{definition}
\textit{Rotation Gain} ($m_\text{R}$). The Rotation Operation (RO) manipulates the rotation speed in a VE when a user is turning into another direction, so that the rotation speeds in two worlds can be slightly different. The virtual orientation of the transition function becomes 
$$ \theta^{\text{v}}_{n} = \theta^{\text{v}}_{n-1} + m_\text{R} \times  \triangle \theta $$
where $m_\text{R}$ is the \textit{rotation gain} applied to an HMD. 
\end{definition}

\begin{definition}
\textit{Curvature Gain} ($m_\text{C}$). When a user walks straight in the virtual world, the Curvature Operation (CO) allows her to walk along a curve in the physical world. The physical orientation and position of the transition function are changed as follows.
$$ \theta^{\text{p}}_{n} = \theta^{\text{p}}_{n-1} + m_\text{C} \times \ell $$
$$ \gamma^{\text{p}}_{n} = \gamma^{\text{p}}_{n-1} + \frac{1}{m_\text{C}} \times \begin{bmatrix}
\sin(\theta^{\text{p}}_{n} + m_\text{C} \times \ell)-\sin(\theta^{\text{p}}_{n})\\ 
\cos(\theta^{\text{p}}_{n})-\cos(\theta^{\text{p}}_{n}+m_\text{C} \times \ell),
\end{bmatrix}$$
where $m_\text{C}$ is the \textit{curvature gain} applied to an HMD.
\end{definition}

\begin{definition}
\textit{Reset turning angle} ($\theta_{\text{Reset}}.$) Sometimes the \textit{Reset operation} (Reset) is required to explicitly ask the user to turn in a different direction in the physical world (but remains in the same direction in the VR world) in order to avoid the physical walls or obstacles. It is expected that Reset usually incurs a higher cost since it interrupts the user experience in the VR world. Given the \textit{Reset turning angle} $\theta_{\text{Reset}}$, a user is asked to turn $\theta_{\text{Reset}}$ in her current physical position, but the image display in the VR world is suspended during the turning. Therefore,
$$ \theta^{\text{p}}_{n} = \theta^{\text{p}}_{n-1} + \theta_{\text{Reset}}. $$
\end{definition}

While the above sets of transition functions represent the basic RW operations, one can define other transition functions to abstract other implementations of VR locomotion techniques, e.g., teleportation, where the user determines the next v-state, and the p-state remains unchanged. Thus, the notion of transition functions is general.

\section{Cost Model Approaches} \label{appen:rw_cost_model}

Here we briefly discuss several possible approaches to setup the cost model for RW operations. This cost model can be viewed as a cost function $C_{\text{RW}}(op, z)$ that takes both the type of the RW operation $op$ and the usage magnitude $z$ and maps to a positive RW cost $C_{\text{RW}}(op, z)$, where $z$ is $m_\text{T}$ for \textit{Translation}, $m_\text{R}$ for RO, $m_\text{C}$ for CO, and $\theta_{\text{Reset}}$ for Reset.

\begin{itemize}
    \item \textbf{Usage Count.} Using any RW operation $op$ incurs an RW cost of 1 unit, regardless of the magnitude $z$ and type of $op$. Thus, $C_{\text{RW}}(op, z) = 1$ for all $op$ and $z$.
    \item \textbf{Detection Likelihood.} When an RW operation $op$ is applied with a specific magnitude $z$, it incurs an RW cost proportional to the likelihood that it is detected by an average user. For example, according to \cite{SS13}, a \textit{Translation} with $m_\text{T} = 0.6$, i.e., down-scaling the walking distance by 40\%, has a roughly 90\% chance to be detected by the users. Thus, $C_{\text{RW}}(\text{\textit{Translation}}, 0.6) = 0.9$.
    \item \textbf{Detection Threshold.} When an RW operation $op$ is applied with a specific magnitude $z$, if $z$ is in the non-detectable region, e.g., $z = m_\text{R} \in (0.77, 1.10)$ for no-audio RO in \cite{FM16},  $C_{\text{RW}}(op, z) = 0$, indicating the user does not feel the modification, and $C_{\text{RW}}(op, z) = 1$ for all other values of $z$.
    \item \textbf{Other Threshold.} The detection thresholds used above can be changed to any other variations of threshold of RW operations, e.g., \textit{perception}, \textit{applicability}, or \textit{immersion} thresholds in \cite{MR18}.
    \item \textbf{Reset Cost.} Since Reset directly interrupts the user experience, it is not meaningful to quantify the detection-based cost for it. Instead, a possible approach is to set $C_{\text{RW}}(\text{Reset}, z) = c_{\text{Reset}}$ for some constant cost $c_{\text{Reset}}$ for all $z \neq 0$ to represent the inconvenience and degradation of immersion experienced by the user. Another possibility is to consider the angle that the user is asked to rotate in Reset, e.g., $C_{\text{RW}}(\text{Reset}, z) = c_{\text{Reset}} \cdot \frac{|z|}{180}$, where $z \in (-180,180]$ is the reset turning angle.
\end{itemize}

Note that in any cost model, $C_{\text{RW}}(op, z) = 0$ for $z = m_\text{T} = 1$ for \textit{Translation}, $z = m_\text{R} = 1$ for RO, $z = m_\text{C} = 0$ for CO, and $z = \theta_{\text{Reset}} = 0$ for Reset, since these values corresponds to no RW operations, i.e., the movements in the dual worlds are aligned. 

For walking operations, i.e., \textit{Translation} and CO, it is also applicable to further weight the RW cost by the total walking distance that the user is under the given RW operation. In other words, given $op$ and $z$, the total RW cost of the user walks under $op$ for a distance of $\ell$ is $C_{\text{RW}}(op, z) \cdot \ell$. RO and Reset only affects the turning movement and do not need to be weighted. 

\section{Experimental Results on Cost Model} \label{appen:exp_cost}

\begin{figure}[t]
  \begin{subfigure}[b]{0.48\columnwidth}
    \includegraphics[width=\columnwidth]{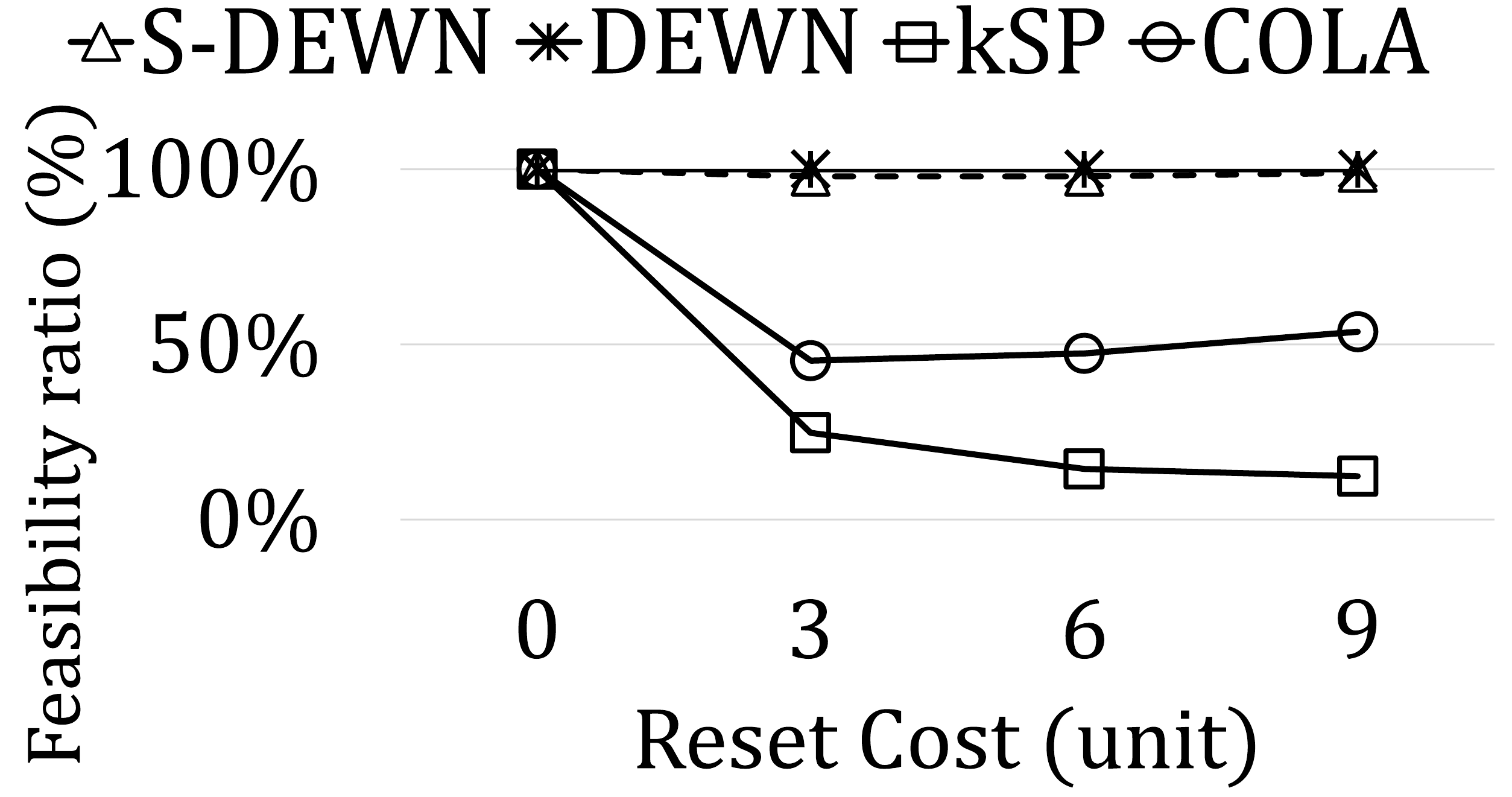}
    \caption{Feasibility ratio.}
    \label{fig:Reset_fea}
  \end{subfigure}
  \hfill 
  \begin{subfigure}[b]{0.48\columnwidth}
    \includegraphics[width=\columnwidth]{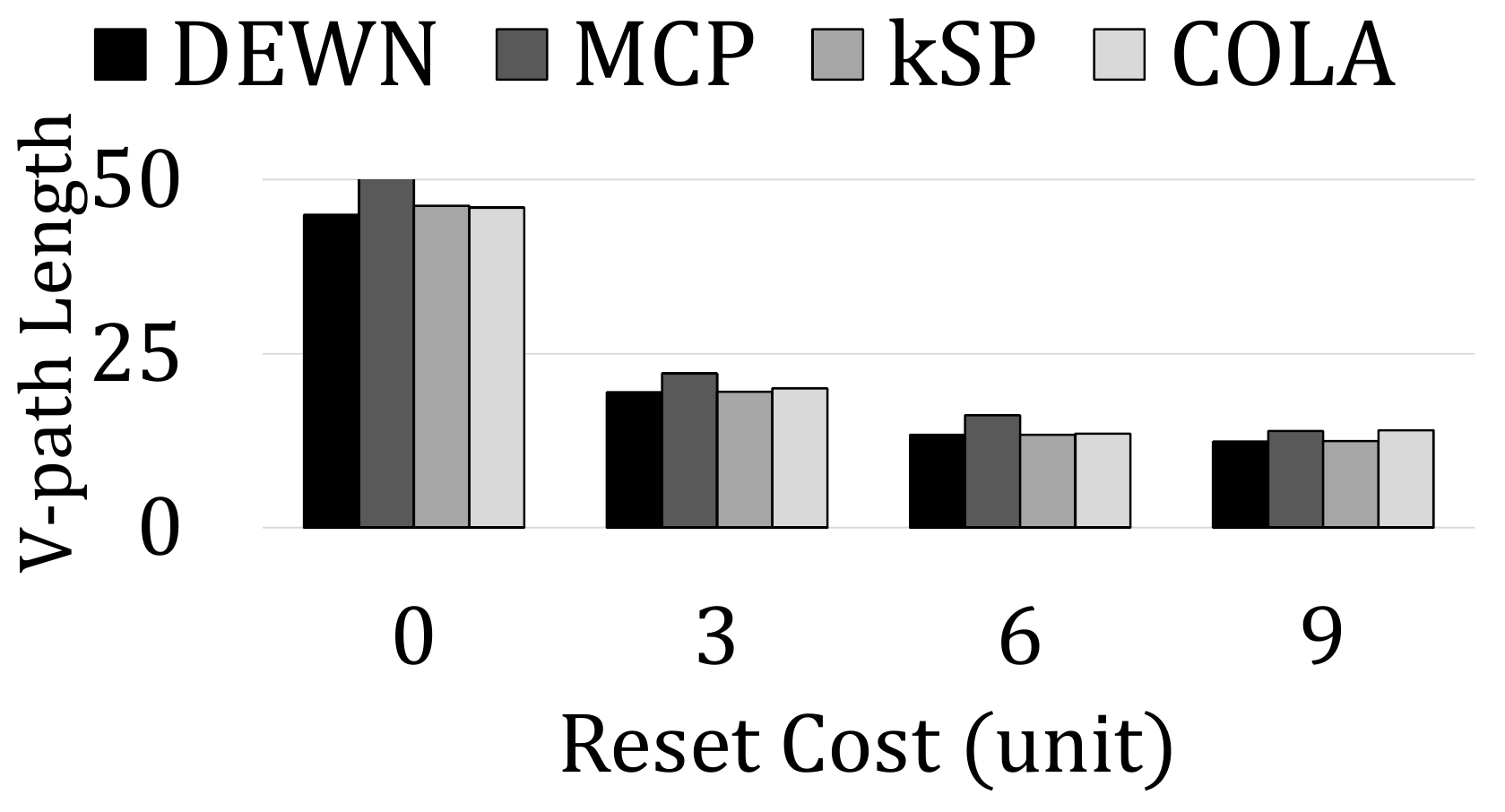}
    \caption{V-path length.}
    \label{fig:Reset_vpath}
  \end{subfigure}
  \caption{Experimental results on diff. $c_{\text{Reset}}$.}
  \label{fig:reset}
\end{figure}

Figure \ref{fig:reset} reports the experimental results for the detection threshold cost model with different values of $c_{\text{Reset}}$, where the query distances (distance between start and destination locations) are randomly distributed from 0 to 100, and the cost constraint is set to 10. The feasibility ratios are shown in Figure \ref{fig:Reset_fea}. MCP and DP share similar results (i.e., 100\% feasibility) with DEWN and thus are not shown here. All methods are 100\% feasible when $c_{\text{Reset}} = 0$, i.e., Reset is free, as it becomes feasible to abuse Reset to steer the user away from obstacles and boundaries in the physical world. As $c_{\text{Reset}}$ increases, the feasibility ratio of kSP is significantly affected as it relies on Reset as its only way to align the dual worlds. The feasibility ratio of COLA also significantly decreases as $c_{\text{Reset}}$ becomes nonzero. However, as $c_{\text{Reset}}$ grows large, Reset is less likely to be used in the RW operation configuration corresponding to the MIL values. Thus, the MIL Ranges are narrower, and the MIL upper bound values become more accurate in reflecting the total MIL values. Therefore, the feasibility ratio of COLA slightly improves as $c_{\text{Reset}}$ grows.

Figure \ref{fig:Reset_vpath} reports the average \textit{feasible} v-path lengths. All queries are feasible when $c_{\text{Reset}} = 0$. As $c_{\text{Reset}}$ grows, DROP queries with longer query distances become infeasible as Reset becomes costly. However, the feasibility of \textit{query instances} becomes stable after $c_{\text{Reset}}$ becomes sufficiently large (around $c_{\text{Reset}} = 6$). Since the feasible solutions do not rely on Reset when $c_{\text{Reset}}$ is large, continuing to increase $c_{\text{Reset}}$ does not affect the instance feasibility.

\end{document}